\documentclass[runningheads,a4paper]{llncs}

\usepackage{amsmath}
\usepackage{amssymb}
\usepackage[dvipsnames]{xcolor}
\usepackage{graphicx}
\usepackage{semantic}
\usepackage{mathpartir}
\usepackage[normalem]{ulem}
\usepackage{stmaryrd}
\usepackage{enumitem}
\usepackage{booktabs}

\usepackage{tikz}
\usetikzlibrary{arrows.meta,positioning,intersections,shapes.geometric,shapes.multipart,calc}

\usepackage{listings,listings-rust}
\usepackage[hidelinks]{hyperref}
\usepackage{cleveref}
\definecolor{Cerulean}{rgb}{0.20,0.31,0.0.90}
\definecolor{RedOrange}{rgb}{0.70,0.59,0.0.20}
\definecolor{ForestGreen}{rgb}{0.1,0.60,0.2}
\definecolor{SomeRed}{rgb}{0.80,0.3,0.0.20}
\definecolor{lightGray}{rgb}{0.85,0.85,0.86} 
\definecolor{lGray}{rgb}{0.31,0.31,0.33} 
\definecolor{lYellow}{rgb}{0.99,0.78,0.07} 
\definecolor{lila}{rgb}{0.7,0.30,0.77} 
\definecolor{cloudyblue}{rgb}{206,221,246}

\newcommand{\src}[1]{\color{RoyalBlue}{\mathsf{#1}}}
\newcommand{\trg}[1]{\color{RedOrange}{\mathbf{#1}}}

\definecolor{RedOrange}{RGB}{178,34,34}
\usepackage{ifthen}

\newboolean{showcomments}
\setboolean{showcomments}{false}

\makeatletter
\newcommand{\mynote}[3]{%
  \ifthenelse{\boolean{showcomments}}{%
   \fbox{\bfseries\sffamily\scriptsize#1}%
   {\small\textsf{\emph{\color{#3}{#2}}}}}%
  {%
   \@bsphack
   \@esphack
  }%
}
\makeatother

\definecolor{asparagus}{rgb}{0.53, 0.66, 0.42}

\newcommand{\out}[1]{}

\setboolean{showcomments}{false}

\usepackage{xspace}
\usepackage[dvipsnames]{xcolor}

\newcommand{\itree}{\text{ITree}\xspace}
\newcommand{\itrees}{\text{ITrees}\xspace}
\newcommand{\srcE}{\ensuremath{{\src E}}}
\newcommand{\trgE}{\ensuremath{{\trg E}}}

\newcommand{\srcRetv}[1]{\ensuremath{{\src {#1}}}}
\newcommand{\trgRetv}[1]{\ensuremath{{\trg {#1}}}}
\newcommand{\handler}[1]{\ensuremath{H_{\mathsf{#1}}}}
\newcommand{\srcH}{\ensuremath{{\src H}}}
\newcommand{\trgH}{\ensuremath{{\trg H}}}

\newcommand{\SIM}{\ensuremath{\mathsf{SIM}}}
\newcommand{\simIMPASM}{\ensuremath{\mathsf{Sim}}}

\newcommand{\simFunctor}{\ensuremath{\mathcal{F}_R}}

\newcommand{\iProp}{\mathit{iProp}}
\newcommand{\updateMod}{\mid \! \Rrightarrow}

\newcommand{\wpIMP}[2]{\ensuremath{\mathsf{WP}_{\mathsf{IMP}}(#1,#2)}}
\newcommand{\srcT}{{\src T}}
\newcommand{\trgT}{{\trg T}}
\newcommand{\srcGamma}{{\src \Gamma}}
\newcommand{\trgGamma}{{\trg \Gamma}}
\newcommand{\srcGint}{{\src \Gamma}_{\mathit{int}}}
\newcommand{\wpIMPx}[3]{\ensuremath{\mathsf{WP}^{#1}_{\mathsf{IMP}}(#2,#3)}}
\newcommand{\denoteIMP}[1]{\ensuremath{{\color{RoyalBlue}\mathcal{S}}(#1)}}

\newcommand{\denoteASM}[1]{\ensuremath{{\color{RedOrange}\mathcal{T}}(#1)}}

\newcommand{\compProp}[1]{\llparenthesis #1 \rrparenthesis}

\newcommand*{\wand}[0]{-\!\!*}

\newcommand{\itreeER}[2]{\ensuremath{\texttt{itree}\;{#1}\;{#2}}}
\newcommand{\Ret}[1]{\ensuremath{\texttt{Ret}(#1)}}
\newcommand{\Tau}[1]{\ensuremath{\texttt{Tau}(#1)}}
\newcommand{\VisA}[3]{\ensuremath{\texttt{Vis}_{#1}(#2, #3)}}
\newcommand{\Vis}[2]{\ensuremath{\texttt{Vis}(#1, #2)}}

\newcommand{\storeE}{\ensuremath{\mathsf{storeE}}}

\newcommand{\ubE}[0]{\ensuremath{\texttt{ubE}}}
\newcommand{\stepE}[0]{\ensuremath{\texttt{stepE}}}

\newcommand{\wpi}[3]{\ensuremath{ \mathrm{wpi}_{#1}\;#2\;\{ #3 \} }}

\newcommand{\Tbeh}[1]{\ensuremath{{\trg {Beh_T}}(#1)}}  
\newcommand{\Sbeh}[1]{\ensuremath{{\src {Beh_S}}(#1)}}  

\newcommand{\cmp}[1]{\ensuremath{\llbracket #1 \rrbracket}}  
\newcommand{\Corr}{\mathsf{Corr}}

\makeatletter
\renewcommand\subsection{\@startsection{subsection}{2}{\z@}%
                       {-12\p@ \@plus -4\p@ \@minus -4\p@}%
                       {6\p@ \@plus 2\p@ \@minus 2\p@}%
                       {\normalfont\normalsize\bfseries\boldmath}}
\makeatother
\makeatletter
\renewcommand\paragraph{\@startsection{paragraph}{4}{\z@}%
        {-4\p@ \@plus -2\p@ \@minus -2\p@}%
        {-0.5em \@plus -0.22em \@minus -0.1em}%
        {\normalfont\normalsize\itshape}}
\makeatother
\AtBeginDocument{%
  \setlength{\abovedisplayskip}{5pt plus 2pt minus 2pt}%
  \setlength{\belowdisplayskip}{5pt plus 2pt minus 2pt}%
  \setlength{\abovedisplayshortskip}{2pt plus 1pt}%
  \setlength{\belowdisplayshortskip}{3pt plus 1pt minus 1pt}}

\makeatletter

\spn@wtheorem{remark}{Remark}{\bfseries}{\rmfamily}
\makeatother

\setboolean{showcomments}{false}

\begin{document}

\title{Preserving Separation Logic Specifications through Compilation}

\author{Carmine Abate\inst{1} \and
  Kleio Liotati\inst{1,2} \and
  Franti\v{s}ek Farka\inst{1} \and
  Sebastian Ertel\inst{1}}
\authorrunning{C. Abate et al.}
\institute{Barkhausen Institut, Dresden, Germany\\
  \email{firstname.lastname@barkhauseninstitut.org}
  \and
  TU Dresden, Germany}

\maketitle
\begin{abstract}
Program specifications in separation logic describe ownership of resources
and serve as logical interfaces for local reasoning.
Compiler correctness, expressed as semantics preservation, does not by itself
supply such an interface for compiled code, because it does not explicitly
relate source and target resources and assertions.

We investigate WP-preserving compilation: when a compiler translates a source
program, the weakest preconditions that the program satisfies in a separation
logic carry over to the compiled program.
Our proof technique consists of a resource-aware simulation over interaction trees:
it relates the resources required by corresponding source and target events, and
tracks ownership through target operations, such as register accesses, that
have no source counterpart.
We apply the approach to a compiler from a structured imperative language to
an assembly-like language, first for closed programs and then extended with
modules and function calls.
In the extended setting, each module can be verified in the source logic and
compiled separately; its specifications carry over to the compiled module and
are preserved when compiled modules are linked.
All results are mechanized in Rocq.

\keywords{WP-preservation \and Compiler verification \and Separation logic \and Interaction trees}
\end{abstract}

\section{Introduction}\label{sec:intro}

Program logics establish specifications through which clients reason about
programs and components~\cite{VST}. In a separation logic, 
specifications describe the pre- and postconditions of a program in terms of \emph{resources}, e.g.
what part of the memory is accessible to or owned by the program and what 
resource is returned after execution~\cite{seplogicReynolds,seplogicOH}.
Such specifications serve as \emph{logical interfaces}: 
clients can use them to reason about a component without inspecting
its implementation and retain ownership of separately held resources~\cite{seplogicOH07,seplogicParkinson}.

Semantics preservation alone does not supply a corresponding logical
interface for compiled code. Correctness theorems such as those of 
CompCert~\cite{compcert} and CakeML~\cite{cakeML} relate source and target
behaviors, guaranteeing that every behavior of the compiled program is a possible behavior of 
the source program:\footnote{Throughout the paper, we use
  {\color{RoyalBlue}\sffamily blue sans-serif text}, optionally with a
  subscript $S$, for source-language entities, and
  {\color{RedOrange}\bfseries red bold text}, optionally with a subscript $T$,
  for target-language entities; see~\cite{colors} for a discussion of this
  convention. We write $\cmp{P}$ for the target program obtained by compiling
  the source program $P$.}
\[
  \Tbeh{\cmp{P}} \subseteq \Sbeh{P}.
\]
Compositional correctness and contextual refinement extend behavioral
guarantees to components
interacting with compatible contexts~\cite{compositionalCompcert,CompCertO},
while secure compilation considers arbitrary target
contexts~\cite{surveySECOMP,fac,beyondFA}.

These guarantees constrain the behaviors of compiled programs, but do not
by themselves specify how source assertions and ownership requirements
are represented in the target. A source variable may be
represented by a target memory cell, while compiled code may additionally
require registers to compute intermediate results. Deriving a corresponding
target specification requires relating resources and their ownership,
and establishing that the resulting logical interface supports reasoning
about the compiled component. The following example illustrates this gap.
\begin{example}\label{ex:running}
Consider a library component whose source-level implementation is
\[
M : \quad x \mathrel{:=} x + 1
\]
and whose small-footprint specification is the Hoare triple
\[
\{x \mathrel{\src{\mapsto}} n\}\; M\; \{x \mathrel{\src{\mapsto}} n+1\}.
\]
A client, reasoning about $M$ in separation logic, may rely on the frame
rule to obtain a richer specification,
\[
\frac{
  \{x \mathrel{\src{\mapsto}} n\}\; M\; \{x \mathrel{\src{\mapsto}} n+1\}
}
{
  \{x \mathrel{\src{\mapsto}} n ~*~ y \mathrel{\src{\mapsto}} 0\}\; M\;
  \{x \mathrel{\src{\mapsto}} n+1 ~*~ y \mathrel{\src{\mapsto}} 0\}
}
\]
that is, no variable other than $x$ is modified by $M$.

Suppose now the compiler transforms $M$ into the low-level implementation
\[
\cmp{M} : \quad
\texttt{load}~r_0,a_x;
\quad
\texttt{mov}~r_1,1;
\quad
\texttt{add}~r_0,r_0,r_1;
\quad
\texttt{store}~a_x,r_0,
\]
where $a_x$ is the memory cell holding $x$.
Semantic compiler correctness relates the behavior of $\cmp{M}$ to that
of $M$, but it gives no logical interface for $\cmp{M}$. To obtain one
without verifying the compiled code from scratch, we must relate source locations
and their ownership to target memory, account for the registers used by
$\cmp{M}$, and show that the source proof---including the fact that a
framed $y$ is left untouched---yields a target weakest precondition for
$\cmp{M}$.
\end{example}
This paper investigates \emph{WP-preserving compilation} (or simply
\emph{WP-preservation}) in a semantic, separation-logic setting:
when a compiler translates a source program, the weakest preconditions
that the program satisfies carry over to the compiled program.\footnote{Hoare triples are defined from weakest
preconditions~\cite{iris} (cf.\ Dijkstra~\cite{dijkstra75}), so preserving
weakest preconditions means preserving Hoare triples.}

\paragraph{Contributions.}
More precisely, this paper makes the following contributions:
\begin{enumerate}
\item We develop a proof technique for establishing WP-preservation based
      on a \emph{resource-aware} simulation over interaction trees (\itrees).
      \begin{enumerate}
        \item We identify an interface of three conditions under which a simulation
              transports weakest preconditions from source to target
              (the \emph{WP-transfer lemma}, \Cref{sec:abstract-transfer}),
              independently of the event signatures, return types, and
              logical handlers of the source and target \itree semantics.
        \item We build such a simulation between IMP, a structured
              imperative language, and ASM, an assembly-like language with
              registers and memory (\Cref{sec:imp-asm-simulation}).
              The simulation is defined over
              the events of the two languages rather than over a particular
              compiler, and it is closed under bind and iteration. It relates
              the resources required by corresponding events, carries their
              ownership forward to subsequent simulation steps, and accounts
              for target operations, such as register accesses, that have no
              corresponding source event.
      \end{enumerate}
\item We demonstrate our approach on the IMP-to-ASM compiler of
      Xia et~al.~\cite{itrees}. We instantiate the Program Logics \`a la
      Carte framework~\cite{plalacarte} to obtain an Iris instance for IMP,
      and prove a WP-preservation theorem for closed programs~(\Cref{sec:imp-asm-wp-pres}).
\item We extend the languages and the compiler with modules and function calls, and
      establish WP-preservation under separate compilation
      (\Cref{sec:secomp}): each module can be verified in the source logic and
compiled separately; its specifications carry over to the compiled
      module and are preserved when compiled modules are linked. We
      illustrate this with two modules whose mutually recursive functions
      decrement a shared counter, passing ownership of the counter across
      module boundaries. 
\item All results are formalized in Rocq, using the Iris framework~\cite{iris}.
\end{enumerate}

\paragraph{Structure of the paper.}
\Cref{sec:background} reviews interaction trees and their use in deriving
program logics. Readers familiar with this background may proceed directly
to~\Cref{sec:overview}, which illustrates WP-preservation and our simulation
technique through the running example. \Cref{sec:poc} presents the IMP and
ASM languages and the compiler. \Cref{sec:wp-pres} presents both layers of
our proof technique, the WP-transfer lemma and the resource-aware simulation,
and the proof of WP-preservation for the IMP-to-ASM compiler.
\Cref{sec:secomp} extends languages and compiler with modules and function
calls and establishes WP-preservation under separate compilation.
\Cref{sec:relatedW} discusses related work, and~\Cref{sec:conclusions}
concludes with current limitations and directions for future work.

\section{Background on Interaction Trees}\label{sec:background}

Interaction trees (\itrees) are a coinductive representation of recursive,
effectful computations in Rocq~\cite{itrees}.  An \itree of type
\itreeER{E}{R} is parameterized by an event signature $E$ and a return
type $R$.  It has three constructors:
\begin{itemize}
\item \Ret{r} terminates and returns $r : R$;
\item \Tau{t} takes an internal, unobservable step and continues as
      $t$; and
\item \VisA{A}{e}{k} emits an event $e : E\ A$ and, after receiving an
      answer $a : A$ from the environment, continues as $k\,a$.
\end{itemize}
Because the datatype is coinductive, its branches may be infinite.  It can
therefore represent both reactive programs that produce unbounded sequences of
events and programs that diverge silently through an infinite sequence of
$\mathsf{Tau}$ steps.

\paragraph{Equational theories.}
The main equivalence notion on \itrees is \emph{weak bisimulation} or
\emph{equivalence up to tau}, written $\approx$. It relates trees
that return related values and perform related visible events after ignoring
finite numbers of silent steps, while still distinguishing silent divergence
from termination.  More generally, the relation can be parameterized by both
a relation on return values and a relation on visible events, allowing \itrees
with different result and event types to be compared.  Other variants tailor
the observable events to a particular application; for example, equivalence
up to cutoff allows designated error events on one side to match any
subsequent behavior on the other~\cite{itreescompil}.

\paragraph{Algebraic structure.}
For every event signature $E$, \itreeER{E}{R} forms a monad whose unit is
$\mathsf{Ret}$ and whose bind $\mathsf{bind}\ t\ k$ replaces each return leaf \Ret{r} of $t$ with
the continuation $k\,r$, while preserving silent steps and visible events. 
Monadic laws hold up to weak bisimulation.

\paragraph{Computational and logical interpretations.}
Events are initially uninterpreted: an event signature describes the
interface between a computation and its environment, and a handler assigns
meaning to it. An \emph{algebraic handler}\footnote{Usually simply called
  \emph{handler}; we say \emph{algebraic} to distinguish it from the logical
  handler of Vistrup et~al.~\cite{plalacarte}.} maps each event to a
computation; an interpreter folds it over a tree, mapping $\mathsf{Ret}$ to
monadic return and commuting with bind, so that interpreters preserve the
monadic structure and the equational theory. A \emph{logical handler}
instead assigns each event a specification, the pre- and postconditions of
triggering it~\cite{plalacarte}; a weakest-precondition interpretation
lifts it to an entire \itree, mapping $\mathsf{Ret}$ to the postcondition
and composing through bind, so that an \itree semantics gives rise
compositionally to a program logic for the represented language. Both
algebraic and logical handlers for individual effects can be combined
over disjoint sums of event signatures, and both interpretations come
with laws for the iteration and recursion combinators of
\itrees~\cite{itrees,plalacarte}.

\paragraph{Program logics from \itree semantics.}
Vistrup et~al.~\cite{plalacarte} show how to derive program logics for languages with \itree semantics.
The foundation is a general weakest-precondition calculus for \itrees,
\[
\wpi{H}{t}{\Phi} \mathrel{:=} \updateMod
\begin{cases}
  \Phi(r)
    & \text{if } t = \Ret{r}, \\
      \wpi{H}{t'}{\Phi}
    & \text{if } t = \Tau{t'}, \\
  H_A\bigl(e, \lambda a.\,\wpi{H}{k(a)}{\Phi}\bigr)
    & \text{if } t = \VisA{A}{e}{k}.
\end{cases}
\]
Thus, a return establishes the postcondition, a silent step continues with the
weakest precondition of its continuation, and a visible event delegates reasoning to
the logical handler.  The outer update modality permits updates to Iris ghost
state during each step~\cite{plalacarte}.

Language-specific program logics can be obtained modularly by
providing a denotation of programs into \itreeER{E}{R} and a logical
handler for each event in $E$. 
If the language is extended with a new construct whose \itree semantics
introduces a new effect $E'$, its denotation ranges over
\itreeER{(E + E')}{R}.  The program logic can then be extended simply by
providing a logical handler for $E'$; the existing handlers and proof rules
remain unchanged.

\section{WP-preservation Overview}\label{sec:overview}

We use~\Cref{ex:running} to showcase the central ideas of our work. 
We consider a simple imperative language (IMP) with assignments, sequencing,
conditionals, and loops, compiled to an assembly-like language (ASM); their
syntax, semantics and the compiler are given in~\Cref{sec:poc}.
Recall the increment function from~\Cref{ex:running} and its specification:
\[
  M : x := x+1,
  \qquad
  \{x \mathrel{\src{\mapsto}} n\}\;M\;\{x \mathrel{\src{\mapsto}} n+1\}.
\]
The precondition supplies ownership of $x$ as well as its current value.
The postcondition returns that ownership with the updated value. A client
can therefore reason about the command without inspecting its implementation and
can retain separately owned resources, such as $y \mathrel{\src{\mapsto}} 0$ for $y \neq x$,
while executing it.

\paragraph{Establishing the specification in the source.}
The proof that $M$ satisfies this specification is a single application
of the assignment rule. Accessing $x$ requires ownership of $x$, so the
assignment consumes $x \mathrel{\src{\mapsto}} n$, evaluates $x+1$, and
returns $x \mathrel{\src{\mapsto}} n+1$.

\paragraph{Establishing the specification in the target.}
\begin{figure}[htbp]
\centering
$\displaystyle
\cmp{M} : \quad
\begin{array}{@{}l@{\quad}l@{,\;}l@{}l@{\qquad}l@{\;}c@{\;}l@{\;}c@{\;}l@{}l@{}}
  \multicolumn{4}{@{}l@{\qquad}}{}   & \{r_0 \mathrel{\trg{\mapsto}} u_0 & * & r_1 \mathrel{\trg{\mapsto}} u_1 & * & a_x \mathrel{\trg{\mapsto}} n & \}\\[.6ex]
  \texttt{load}  & r_0 & a_x &        & \{r_0 \mathrel{\trg{\mapsto}} n & * & r_1 \mathrel{\trg{\mapsto}} u_1 & * & a_x \mathrel{\trg{\mapsto}} n & \}\\
  \texttt{mov}   & r_1 & 1   &        & \{r_0 \mathrel{\trg{\mapsto}} n & * & r_1 \mathrel{\trg{\mapsto}} 1 & * & a_x \mathrel{\trg{\mapsto}} n & \}\\
  \texttt{add}   & r_0 & r_0 & ,\;r_1 & \{r_0 \mathrel{\trg{\mapsto}} n+1 & * & r_1 \mathrel{\trg{\mapsto}} 1 & * & a_x \mathrel{\trg{\mapsto}} n & \}\\
  \texttt{store} & a_x & r_0 &        & \{r_0 \mathrel{\trg{\mapsto}} n+1 & * & r_1 \mathrel{\trg{\mapsto}} 1 & * & a_x \mathrel{\trg{\mapsto}} n+1 & \}
\end{array}
$
\caption{Compiled code for $M$ with its precondition and ownership assertions
         after each instruction.}
\label{fig:increment-target}
\end{figure}
\Cref{fig:increment-target} shows $\cmp{M}$, the compiled code for $M$,
together with its precondition and the assertions that hold after each
instruction. The precondition owns the registers $r_0$ and $r_1$ used by
$\cmp{M}$ and the memory cell $a_x$ of the source variable $x$, which
contains $n$.

After \texttt{load}, register $r_0$ also contains $n$.
The \texttt{mov} instruction sets the temporary register $r_1$ to $1$.
The \texttt{add} instruction adds the values in $r_0$ and $r_1$ and stores
the result in $r_0$. The final \texttt{store} instruction writes $n+1$
to the memory cell at $a_x$.

A direct proof at the target level thus establishes:
\[
  \begin{gathered}
    \{r_0 \mathrel{\trg{\mapsto}} u_0
      * r_1 \mathrel{\trg{\mapsto}} u_1 
      * a_x \mathrel{\trg{\mapsto}} n
    \}\\
    \cmp{M}\\
    \{  r_0 \mathrel{\trg{\mapsto}} n+1
      * r_1 \mathrel{\trg{\mapsto}} 1
      * a_x \mathrel{\trg{\mapsto}} n+1
    \}.
  \end{gathered}
\]

This argument suggests that the source specification for $M$ survives compilation:
if $x$ is incremented by $1$, then $a_x$ is also incremented by $1$. Still,
the target specification also requires ownership of the registers used by
$\cmp{M}$.

\vspace{0.5em}
Our goal is to obtain a target guarantee for $\cmp{M}$ from the source proof,
without re-proving it at the target level. We achieve this by establishing a simulation
relation---over the \itrees\ of $M$ and $\cmp{M}$---
that accounts for the resources accessed by both the source and compiled programs.

\begin{figure}[htbp]
\centering
\small
\[
\begin{array}{@{}l@{\qquad}l@{}}
  \src{\textsf{Source}~t}
    & \trg{\textsf{Target}~t}\\
  \text{itree of }M
    & \text{itree of }\cmp{M}\\[.8ex]
  \hline
  \noalign{\vskip .8ex}
  \{x \mathrel{\src{\mapsto}} n\}
    & \{r_0 \mathrel{\trg{\mapsto}} u_0 * r_1 \mathrel{\trg{\mapsto}} u_1 * a_x \mathrel{\trg{\mapsto}} n\}\\[.6ex]
  v_S \leftarrow \mathsf{GetVar}(x)
    & v_T \leftarrow \mathsf{Load}(a_x)\\
  \{x \mathrel{\src{\mapsto}} n\}
    & \{r_0 \mathrel{\trg{\mapsto}} u_0 * r_1 \mathrel{\trg{\mapsto}} u_1 * a_x \mathrel{\trg{\mapsto}} n\}\\[.6ex]
  \text{---}
    & \mathsf{SetReg}(r_0,v_T)\\
  
    & \{r_0 \mathrel{\trg{\mapsto}} n * r_1 \mathrel{\trg{\mapsto}} u_1 * a_x \mathrel{\trg{\mapsto}} n\}\\[.6ex]
  \text{---}
    & \mathsf{SetReg}(r_1,1)\\
  
    & \{r_0 \mathrel{\trg{\mapsto}} n * r_1 \mathrel{\trg{\mapsto}} 1 * a_x \mathrel{\trg{\mapsto}} n\}\\[.6ex]
  \text{---}
    & p \leftarrow \mathsf{GetReg}(r_0)\\
    & \{r_0 \mathrel{\trg{\mapsto}} n * r_1 \mathrel{\trg{\mapsto}} 1 * a_x \mathrel{\trg{\mapsto}} n\}\\[.6ex]
  \text{---}
    & q \leftarrow \mathsf{GetReg}(r_1)\\
    & \{r_0 \mathrel{\trg{\mapsto}} n * r_1 \mathrel{\trg{\mapsto}} 1 * a_x \mathrel{\trg{\mapsto}} n\}\\[.6ex]
  \text{---}
    & \mathsf{SetReg}(r_0,p+q)\\
  
    & \{r_0 \mathrel{\trg{\mapsto}} n+1 * r_1 \mathrel{\trg{\mapsto}} 1 * a_x \mathrel{\trg{\mapsto}} n\}\\[.6ex]
  \text{---}
    & w \leftarrow \mathsf{GetReg}(r_0)\\
    & \{r_0 \mathrel{\trg{\mapsto}} n+1 * r_1 \mathrel{\trg{\mapsto}} 1 * a_x \mathrel{\trg{\mapsto}} n\}\\[.6ex]
  \mathsf{SetVar}(x,v_S+1)
    & \mathsf{Store}(a_x,w)\\
  \{x \mathrel{\src{\mapsto}} n+1\}
    & \{r_0 \mathrel{\trg{\mapsto}} n+1 * r_1 \mathrel{\trg{\mapsto}} 1 * a_x \mathrel{\trg{\mapsto}} n+1\}
\end{array}
\]
\caption{Resource-aware simulation between the interaction trees of $M$ and $\cmp{M}$.
         We display the postconditions after each stateful computation and omit
         silent steps, synchronized \stepE, and failure branches.
         Initially, the source variable $x$ and the target memory cell at $a_x$ both
         contain $n$, while the owned registers contain arbitrary values $u_0$ and $u_1$.
         For the simulation to hold, $\mathsf{GetVar}(x)$ must match
         $\mathsf{Load}(a_x)$, and $\mathsf{SetVar}(x,\cdot)$ must match
         $\mathsf{Store}(a_x,\cdot)$. Each match must also
         provide the ownership of the resources required by the next proof obligation.
         The target's $\mathsf{GetReg}$ and $\mathsf{SetReg}$ events have no
         corresponding source events; the source stutters, as indicated by
         the dashes, while register ownership is preserved.
         The simulation owns only the cell $a_x$, the source authority of
         $x$, and the registers of $\cmp{M}$.
        }
\label{fig:increment-itrees}
\end{figure}

\paragraph{Resource-aware simulation.}
\Cref{fig:increment-itrees} shows a \emph{resource-aware} simulation between
${\src t}$ and ${\trg t}$, the \itrees\ of $M$ and $\cmp{M}$, respectively.
For the simulation to hold, the read $\mathsf{GetVar}(x)$ must match the
read $\mathsf{Load}(a_x)$, and the match must relate the ownership of the source
and target resources.

After the initial synchronized reads, the simulation cannot proceed in lock-step.
The target writes the loaded value to $r_0$, with no corresponding source event.
Further register reads and writes compute the increment and prepare the value for
the final store. The simulation lets the source stutter, as indicated by the dashes,
while the target performs the necessary register accesses. The simulation tracks 
register contents and carries ownership of resources forward to subsequent proof obligations.
A $\mathsf{GetReg}$ preserves both the register's contents and its ownership.
A $\mathsf{SetReg}$ updates the contents while retaining ownership.

\paragraph{Transporting the specification.}
A source assertion such as $x\mathrel{\src{\mapsto}}v$ speaks about the
source store, which $\cmp{M}$ never touches. The simulation therefore
maintains a \emph{state correspondence} $\Corr_M$ for the variables of $M$
and the registers of $\cmp{M}$: for $x$ it owns the target cell $a_x$ and the
source \emph{authority} of $x$, a ghost resource that agrees with the
client's $x\mathrel{\src{\mapsto}}v$, with the same entry in both, and it
owns $r_0$ and $r_1$. $\Corr_M$ speaks about $x$, $a_x$, $r_0$ and $r_1$
only, not about the rest of the store or the memory.
Our preservation theorem returns the source postcondition together with
$\Corr_M$. From $x\mathrel{\src{\mapsto}}n{+}1$ and the authority of $x$ in
$\Corr_M$, the cell $a_x$ holds $n{+}1$, so we extract
$a_x\mathrel{\trg{\mapsto}}n{+}1$. Since the target precondition owns no
other cell, the frame rule of the target logic adds any other cell, such as
$a_y\mathrel{\trg{\mapsto}}0$. Combining the source proof of $M$ with our
theorem, we obtain
\[
  \Corr_M * x\mathrel{\src{\mapsto}}n * a_y\mathrel{\trg{\mapsto}}0
  \;\vdash\;
  \wpi{\trgH}{\denoteASM{\cmp{M}}(0)}{a_x\mathrel{\trg{\mapsto}}n{+}1
     * a_y\mathrel{\trg{\mapsto}}0 * {\trg\theta}}
\]
where ${\trg\theta}$ owns the registers of $\cmp{M}$ (\Cref{sec:imp-asm-wp-pres}).

\smallskip
A relation between source and target states induces a mapping from source
to target properties~\cite{tracerel}.
$\Corr_M$ plays this role, mapping the source postcondition
$x\mathrel{\src{\mapsto}}n{+}1$ to $\Corr_M * x\mathrel{\src{\mapsto}}n{+}1$,
which in turn entails the target postcondition
$a_x\mathrel{\trg{\mapsto}}n{+}1 * {\trg\theta}$
because $\Corr_M$ owns the cell $a_x$ and an authority that agrees with the
source points-to assertion.\footnote{This works for points-to assertions;
an arbitrary source assertion need not have a target counterpart.}

\smallskip
The simulation relation also requires ownership of $r_0$ and $r_1$.
The value eventually written to $a_x$ is computed 
through these registers. The \texttt{load} places $n$ in $r_0$, \texttt{mov} places
$1$ in $r_1$, and \texttt{add} computes their sum before \texttt{store} writes it to memory.
Proving the target postcondition therefore requires permission to read and
update these registers, together with assertions tracking their contents.
Their initial values may be arbitrary, since both registers are overwritten
before being read, but their ownership must be available from the start.
The simulation carries this additional ownership through the intermediate
steps and returns it together with ownership of the updated cell at $a_x$.

\section{IMP-to-ASM: Languages, Logic, and Compilation}\label{sec:poc}
We use the IMP-to-ASM compiler of Xia et~al.~\cite{itrees}
to instantiate our approach. This section presents the two languages, their
\itree semantics, the source program logic, and the compilation scheme.
These ingredients provide the setting for WP-preservation in~\Cref{sec:wp-pres}.

\begin{figure}[htbp]
\centering
\small
$
\begin{array}{@{}r@{\;}c@{\;}l@{\qquad}r@{\;}c@{\;}l@{}}
\multicolumn{3}{@{}l}{\textbf{IMP}} & \multicolumn{3}{@{}l}{\textbf{ASM}}\\[-.2ex]
c & ::= & \mathsf{skip}\mid x:=a\mid c;c
  & i & ::= & \mathsf{mov}\ r,o\mid\mathsf{op}\ r,r,o\mid\mathsf{not}\ r,o
          \mid\mathsf{load}\ r,\alpha\mid\mathsf{store}\ \alpha,o\\
  & \mid & \mathsf{if}\ b\ \mathsf{then}\ c\ \mathsf{else}\ c
  & j & ::= & \mathsf{jmp}\ \ell\mid\mathsf{brz}\ r,\ell_0,\ell_1\mid\mathsf{halt}\\
  & \mid & \mathsf{while}\ b\ \mathsf{do}\ c
  & B & ::= & j\mid i;B
\end{array}
$
\caption{IMP and ASM syntax. In IMP, $n\in\mathbb{N}$ and
$x$ are variables identifiers, $a$ is an arithmetic expressions and $b$ a Boolean expression.
In ASM, registers $r\in\mathbb{N}$, addresses $\alpha\in\mathit{Addr}=\mathit{Ident}$,
and labels $\ell$ range over a finite set $L$. An operand $o$ is an immediate
value $n$ or a register $r$, and $\mathsf{op}$ ranges over addition,
subtraction, multiplication, equality, order, and conjunction. A block $B$ ends
in a branch.}
\label{fig:imp-asm-syntax}
\end{figure}

\subsection{Source Language: IMP}\label{sec:poc-imp}
IMP programs are simple sequential commands (\Cref{fig:imp-asm-syntax}). 
Its state is a finite partial map from variable identifiers to optional
natural numbers, $\sigma:\mathit{Ident}\rightharpoonup\mathsf{option}(\mathbb{N})$, called the \emph{store}.
Absent entries denote undeclared variables, and $\mathsf{None}$ entries
denote unset variables.
The denotation of a command $c$, $\denoteIMP{c}: \itreeER{\srcE}{\mathsf{unit}}$, is an \itree\ 
returning $\mathsf{unit}$, with event signature $\srcE=\storeE+\stepE+\ubE$, 
where $\storeE$ consists of the events $\mathsf{GetVar}(x)$, which reads
the entry of the variable $x$, and $\mathsf{SetVar}(x,w)$, which writes $w$
to it.
Reading an uninitialized variable, or accessing an undeclared one, triggers
an undefined behavior, $\ubE$.
Assignments evaluate their expressions and write to the store; sequencing uses bind,
conditionals select a continuation, and loops use \itree iteration.
\stepE{} is an administrative event with no operational effect; it records
that a computational step has occurred.\footnote{In~\Cref{sec:imp-program-logic} we use \stepE{} to reason under the later modality.}

\subsection{IMP Program Logic}\label{sec:imp-program-logic}
We use events and logical handlers provided by the mechanized
development accompanying \emph{Program Logics \`a la Carte}~\cite{plalacarte}
to derive a weakest precondition calculus for IMP from its \itree semantics
(full details in~\Cref{app:imp-logic}).

\smallskip
For a command $c$ and postcondition ${\src\Phi}$, its semantic weakest precondition is
\[
 \wpIMP{c}{{\src\Phi}}\;:=\;
 \wpi{\srcH}{\denoteIMP{c}}{{\src\Phi}},
 \qquad
 \srcH=\handler{store}\oplus\handler{step}\oplus\handler{ub}.
\]
To ease the presentation, hereafter we suppress masks.

\handler{store} is the logical handler for \storeE. It specifies each event
by points-to assertions on the accessed variable only. We write
$x\mathrel{\src{\mapsto^{?}_{q}}}w$ for ownership of a fraction $q$ of $x$,
whose entry is $w:\mathsf{option}(\mathbb N)$; we omit $q$ for full
ownership, and $x\mathrel{\src{\mapsto}}v$ abbreviates
$x\mathrel{\src{\mapsto^{?}}}\mathsf{Some}(v)$. Then
\[
\begin{aligned}
 \handler{store}(\mathsf{GetVar}(x),{\src\Psi})&:=
   \exists q,w.\;x\mathrel{\src{\mapsto^{?}_{q}}}w*
   (x\mathrel{\src{\mapsto^{?}_{q}}}w\wand{\src\Psi}(w)),\\
 \handler{store}(\mathsf{SetVar}(x,w'),{\src\Psi})&:=
   \exists w.\;x\mathrel{\src{\mapsto^{?}}}w*
   (x\mathrel{\src{\mapsto^{?}}}w'\wand{\src\Psi}(\mathsf{tt})).
\end{aligned}
\]
A read may use any fraction and returns it unchanged; a write requires full
ownership. Since reading an unset variable is undefined behavior, reading
a variable in a proof requires an initialized value,
$x\mathrel{\src{\mapsto}}v$. The handler mentions no resource over the whole
store. The points-to assertion is a fragment of a ghost authority kept for
each variable separately (\Cref{app:imp-logic}), so the fragment and the
authority of a variable can be held by different parties; the simulation
of~\Cref{sec:wp-pres} relies on this. The authorities of all variables of
the store appear only in the adequacy proof of the handler, which relates
it to the computational semantics of the store.

The undefined-behavior handler assigns the false assertion to an \ubE{} event,
independently of its continuation, $ \handler{ub}(\ubE,{\src\Psi}) := \bot.$
Consequently, proving the WP of a program requires showing that undefined
behavior is unreachable.

The step handler,
$\handler{step}(\stepE,{\src\Psi}) := \triangleright_m\,{\src\Psi}(\mathsf{tt})$,
places a modality $\triangleright_m$ around the continuation, where the
parameter $m$ selects either the later modality,
$\triangleright_{\mathsf{Later}}P=\triangleright P$, or no modality,
$\triangleright_{\mathsf{Identity}}P=P$.
With $m=\mathsf{Later}$, each \stepE{} records one unit of logical progress
and permits guarded recursive reasoning about the remainder of the
computation.

The proof system provides rules for each IMP command. The rule for sequencing
follows semantic bind, conditionals select the rule for the branch determined
by the guard, and the loop rule unfolds the loop once with guarded recursive
obligations. The rules of consequence and framing hold as WP lemmas, so a
proof may retain resources untouched by a command.
\Cref{app:imp-logic} gives the expression and command rules, their
structural rules, and the precise command WP used in the development.

An inductive derivability judgment packages these rules; we write
$\{P\}\;c\;\{{\src\Phi}\}$ when the triple is derivable. It includes consequence
but not framing, and its loop rule works on the once-unrolled loop
instead of the guarded one (\Cref{app:imp-logic}).
Its soundness theorem states that every derivable triple
$\{P\}\;c\;\{{\src\Phi}\}$ entails \mbox{$P\vdash\wpIMP{c}{{\src\Phi}}$}.

\subsection{Target Language: ASM}\label{sec:poc-asm}
ASM is a small assembly-like language organized as control-flow graphs
(\Cref{fig:imp-asm-syntax}).
Its state consists of registers and memory, both finite partial maps to
optional natural numbers, so the semantics distinguishes an absent location from a location that exists
but is uninitialized.

\vspace{0.5em}
The nodes of an ASM control-flow graph are basic blocks $B$. A block is a finite
sequence of instructions $i$ followed by exactly one terminator $j$.
Instructions perform straight-line computation, moving values between
registers, applying arithmetic or Boolean operations, and loading from or
storing to memory. The terminator determines what happens after the instruction sequence.
An unconditional jump selects one successor, a zero-test branch selects one of
two successors from the value of a register, and \textsf{halt} stops execution.
Thus instructions transform the machine state within a block, whereas the
terminator controls execution between blocks.

Formally, an $\mathsf{asm}(a,b)$ graph has $a$ visible entry labels, $b$
visible exit labels, and some finite number $I$ of internal labels. Its code
maps each element of $\mathsf{fin}(I+a)$---an internal label or a visible
entry---to a block. A non-halting block transfers control to an element of
$\mathsf{fin}(I+b)$: an internal label continues execution at another block,
while a visible exit returns control to the surrounding graph. Here
$\mathsf{fin}(k)$ denotes the finite type with $k$ elements. This distinction
between visible and internal labels allows graphs to be linked and internal
control-flow edges to be hidden compositionally.

The event signature for ASM is 
$\trgE=\mathsf{regE}+\mathsf{memoryE}+\mathsf{Exit}+\stepE$.
$\mathsf{regE}$ consists of $\mathsf{GetReg}(r)$ and $\mathsf{SetReg}(r,w)$,
and $\mathsf{memoryE}$ of $\mathsf{Load}(\alpha)$ and $\mathsf{Store}(\alpha,w)$;
each event accesses a single register or memory cell. Reads of unset
locations and \textsf{halt} emit $\mathsf{Exit}$.
The denotation of a block executes its instructions in order and then follows
its terminator.
The semantics of a control-flow graph iterates this process across internal
labels until it reaches a visible exit, giving
$\denoteASM{p}:\mathsf{fin}(a)\to\itreeER{\trgE}{\mathsf{fin}(b)}$.

\subsection{Compilation}\label{sec:poc-compiler}

The compilation function $\mathsf{compile}:\mathsf{com}\to\mathsf{asm}(1,1)$
is the one of Xia et~al.~\cite{itrees}, used unchanged for closed programs
and extended with function calls in~\Cref{sec:secomp}; the \itree
semantics of both languages faithfully follow Xia et~al., but access the
state through one event per variable, register or memory cell, and
introduce \stepE{} and \ubE. The
compiler follows the source syntax. Expression compilation emits instruction lists: constants
become moves and variables become loads. Binary operands are computed into
$r$ and $r+1$ and combined into $r$, using higher registers as scratch space
while preserving lower registers. Boolean expressions follow the same scheme.
Assignment computes into register zero and stores the result at the address
$a_x$ of the assigned variable $x$, with $a_x:=x$.

Graph combinators implement control flow. Skip is the identity graph and
assignment a single block. Sequencing links one graph's exit to the next
entry; conditionals test register zero and merge branch exits; loops link
the body back to the guard. Relabelling and hiding internal labels implement
these connections. Their denotation laws express sequencing as bind and
looping as \itree iteration, supporting compositional compiler proofs.

\section{WP-Preservation}\label{sec:wp-pres}

We now show that the compiler of~\Cref{sec:poc} preserves source-level
weakest preconditions, and hence source Hoare triples.
The proof is organized around a generic transfer principle for weakest
preconditions over interaction trees (\Cref{sec:abstract-transfer}).
This result is independent of IMP, ASM, and the compiler, as it is
parameterized by a \emph{resource-aware} simulation over \itrees\ with
arbitrary event signatures and logical handlers.
We then instantiate this simulation for IMP and ASM
(\Cref{sec:imp-asm-simulation}) and use its composition properties to
prove WP-preservation for the IMP-to-ASM compiler
(\Cref{sec:imp-asm-wp-pres}). The full definitions and technical
statements appear in~\Cref{app:wp-pres}.

\subsection{Abstract Transfer Lemma}\label{sec:abstract-transfer}

To transport a source WP to a target WP, we must relate both the
computations and the resources mentioned in their specifications.
We therefore use an Iris-valued simulation
\[
 \SIM_R:
 \itreeER{\srcE}{\src A}\to
 \itreeER{\trgE}{\trg A}\to\iProp\;\Sigma,
\]
indexed by a relation over returned values $R:{\src A}\to{\trg A}\to\iProp\;\Sigma.$
The logical handlers and the simulation use the same Iris resource
universe $\Sigma$, containing both source and target resources.
The assertion $R(u,v)$ relates the returned values and describes the
resources available when both \itrees\ return a value.

\smallskip
Hereafter we provide an intuitive description of the requirements we impose
on $\SIM_R$; their rigorous statements are in~\Cref{app:abstract-transfer}.

\paragraph{Returns (\textup{\textsc{SIMret}}).}
$\SIM_R$ must relate a source $\Ret{u}$ with a target \itree\ ${\trg{t}}$ that triggers
visible events before returning $v$ with $R(u,v)$,
as long as the target logical handler does not change the target postcondition.
For example, if the target postcondition concerns only the target memory,
the simulation can relate a source return to a register event, which does
not affect a predicate over the memory, followed by a target return
\[
 \SIM_R\bigl(\Ret{\srcRetv{\mathsf{tt}}},
   \Vis{\mathsf{SetReg}(r_0,v)}
        {\lambda \_.\,\Ret{\trgRetv{\mathsf{tt}}}}\bigr).
\]

\paragraph{Tau steps (\textup{\textsc{SIMtau}}).}
A silent source step exposes no effect. The simulation must allow us to
remove it and continue relating the remaining source computation to the
same target computation.

\paragraph{Visible events (\textup{\textsc{SIMvis}}).}
A visible source event must be matched by a target event, possibly after
finitely many additional target steps, such as the register write above.
The simulation must justify these steps through the target logical handler.

For the matched events, an event relation identifies which source and
target effects correspond. A resource-aware answer relation describes
how their answers are related and which resources are available to
simulate their continuations. Compatibility of the logical handlers
allows the source handler's proof obligations to establish those of the
target handler. Together, these conditions let us extend the source WP
proof through the matched event while accounting for the target's
additional computation.

\smallskip
These local requirements suffice to transport an entire WP.

\begin{lemma}[Abstract WP transfer]\label{lem:abstract-wp-transfer}
Suppose $\SIM_R$ satisfies \textup{\textsc{SIMret}}, \textup{\textsc{SIMtau}}, and
\textup{\textsc{SIMvis}} from~\Cref{app:abstract-transfer}.
The following entailment holds:
\[
\begin{aligned}
 &\SIM_R({\src s},{\trg t})
 *\wpi{\srcH}{{\src s}}{{\src\Phi}}\\
 &*\bigl(\forall u,v.\;
       R(u,v)*{\src\Phi}(u)\wand{\trg\Psi}(v)\bigr)
 \;\vdash\;\wpi{\trgH}{{\trg t}}{{\trg\Psi}}.
\end{aligned}
\]
\end{lemma}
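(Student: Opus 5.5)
We prove the entailment by Löb induction combined with case analysis on the source tree. The source WP is the structurally informative hypothesis, so we case on ${\src s}$, unfolding $\mathrm{wpi}$ by its defining equation on both sides. Because $\mathrm{wpi}$ is a guarded fixpoint, we could alternatively argue coinductively on the target WP. Concretely, we fix ${\trg\Psi}$ and ${\src\Phi}$, and assume the Löb hypothesis $\triangleright\,\forall {\src s},{\trg t}.\;\SIM_R({\src s},{\trg t}) * \wpi{\srcH}{{\src s}}{{\src\Phi}} * (\forall u,v.\,R(u,v)*{\src\Phi}(u)\wand{\trg\Psi}(v)) \wand \wpi{\trgH}{{\trg t}}{{\trg\Psi}}$. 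The fact that silent source steps are unguarded complicates this, so we first strengthen the argument to an inner induction on finitely many source $\mathsf{Tau}$s. We take the least fixpoint over the unfolding, which is available because $\mathrm{wpi}$ is defined as a fixpoint that can be eliminated inductively in the Iris development of Vistrup et~al.

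\textbf{Case ${\src s}=\Ret{u}$.} Unfolding gives $\updateMod{\src\Phi}(u)$. We eliminate the update modality and apply \textsc{SIMret}. It yields that $\SIM_R(\Ret{u},{\trg t})$ entails the target WP of ${\trg t}$ with postcondition $\lambda v.\,R(u,v)$, using that the intermediate target events are handled without altering the postcondition. We then apply the consequence (monotonicity) law of $\mathrm{wpi}$, framing in ${\src\Phi}(u)$ and the continuation wand, to conclude ${\trg\Psi}(v)$. Framing a resource into a target WP is a standard property of $\mathrm{wpi}$, so this case should be routine.

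\textbf{Case ${\src s}=\Tau{{\src s}'}$.} Unfolding gives $\updateMod\wpi{\srcH}{{\src s}'}{{\src\Phi}}$. \textsc{SIMtau} turns $\SIM_R(\Tau{{\src s}'},{\trg t})$ into $\SIM_R({\src s}',{\trg t})$, and we conclude by the inner induction hypothesis. This is where we need induction rather than Löb, since no later is available. If the mechanized $\mathrm{wpi}$ only offers a greatest-fixpoint view, a fallback is to require \textsc{SIMtau} to produce a target WP under an update. Failing that, we restrict to the case where ${\trg t}$ takes a matching $\mathsf{Tau}$, which would give a later.

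\textbf{Case ${\src s}=\VisA{A}{e}{k}$.} This is the main obstacle. The source WP gives $\updateMod\srcH_A(e,\lambda a.\,\wpi{\srcH}{k\,a}{{\src\Phi}})$. We invoke \textsc{SIMvis}. It provides a target prefix, justified by $\trgH$, that reaches a target event $e'$ related to $e$. The compatibility of handlers converts the source handler obligation into $\trgH(e',\lambda b.\,\ldots)$, where for each answer $b$ there is a source answer $a$ satisfying the resource-aware answer relation. That relation supplies $\SIM_R(k\,a,k'\,b)$ together with the source continuation WP. The difficulty is threading the arbitrary postcondition $\lambda a.\,\wpi{\srcH}{k\,a}{{\src\Phi}}$ through handler compatibility. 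This requires \textsc{SIMvis} to be stated uniformly in the continuation predicates, that is, as
\[
\forall{\src\Psi}_k.\;\srcH(e,{\src\Psi}_k) * \bigl(\forall a,b.\,\mathrm{Ans}(a,b)*{\src\Psi}_k(a)\wand{\trg\Psi}_k(b)\bigr)\wand\trgH(e',{\trg\Psi}_k),
\]
and we also need monotonicity of $\trgH$ in its postcondition. Given this, we instantiate ${\trg\Psi}_k(b):=\wpi{\trgH}{k'\,b}{{\trg\Psi}}$. To discharge it we call the Löb hypothesis, or the outer coinduction hypothesis, on $(k\,a,k'\,b)$, duplicating nothing except the persistent postcondition wand. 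The guard needed for Löb comes from the target's WP unfolding: the recursive occurrence sits behind the handler and update, which is the standard guardedness of $\mathrm{wpi}$. Finally, the target prefix steps before $e'$ are handled by a bind/prefix lemma for $\mathrm{wpi}$, which \textsc{SIMvis} supplies directly.
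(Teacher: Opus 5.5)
\textbf{Gap: the recursive call at matched events.} Your case analysis, your use of \textsc{SIMret} and \textsc{SIMtau}, and your handling of the matched event agree with the paper. For the matched event you instantiate the transport wand of \textsc{SIMvis} with $\Psi'(b):=\wpi{\trgH}{h(b)}{\Psi}$ and discharge its source-handler premise by monotonicity. The gap is how you justify the recursive call on $(k(a),h(b))$.

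You use the L\"ob hypothesis, claiming that $\mathrm{wpi}$ is a guarded fixpoint and that the continuation ``sits behind the handler and update''. Neither supplies a later:
\begin{itemize}
\item the $\mathsf{Tau}$ clause of $\mathrm{wpi}$ has none;
\item an update does not strip $\triangleright$ from a non-timeless hypothesis such as your L\"ob hypothesis;
\item a handler guards its continuation only if it happens to contain a later.
\end{itemize}
The abstract lemma assumes nothing about the handlers. Even for IMP and ASM, only $\handler{step}$ with $m=\mathsf{Later}$ adds a later. A matched $\mathsf{GetVar}(x)$/$\mathsf{Load}(a_x)$ pair reaches its continuation through a points-to assertion and a wand, so your L\"ob hypothesis can never be opened there.

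Your coinductive fallback is also unavailable. The target WP is the same $\mathrm{wpi}$ whose least-fixpoint induction you already use for the $\mathsf{Tau}$ case. Moreover, \textsc{SIMret} and \textsc{SIMvis} state the full target WP in their premises, so a coinduction hypothesis could not be threaded through them. For the same reason your $\mathsf{Tau}$ fallback fails: a target $\mathsf{Tau}$ yields no later.

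\textbf{The fix.} No outer L\"ob or coinduction is needed. Induction on the least fixpoint $\wpi{\srcH}{s}{\Phi}$, which you use only to count source $\mathsf{Tau}$s, already supplies the induction hypothesis under the source handler for every continuation: $\srcH(e,\lambda a.\,\mathit{IH}(k(a))\land\wpi{\srcH}{k(a)}{\Phi})$. This is how the paper argues. Induct on the source WP with motive
$\mathit{IH}(s):=\forall t.\,\SIM_R(s,t)\wand(\forall u,v.\,R(u,v)*\Phi(u)\wand\Psi(v))\wand\wpi{\trgH}{t}{\Psi}$.
In the visible case, inside the premise of \textsc{SIMvis}:
\begin{enumerate}
\item unfold the target WP of $\Vis{f}{h}$ to a target-handler assertion;
\item apply the transport wand;
\item prove its source-handler premise by source-handler monotonicity, moving the continuation simulation $\forall a,b.\,\mathcal{A}(e,a,f,b)\wand\SIM_R(k(a),h(b))$ and the postcondition wand into the continuation, and then applying $\mathit{IH}(k(a))$.
\end{enumerate}
The postcondition wand is not persistent, contrary to what you write, but it need not be. It is moved once into the single continuation, never duplicated.
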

As before, we suppress invariant masks. Source and target weakest
preconditions are assertions of the same Iris logic and share one set of
invariants, whose contents may refer to both source and target resources.
The mechanized lemma holds for any invariant mask used by both, and more
generally lets the target open more invariants than the source, with a
fancy update from the source to the target mask at return (\Cref{lem:abstract-wp-transfer-masks}).

\subsection{A Resource-Aware Simulation for the \itrees of IMP and ASM}
\label{sec:imp-asm-simulation}

We define an instance of the simulation for the event signatures of IMP and ASM,
$\srcE=\storeE+\stepE+\ubE$ and
$\trgE=\mathsf{regE}+\mathsf{memoryE}+\mathsf{Exit}+\stepE$.
The assertion $\simIMPASM_R({\src s},{\trg t})$ relates the computations
while tracking the resources needed to match their effects.

\paragraph{Target weakest preconditions.}
We give ASM no proof rules of its own, since verification is meant to
happen at the source level, where stronger abstractions are available. By
specifying a logical interpretation of the target events $\trgE$, we obtain
from the generic construction of~\Cref{sec:background}~\cite{plalacarte} a
weakest precondition calculus for the \itrees with event signature $\trgE$,
and hence, indirectly, for ASM programs. Its meaning
is fixed by the logical handler, and no ASM-specific proof rules are needed,
since target weakest preconditions arise only as conclusions of
WP-preservation.

\smallskip
For an ASM graph $p$, entry label $\ell$, and return postcondition ${\trg\Psi}$,
the target $\wpi{\trgH}{\denoteASM{p}(\ell)}{{\trg\Psi}}$ is defined by the logical handler
\[
 \trgH:=\handler{reg}\oplus\handler{memory}
       \oplus\handler{exit}({\trg\Phi}_{\mathsf{exit}})\oplus\handler{step}.
\]
The register and memory handlers, $\handler{reg}$ and $\handler{memory}$,
follow the IMP store handler (\Cref{sec:imp-program-logic}): they are
specified by points-to assertions on the accessed location only.
$\mathsf{GetReg}(r)$ takes $r\mathrel{\trg{\mapsto^{?}_{q}}}w$ for some
fraction $q$, returns $w$ and gives the assertion back, and
$\mathsf{SetReg}(r,w')$ takes $r\mathrel{\trg{\mapsto^{?}}}w$ and gives back
$r\mathrel{\trg{\mapsto^{?}}}w'$. $\mathsf{Load}(\alpha)$ and
$\mathsf{Store}(\alpha,w')$ do the same with
$\alpha\mathrel{\trg{\mapsto^{?}}}w$.
The step handler, $\handler{step},$ places the same later modality as in
IMP around the continuation, recording logical progress.
Finally, $\mathsf{Exit}$ has no answer and hence no continuation, so
its handler requires a chosen assertion ${\trg\Phi}_{\mathsf{exit}}$, independently of ${\trg\Psi}$.
The postcondition ${\trg\Psi}$ applies when the graph returns through a visible
exit label. As for the source WP, we suppress the invariant mask and the
step modality parameter.

\paragraph{State correspondence.}
The simulation relates each source variable $x$ to the target cell $a_x$,
one variable at a time. For a set of variables $V$ it owns
\[
 \Corr_V:=\mathop{\scalebox{1.5}{$\ast$}}_{x\in V}\mathsf{Cell}(x),
 \qquad
 \mathsf{Cell}(x):=\exists w.\;\mathsf{Auth}_x(w)*a_x\mathrel{\trg{\mapsto^{?}}}w,
\]
where $\mathsf{Auth}_x(w)$ is the ghost authority of the source variable $x$
with entry $w$. The source points-to assertions remain with the source
specification. Since the authority agrees with every points-to assertion
$x\mathrel{\src{\mapsto^{?}_{q}}}w'$ for $x$, that is, $w=w'$, the
simulation knows the value of $x$ while the client owns $x$; and since it
also owns $a_x$ with the same entry, the source and target values agree.
$\Corr_V$ speaks only about the variables in $V$ and their cells, not about
the rest of the store or the memory.

The simulation also owns the temporary registers used by the compiled
command, $\mathsf{RegOwn}(X):=\mathop{\scalebox{1.5}{$\ast$}}_{r\in X}
\exists w.\;r\mathrel{\trg{\mapsto^{?}}}w$. Their contents may change during
evaluation, but their ownership is returned so subsequent computations can
reuse them.

We write $\Corr_{X,V}:=\mathsf{RegOwn}(X)*\Corr_V$
for these simulation resources.

\paragraph{Progress and administrative steps.}
The simulation must accommodate the target's extra register operations
without allowing it to perform administrative work forever.
We express this distinction using a mixed fixed point:
\begin{definition}[IMP--ASM simulation]\label{def:imp-asm-simulation}
For an Iris-valued return relation $R$,
\[
 \simIMPASM_R:=\nu G.\,\mu L.\,\simFunctor(G,L).
\]
The functor's clauses are given in~\Cref{fig:imp-asm-simulation}.
\end{definition}
The greatest fixed point $\nu$ permits an unbounded number of matched
visible events. The least fixed point $\mu$ bounds each intervening
stretch of silent steps and target register operations.
The clauses have three roles.

\emph{Matching effects.}
A source read $\mathsf{GetVar}(x)$ matches the target read
$\mathsf{Load}(a_x)$, and a source write $\mathsf{SetVar}(x,w)$ matches the
target write $\mathsf{Store}(a_x,w)$. For the matched pair, the simulation
lends $\mathsf{Cell}(x)$. For a read, the points-to assertion that the
source handler requires agrees with $\mathsf{Auth}_x$, so the target cell
holds the same entry and the two answers are equal. For a write, the full
points-to assertion and $\mathsf{Auth}_x$ are updated together, and so is
the target cell. Logical steps match on both sides; source undefined
behavior matches target exit. The resources obtained after a matched event,
here $\mathsf{Cell}(x)$, are passed to the continuation simulation
through $G$.

\emph{Allowing administrative work.}
The target may read or write registers while the source remains
unchanged. A register read requires ownership of the register and returns
it; a write requires full ownership and returns the register with its new
value. In both cases, the remaining computation is related through $L$,
so register bookkeeping cannot continue indefinitely between matched
events. Silent steps on either or both sides also recurse through $L$.

\emph{Finishing the simulation.}
When both computations return, their values and remaining resources must
satisfy $R$. For the compiler proof, we take this return relation to
reestablish $\Corr_{X,V}$, allowing the next command to start with the
same correspondence and reuse the registers.

\begin{figure}[tbp]
\centering
\footnotesize
\setlength{\abovedisplayskip}{5pt}
\setlength{\belowdisplayskip}{5pt}
\setlength{\abovedisplayshortskip}{3pt}
\setlength{\belowdisplayshortskip}{3pt}
\textbf{(a) The simulation functor.}
\[
\begin{aligned}
 \simFunctor(G,L)(s,t):={}&
 \bigl(\textit{Return:}\quad \exists u,v.\;\ulcorner s=\Ret{u}\urcorner*
                    \ulcorner t=\Ret{v}\urcorner*R(u,v)\bigr)\\
 {}\lor{}&\bigl(\textit{Source silent step:}\quad \exists s'.\;\ulcorner s=\Tau{s'}\urcorner*L(s',t)\bigr)\\
 {}\lor{}&\bigl(\textit{Target silent step:}\quad \exists t'.\;\ulcorner t=\Tau{t'}\urcorner*L(s,t')\bigr)\\
 {}\lor{}&\bigl(\textit{Register read:}\quad \exists r,k.\;\ulcorner t=\Vis{\mathsf{GetReg}(r)}{k}\urcorner\\
 &\qquad*\exists q,w.\;r\mathrel{\trg{\mapsto^{?}_{q}}}w
   *(r\mathrel{\trg{\mapsto^{?}_{q}}}w\wand L(s,k(w)))\bigr)\\
 {}\lor{}&\bigl(\textit{Register write:}\quad \exists r,w',k.\;
                 \ulcorner t=\Vis{\mathsf{SetReg}(r,w')}{k}\urcorner\\
 &\qquad*\exists w.\;r\mathrel{\trg{\mapsto^{?}}}w
   *(r\mathrel{\trg{\mapsto^{?}}}w'\wand L(s,k(\mathsf{tt})))\bigr)\\
 {}\lor{}&\bigl(\textit{Joint silent step:}\quad \exists s',t'.\;\ulcorner s=\Tau{s'}\urcorner*
                 \ulcorner t=\Tau{t'}\urcorner*L(s',t')\bigr)\\
 {}\lor{}&\bigl(\textit{Matched event:}\quad \exists A,B,e,f,k,h.\;
                 \ulcorner s=\Vis{e}{k}\urcorner*
                 \ulcorner t=\Vis{f}{h}\urcorner\\
 &\qquad* \mathcal{E} (e,f)*\mathcal{T}(e,f)
       *(\forall a,b.\;\mathcal{A}(e,a,f,b)\wand G(k(a),h(b)))\bigr).
\end{aligned}
\]

\medskip
\footnotesize
\setlength{\abovedisplayskip}{4pt}
\setlength{\belowdisplayskip}{4pt}
\setlength{\abovedisplayshortskip}{2pt}
\setlength{\belowdisplayshortskip}{2pt}
\textbf{(b) Event relations.}
\[
 \mathsf{Cell}(x):=\exists w.\;\mathsf{Auth}_x(w)*a_x\mathrel{\trg{\mapsto^{?}}}w,
 \qquad
 \Corr_V:=\mathop{\scalebox{1.5}{$\ast$}}_{x\in V}\mathsf{Cell}(x).
\]
\setlength{\tabcolsep}{4pt}
\begin{tabular}{@{}lccc@{}}
\hline
$(e,f)$ & $\mathcal{E}(e,f)$ & $\mathcal{T}(e,f)$ & $\mathcal{A}(e,a,f,b)$ \\
\hline
$(\mathsf{GetVar}(x),\mathsf{Load}(\alpha))$
 & $\ulcorner\alpha=a_x\urcorner$
 & $\mathsf{Cell}(x)$
 & $\ulcorner a=b\urcorner*\mathsf{Cell}(x)$ \\[2pt]
$(\mathsf{SetVar}(x,w),\mathsf{Store}(\alpha,w'))$
 & $\ulcorner\alpha=a_x\land w=w'\urcorner$
 & $\mathsf{Cell}(x)$
 & $\mathsf{Cell}(x)$ \\[2pt]
$(\mathsf{Step}_{\mathsf{src}},\mathsf{Step}_{\mathsf{trg}})$
 & $\mathsf{True}$ & $\mathsf{True}$ & $\mathsf{True}$ \\[2pt]
$(\mathsf{UB},\mathsf{Exit})$
 & $\mathsf{True}$ & $\mathsf{True}$ & \textit{no answers} \\[2pt]
All other pairs
 & $\mathsf{False}$ & $\mathsf{True}$ & $\mathsf{False}$ \\
\hline
\end{tabular}

\smallskip
The two $\mathsf{Step}$ events record source and target logical progress.
The UB--Exit pair has empty answer types.

\caption{The IMP--ASM simulation: (a) its single-step functor and
(b) the event relations for matched events.}
\label{fig:imp-asm-simulation}
\medskip
\begin{minipage}{\linewidth}
\footnotesize
\textbf{Notation and resource updates.}
In~\Cref{fig:imp-asm-simulation}, $\mathsf{Auth}_x(w)$ owns the full ghost
authority of the source variable $x$, recording the entry $w$. It agrees
with every source points-to assertion for $x$: from
$\mathsf{Auth}_x(w)*x\mathrel{\src{\mapsto^{?}_{q}}}w'$ we obtain $w=w'$.
Updating the entry requires both $\mathsf{Auth}_x$ and the full points-to
assertion; the matched write updates them together with the target cell.
Blue/red points-to arrows denote source/target ownership.
$\alpha\mathrel{\trg{\mapsto^{?}_{q}}}w$ owns a fraction $q$ of $\alpha$ with
an optional value $w$, the fraction is omitted when full, and
$\alpha\mathrel{\trg{\mapsto}}v$ abbreviates
$\alpha\mathrel{\trg{\mapsto^{?}}}\mathsf{Some}(v)$; likewise for source
assertions. Target memory and register ownership are distinguished by
the address or register on the left. The authorities over the whole store,
memory and register file appear only in the adequacy proofs of the logical
handlers, never in the simulation.
\end{minipage}
\end{figure}

\begin{lemma}[Transfer and algebraic properties]\label{lem:sim-algebra}
The relation $\simIMPASM_R$ satisfies \textup{\textsc{SIMret}},
\textup{\textsc{SIMtau}}, and \textup{\textsc{SIMvis}}, as in the hypotheses
of~\Cref{lem:abstract-wp-transfer}. Moreover, it
is compatible with weak bisimilarity (eutt), and is closed under bind
and iteration (assuming the source loop cannot silently restart forever).
\end{lemma}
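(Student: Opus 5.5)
The plan is to prove each property by unfolding the mixed fixed point $\simIMPASM_R=\nu G.\,\mu L.\,\simFunctor(G,L)$ and using the matching (co)induction principles. Throughout I use the unfolding equation $\simIMPASM_R(s,t)\dashv\vdash \mu L.\,\simFunctor(\simIMPASM_R,L)(s,t)$ and the fact that every clause of \Cref{fig:imp-asm-simulation}(a) is monotone in $G$ and $L$.

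\textbf{The three transfer conditions.} For \textsc{SIMret}, I would take the Return clause together with the register read/write clauses. A target tree that performs register events and then returns is related to $\Ret{u}$ through finitely many $L$-unfoldings. Each register step is justified by the target handler $\handler{reg}$, which gives the register points-to back. This leaves the postcondition unchanged, which is exactly the proviso in \textsc{SIMret}.

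For \textsc{SIMtau}, I would induct on the inner least fixed point. If $s=\Tau{s'}$, the Source and Joint silent-step clauses yield $s'$ directly. The Target silent step and the register clauses instead preserve $s=\Tau{s'}$ and shrink the $\mu$-derivation; I would discharge them with the induction hypothesis, wrapping the resulting target steps in the corresponding target WP steps.

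For \textsc{SIMvis}, I would again induct on $\mu$ to peel off the target's administrative prefix until the Matched-event clause applies. There I check handler compatibility row by row from the table.
\begin{itemize}
\item GetVar/Load: combine $\mathsf{Cell}(x)$ with the source fragment $x\mathrel{\src{\mapsto^{?}_{q}}}w'$. Agreement gives $w=w'$, so I can feed $a_x\mathrel{\trg{\mapsto^{?}}}w$ to $\handler{memory}$, obtain $a=b$, and return the cell.
\item SetVar/Store: perform a ghost update of $\mathsf{Auth}_x$ together with the full fragment, under the update modality present in $\mathsf{wpi}$, while $\handler{memory}$ updates $a_x$.
\item Step pairs: both handlers yield $\triangleright_m$, and these commute with the $G$-continuation.
\item UB/Exit: the source handler gives $\bot$, which proves anything.
\item All other pairs: $\mathcal{E}=\mathsf{False}$, so the case is vacuous.
\end{itemize}

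\textbf{Compatibility with eutt.} I would prove an up-to-eutt principle: if $\simIMPASM_R(s,t)$, $s\approx s_2$ and $t\approx t_2$, then $\simIMPASM_R(s_2,t_2)$. The proof is by coinduction on $\nu$ with an inner induction on $\mu$ nested with the inductive tau-peeling in the definition of eutt. Finitely many extra taus on either side are absorbed by the silent-step clauses, which live in $L$, so no guardedness is lost. This is the standard paco-style argument.

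\textbf{Closure under bind and iteration.} For bind, I would show that $\simIMPASM_{R'}(s,t)$ together with $\forall u,v.\,R'(u,v)\wand\simIMPASM_R(k\,u,h\,v)$ entails $\simIMPASM_R(\mathsf{bind}\,s\,k,\mathsf{bind}\,t\,h)$. The proof is by coinduction with an inner induction. At a Return leaf I splice in the continuation's own $\mu\nu$ derivation, which requires a \emph{parameterized} coinduction (the up-to-bind companion) so the continuation can re-enter the outer $\nu$ without a guard.

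For iteration, I would apply the bind result inside a coinductive argument on $\nu$. The hard case is when the loop body returns $\mathsf{inl}$ with no matched event in between. The next iteration must then be reached through $L$, and an unguarded $\nu$ step there is unsound. This is why the side condition is needed: the source loop cannot silently restart forever. I would use it as a well-founded measure on the number of consecutive silent restarts, inducting on it inside $\mu$. For IMP this holds because each iteration emits a $\stepE$, which is a Matched event and so provides the $\nu$ guard.

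\textbf{Main obstacle.} I expect iteration to be the main obstacle: getting the mixed $\nu\mu$ induction and coinduction to cooperate with up-to-bind, and phrasing the no-silent-restart hypothesis so that it yields a usable measure.
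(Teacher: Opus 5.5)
Your proposal is correct and follows the same structure as the paper. \textsc{SIMret}, \textsc{SIMtau} and \textsc{SIMvis} go by induction on the inner $\mu$: target silent steps and register events are absorbed, and the transport wand is then built from $\mathcal{T}(e,f)$ row by row. The write's joint update of $\mathsf{Auth}_x$ and the points-to is paid for by the $\mathsf{Upd}_{\emptyset}$ of \textsc{SIMvis}, since no handler clause offers one. The algebraic properties go by coinduction on $\nu$ with an inner induction on $\mu$, using well-foundedness of silent source restarts for iteration.

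Two details to sharpen. First, in \textsc{SIMtau} there is no WP, so the target-side cases are re-wrapped with the simulation's own target-silent-step and register clauses, not with target WP steps; the joint case re-adds the target $\mathsf{Tau}$ the same way. Second, for iteration a black-box use of bind hides whether a matched event preceded an $\mathsf{inl}$ return. You need to rerun its $\mu$-induction on the body while the source is still a $\mathsf{Tau}$-reduct of $f(i)$: an $\mathsf{inl}$ return in that phase gives $i'\prec i$, and a well-founded induction on $i$, placed between the coinduction and the $\mu$-induction, supplies the restart. After a matched event, the coinductive hypothesis covers the rest of the loop. Bind itself needs no parameterized coinduction: with $\simIMPASM_R$ included in the coinductive candidate, the continuation's derivation lifts by monotonicity in $G$.
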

The iteration assumption permits divergent loops that keep performing
visible events. It excludes infinitely many restarts with no such
progress. The precise well-foundedness condition and composition
lemmas appear in~\Cref{app:sim-algebra}.

\subsection{WP-Preservation for IMP-to-ASM}\label{sec:imp-asm-wp-pres}

We now use the simulation from~\Cref{def:imp-asm-simulation} to transport
IMP specifications to compiled programs. The central distinction is between
retaining an arbitrary source postcondition and obtaining assertions about target memory.

\paragraph{Transferring specifications.}
A source assertion such as $x\mathrel{\src{\mapsto}}v$ talks about the
source store, which the compiled program never touches; on its own, it says
nothing about target memory.
The preservation theorem therefore returns the source postcondition ${\src\Phi}$
together with what is needed to read it against target memory: for each
variable of the command, its source authority and its target cell, with the
same entry. From the target's point of view the authorities are ghost, since
no target instruction reads or writes them.
For a command $c$, let $V_c$ be its variables and $X_c$ the registers used
by $\mathsf{compile}(c)$. We write
$\Corr_c:=\Corr_{X_c,V_c}=\mathsf{RegOwn}(X_c)*\Corr_{V_c}$ for the
\emph{state correspondence} of $c$. It owns only the cells of the variables
of $c$ and the registers of its compiled code; no resource in it ranges
over the whole store or the whole memory.

\begin{theorem}[Closed-program WP preservation]
\label{thm:compiler-wp-pres}
For every IMP command $c$ and source postcondition ${\src\Phi}$,
\[
 \Corr_c*\wpIMP{c}{{\src\Phi}}
 \;\vdash\;
 \wpi{\trgH}{\denoteASM{\mathsf{compile}(c)}(0)}{\compProp{{\src\Phi}}_c},
\]
where $\compProp{{\src\Phi}}_c:=\Corr_c*{\src\Phi}(\mathsf{tt})$.
Source and target use the same invariant mask and step modality, and the
target exit postcondition is arbitrary.
\end{theorem}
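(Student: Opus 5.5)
The proof combines a compiler-specific simulation lemma with the abstract transfer principle. The central step is an auxiliary simulation result, stated and proved by structural induction on $c$:
\[
 \Corr_c \vdash \simIMPASM_{R_c}\bigl(\denoteIMP{c},\,\denoteASM{\mathsf{compile}(c)}(0)\bigr),
 \qquad R_c(u,v) := \Corr_c .
\]
The return relation $R_c$ holds of the unit source value and the unique exit label, and simply gives back the state correspondence. Given this lemma, \Cref{thm:compiler-wp-pres} follows by instantiating \Cref{lem:abstract-wp-transfer}. Its hypotheses \textsc{SIMret}, \textsc{SIMtau} and \textsc{SIMvis} hold for $\simIMPASM$ by \Cref{lem:sim-algebra}. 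For the final wand we take ${\trg\Psi} := \compProp{{\src\Phi}}_c$. Then $R_c(u,v) * {\src\Phi}(u) \wand \Corr_c * {\src\Phi}(\mathsf{tt})$ holds trivially, since $u = \mathsf{tt}$. The target exit postcondition never constrains anything: the only source of $\mathsf{Exit}$ is matched against source $\ubE$, and the source handler assigns $\bot$ to $\ubE$, which discharges any target obligation. This is also where we check, inside the proof of \textsc{SIMvis}, that the exit assertion can be chosen arbitrarily.

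The simulation lemma itself needs a strengthening so that the induction goes through. The correspondence must be allowed to be larger than $c$'s own, and expression compilation must respect the register discipline. Concretely, for any $X \supseteq X_c$ and $V \supseteq V_c$ we prove $\Corr_{X,V} \vdash \simIMPASM_{R_{X,V}}(\ldots)$, where $R_{X,V}$ returns $\Corr_{X,V}$. We then frame down by the monotonicity of big separating conjunctions.

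For expressions, we prove a separate lemma by induction on the expression $e$ and a base register $r$. Suppose the simulation owns $\Corr_V$ and the registers $\ge r$ used by the compiled expression. Then the source evaluation of $e$ is simulated by the compiled instruction list, with the source-side value $n$ related to a target state in which $r \mathrel{\trg{\mapsto}} n$. Each constant becomes $\mathsf{mov}$, handled by the register-write clause. Each variable read becomes a matched $\mathsf{GetVar}(x)/\mathsf{Load}(a_x)$: we lend $\mathsf{Cell}(x)$, obtain equal answers, and then perform $\mathsf{SetReg}$. A binary operator is handled by the two induction hypotheses at $r$ and $r+1$, followed by $\mathsf{GetReg}$/$\mathsf{SetReg}$ steps. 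All of this is glued together with the bind closure of \Cref{lem:sim-algebra}. The fact that lower registers are preserved is exactly what lets us keep ownership of $r$ while computing into $r+1$.

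The command cases then proceed as follows. Assignment is the expression lemma followed by a matched $\mathsf{SetVar}/\mathsf{Store}$ on $\mathsf{Cell}(x)$. Sequencing uses the denotation law expressing the compiled sequence as bind (up to eutt, which is handled by the eutt compatibility in \Cref{lem:sim-algebra}) together with bind closure; $R_{X,V}$ serves as the intermediate relation. Conditionals compile the guard into $r_0$, read it back, and then case split. Loops use the iteration law together with iteration closure.

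I expect the loop case to be the main obstacle, because iteration closure requires that the source loop not silently restart forever. My plan is to discharge this using the fact that each IMP loop iteration emits a $\stepE$ (or at least a guard read) before restarting. Such a visible event is matched on both sides, so progress goes through the $\nu$ of the simulation and the well-foundedness condition is met. A secondary nuisance is aligning the $\mathsf{fin}$-labels and $\mathsf{Tau}$ steps introduced by the graph combinators with the source denotation. I would discharge this by rewriting with the existing compiler denotation lemmas of Xia et al.\ under eutt before applying the closure properties.
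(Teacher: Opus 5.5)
Your proposal is correct and follows the paper's own proof. It proves the compiler simulation $\Corr_c \vdash \simIMPASM_{R_c}(\denoteIMP{c},\denoteASM{\mathsf{compile}(c)}(0))$ with $R_c(u,v):=\Corr_c$ by induction on $c$ for the generalized $\Corr_{X,V}$ ($X\supseteq X_c$, $V\supseteq V_c$), using eutt compatibility, bind closure and iteration closure, and then applies \Cref{lem:abstract-wp-transfer} once. Two minor points: justify the loop's well-foundedness condition only by the \stepE{} emitted after guard evaluation, which makes the source body never eutt to a bare \Ret{\mathsf{inl}(\cdot)}, and not by a guard read, which a constant guard as in $\mathsf{while}\ \mathsf{true}\ \mathsf{do}\ \mathsf{skip}$ never performs; and no final ``framing down'' is needed, since you simply instantiate $X:=X_c$, $V:=V_c$.
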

The state correspondence is thus returned, with the final entries of the
variables of $c$, together with the source postcondition.
Since the precondition owns nothing outside the cells of $V_c$ and the
registers $X_c$, the frame rule of the target logic applies to compiled
code: any target assertion owned separately, such as
$a_z\mathrel{\trg{\mapsto}}42$ for a variable $z$ that $c$ does not use, is
preserved by $\mathsf{compile}(c)$.

\begin{proof}[sketch]
The main step is the compiler simulation. Take
$R_c(u,v):=\Corr_c$. We establish
\[
 \Corr_c\;\vdash\;
 \simIMPASM_{R_c}
   (\denoteIMP{c},\denoteASM{\mathsf{compile}(c)}(0)).
\tag{Compiler simulation}\label{eq:compiler-simulation}
\]
The proof proceeds by induction on $c$, using~\Cref{lem:sim-algebra},
for the more general correspondence $\Corr_{X,V}$ with $X\supseteq X_c$ and
$V\supseteq V_c$, so that a subcommand runs with the correspondence of the
enclosing command. Skip restores the input resources directly. For
assignment, expression simulation accounts for temporary register use, and
the matched write updates the source authority and the target cell of the
assigned variable together. For a conditional, expression
simulation relates the source guard to the target branch test, after
which we use the selected branch's induction hypothesis.
Sequencing uses closure under bind to pass resources between commands.
The while case uses closure under iteration, relating source loop states
to target control-flow labels. The logical step following guard
evaluation rules out an infinite chain of silent restarts.

Applying~\Cref{lem:abstract-wp-transfer} to this simulation and the source
WP yields a target WP that returns the source postcondition together with
$\Corr_c$, that is, $\compProp{{\src\Phi}}_c$.
Soundness of the source proof system supplies the source WP premise
for derivable Hoare triples.\qed
\end{proof}

\paragraph{Obtaining target points-to assertions.}
In practice, a client of the compiled program wants target assertions rather
than a source postcondition read through the state correspondence.
An arbitrary Iris assertion need not admit such a translation, which is why
\Cref{thm:compiler-wp-pres} retains ${\src\Phi}$ together with the state
correspondence. Points-to assertions, however, do.
If the source postcondition owns a variable $x$ of $c$ containing $v$, the
authority of $x$ in $\Corr_c$ agrees with it, so the cell $a_x$ owned by
$\Corr_c$ holds $v$:
$x\mathrel{\src{\mapsto}}v*\mathsf{Cell}(x)\vdash a_x\mathrel{\trg{\mapsto}}v$.
For example, $x\mathrel{\src{\mapsto}}v*y\mathrel{\src{\mapsto}}w$ lets us
extract $a_x\mathrel{\trg{\mapsto}}v*a_y\mathrel{\trg{\mapsto}}w$, where $a_x$
and $a_y$ are the memory addresses of $x$ and $y$; the cells of the other
variables of $c$ are returned in $\Corr_c$.
This extraction uses the supplied target ownership; the source assertion
alone does not create target resources.
\Cref{lem:pointsto-transfer} states it precisely, and
\Cref{thm:compiler-wp-pres-frame} combines it with~\Cref{thm:compiler-wp-pres}.
Variables that $c$ does not use need no such detour: their target cells are
framed in the target logic, as above.

For a closed program, the target facts obtained this way, points-to
assertions about the final memory, could also be derived from the adequacy
of the source logic and a behavioral correctness theorem for the compiler.
What \Cref{thm:compiler-wp-pres} adds is the form of its conclusion, a
weakest precondition of the target logic, which composes with target-level
reasoning. In \Cref{sec:secomp}, it is what lets a compiled function
satisfy a target contract that other compiled modules rely on
(\Cref{thm:sep-fun-contract}).

\begin{example}[Fibonacci]\label{ex:fib}
Consider the following implementation of Fibonacci
\begin{center}
\small
$\begin{array}{@{}l@{}}
 \mathsf{FIB}(n) :\\
 \quad c:=1; \; x:=0; \; y:=1;\\
 \quad\mathsf{while}\ c\le n\ \mathsf{do}\
   (\mathit{temp}:=y; \; y:=\mathit{temp}+x; \; x:=\mathit{temp}; \;c:=c+1);\\
 \quad\mathit{fib}:=x
\end{array}$
\end{center}
and, writing $\mathrm{Fib}(k)$ for the $k$-th Fibonacci number, its specification
\[
  {\src\Phi}:=
     c\mathrel{\src{\mapsto}}n{+}1 *
     x\mathrel{\src{\mapsto}}\mathrm{Fib}(n) *
     y\mathrel{\src{\mapsto}}\mathrm{Fib}(n{+}1) *
     \mathit{fib}\mathrel{\src{\mapsto}}\mathrm{Fib}(n).
\]
We proved $P_{\mathsf{FIB}}\vdash\wpIMP{\mathsf{FIB}(n)}{{\src\Phi}}$,
where $P_{\mathsf{FIB}}$ owns $c$, $x$, $y$, $\mathit{temp}$ and
$\mathit{fib}$ with arbitrary, possibly uninitialized, values.
Instantiating~\Cref{thm:compiler-wp-pres} with $\mathsf{FIB}(n)$ and
extracting the points-to assertions of ${\src\Phi}$
(\Cref{thm:compiler-wp-pres-frame}) gives the target specification
\[ \Corr_{\mathsf{FIB}(n)}*P_{\mathsf{FIB}}\;\vdash\;
  \wpi{\trgH}{\denoteASM{\mathsf{compile}(\mathsf{FIB}(n))}(0)}{{\trg\Psi}}
\]
where
\[
\begin{aligned}
  {\trg\Psi}:={}&
     a_c\mathrel{\trg{\mapsto}}n{+}1 *
     a_x\mathrel{\trg{\mapsto}}\mathrm{Fib}(n) *
     a_y\mathrel{\trg{\mapsto}}\mathrm{Fib}(n{+}1) *
     a_{\mathit{fib}}\mathrel{\trg{\mapsto}}\mathrm{Fib}(n)\\
  &{}*\mathsf{RegOwn}(X_{\mathsf{FIB}(n)})*\mathsf{Cell}(\mathit{temp}).
\end{aligned}
\]
The precondition owns only the cells of the five variables of
$\mathsf{FIB}(n)$ and the registers of its compiled code. The postcondition
consists of target points-to assertions, the registers, and the
correspondence cell of $\mathit{temp}$, whose final value ${\src\Phi}$ does
not describe.

\smallskip
Since the precondition owns no other memory cell, the frame rule of the
target logic applies. For a variable $z$ that $\mathsf{FIB}(n)$ does not use,
whose target cell holds $42$,
$
  \Corr_{\mathsf{FIB}(n)}*P_{\mathsf{FIB}}*a_z\mathrel{\trg{\mapsto}}42
  \;\vdash\;
  \wpi{\trgH}{\denoteASM{\mathsf{compile}(\mathsf{FIB}(n))}(0)}{%
      {\trg\Psi}*a_z\mathrel{\trg{\mapsto}}42}.
$
\end{example}

\section{Separate Compilation}\label{sec:secomp}
We extend IMP, ASM and the compiler with modules and function calls, and
show that WP-preservation survives separate compilation and linking. Each
compiled function is related to its source by an open simulation, which
yields the preservation of function and module contracts and, once the
modules are linked, of the WP of the whole program
(\Cref{sec:secomp-wp-pres}). The transfer lemma
of~\Cref{sec:abstract-transfer} is reused unchanged. We illustrate the
results with two modules whose mutually recursive functions pass ownership
of a shared counter (\Cref{sec:secomp-counter}). \Cref{app:modules} in the 
appendix gives the full details.

\subsection{Modules, Calls, and Linking}\label{sec:secomp-syntax}
\begin{figure}[htbp]
\centering
\small
$
\begin{array}{@{}l@{\;}c@{\;}l@{\quad}l@{\;}c@{\;}l@{}}
M & ::= & \mathsf{module}\ N\ \{\vec{g};\,\vec{F};\,\mathsf{exp}\ \vec{\sigma};\,\mathsf{imp}\ \vec{\sigma}\}
  & \sigma & ::= & f/n\\[.4ex]
F_{\mathrm{IMP}} & ::= & f\mapsto\langle \vec{p},\vec{l},c,a\rangle
  & F_{\mathrm{ASM}} & ::= & f\mapsto\langle n,\vec{l},\mathit{regs},B\rangle\\[.4ex]
c & ::= & \dots\mid x:=\mathsf{call}\ f(\vec{a})
  & i & ::= & \dots\mid \mathsf{acall}\ r,f,\vec{o}\\[.4ex]
T & ::= & \{N.f\mapsto F,\dots\}=\mathsf{link}(\vec{M})
  & \mathit{program} & ::= & c\ \text{run with}\ T
\end{array}
$
\caption{Modules and calls, simplified (full definitions in
Appendix~\ref{app:modules}). A module $N$ declares globals $\vec{g}$,
functions $\vec{F}$, and exported ($\mathsf{exp}$) and imported
($\mathsf{imp}$) signatures $f/n$ (name and arity). An IMP function $f$ has
formal parameters $\vec{p}$, locals $\vec{l}$, a body $c$ and a result
expression $a$, evaluated after the body; ASM functions have an arity $n$, locals $\vec{l}$, the
registers $\mathit{regs}$ they use, and a body $B:\mathsf{asm}(1,1)$. Variables are
local, $y$, or global, $N.y$; we omit the module qualification of called
functions. A function table $T$, produced by linking modules $\vec{M}$, maps
the qualified name $N.f$ of each function to its definition; a program is an
entry command run with a table.}
\label{fig:modules-syntax}
\end{figure}
\paragraph{Modules.}
IMP and ASM programs are organized in modules of the same shape, parametric
in the kind of function they contain (\Cref{fig:modules-syntax}). A module has a name, a list of global
variables, a list of named functions, and the signatures (name and arity)
of the functions it exports and imports. Variables are either local, $x$,
or global, $M.x$, where $M$ is the module that declares $x$; memory
addresses follow the same scheme, and the address $a_x$ of a variable $x$
is the address of the same kind and name.

\paragraph{Functions and calls.}
An IMP function has a list of parameters, a list of local variables, a body
command, and a result expression evaluated after the body; there is no
return command. We extend IMP with the command
$x:=\mathsf{call}\ f(a_1,\dots,a_n)$, which first evaluates the arguments,
calls the function $f$, and assigns its result to $x$.
An ASM function has an arity, a list of local variables, the set of
registers it uses, and a body, an $\mathsf{asm}(1,1)$ graph whose
single exit returns to the caller, with the result in register $r_0$.
ASM gains a call instruction $\mathsf{acall}\ r,f,o_1,\dots,o_n$, which
passes the operands $o_i$ to $f$ and writes the result to $r$.

\paragraph{Function call events and linking.}
In both languages a call triggers the event $\mathsf{FnCall}(f,\vec{v})$,
which carries the callee and the argument values and returns the result.
For IMP, the call command then writes the result to $x$;
for ASM, the call instruction writes it to $r$. Calls are
thus visible events of the \itree of a function body, 
later interpreted by linking.

Linking a list of modules produces a \emph{function table}, a finite map
from qualified names $N.f$, for each function $f$ of a module $N$, to
function definitions. We write $\srcT$ for the table of the source modules
and $\trgT$ for the table of their compilation. A
\emph{resolver} then interprets the $\mathsf{FnCall}$ events of a program
recursively (using \texttt{interp\_mrec} from the \itree library). Each call performs a step, looks up the
callee in the table, checks the number of arguments, runs the body of the
callee, and returns its result. A call whose callee is missing, or has the
wrong arity, results in an undefined behavior in IMP and $\mathsf{Exit}$ in ASM.
A \emph{program} is thus an entry command $c$ run with a function table
$T$, that is, the resolver of $T$ runs $c$ and resolves each call by running the
body that $T$ assigns to the callee. The program is \emph{closed} when
every function called by $c$, or by a function of $T$, is defined in $T$.
The closed programs of~\Cref{sec:wp-pres} are the special case of a
command without calls and an empty table.

\paragraph{Calling convention.}
The caller $x:=\mathsf{call}\ f(a_1,\dots,a_n)$ is compiled to code that
evaluates each argument $a_i$ into register $r_{i-1}$, followed by
\[
  \mathsf{acall}\ r_0,f,r_0,\dots,r_{n-1};\ \mathsf{store}\ a_x,r_0,
\]
which calls $f$, receives the result in $r_0$, and stores it at the address
of $x$. The compiled callee starts with a prologue that stores
its parameters from $r_0,\dots,r_{n-1}$ to their local addresses, and ends
by computing its result expression into $r_0$.
The remaining parts of the convention are carried out by the target
resolver, which saves the values of the callee's local addresses, resets
them, writes the argument values to $r_0,\dots,r_{n-1}$, runs the body,
restores the saved values, and reads the result from $r_0$. ASM has no
activation records. A local address $a_x$ is a single memory cell, shared by
all functions that use the name $x$, and the resolver saves and restores
these cells so that nested calls do not overwrite the locals of their
callers. The source resolver performs the
same steps on the store, event for event, so that the two resolvers can be
related step by step.

\paragraph{Separate compilation.}
A module is compiled function by function. Hence $\trgT$ can be obtained either by linking the compiled modules or by
compiling each function of $\srcT$, and both yield the same table.

\subsection{Contracts}\label{sec:secomp-contracts}

A \emph{contract} $\gamma=\langle\mathsf{pre},\mathsf{post}\rangle$
specifies a single function by a precondition $\mathsf{pre}(a,\vec{v})$ over
its argument values and a postcondition $\mathsf{post}(a,r)$ over its result.
Both mention a logical witness $a$, chosen by the caller, which lets the
postcondition refer to what the caller handed over, e.g.\ to return a
resource with the value it had. A \emph{contract context} $\Gamma$ is a
partial map from function names to contracts, and $\Gamma(f)$ is the
contract for $f$, which callers rely on and which $f$ must be proved to
satisfy. 

Contexts give call events their logical meaning. 
For any $\mathsf{FnCall}(f,\vec{v})$, the call handler
$\handler{call}(\Gamma)$ looks up $\Gamma(f)=\langle\mathsf{pre},\mathsf{post}\rangle$,
gives away $\mathsf{pre}(a,\vec{v})$ for some witness $a$, and receives
$\mathsf{post}(a,r)$ for the result $r$; only calls to functions with a contract can be verified (\Cref{app:sep-call-rules}).

\paragraph{Contracts for separately compiled modules.}
For a list of linked modules, we distinguish between calls to functions
defined by one of them (\emph{internal} calls) and calls to functions defined
elsewhere (\emph{external} calls).
Contracts for internal calls are collected in $\srcGint$,
contracts for external calls in $\Gamma_{\mathit{ext}}$.
In the source, internal calls also lend the caller's local variables:
the ownership $\mathsf{LocalsAt}(W)$ of the locals $X$ of the link at
values $W$ known to the caller, which the callee returns with the same
values. Locals are shared cells that a callee may overwrite, and the
resolver saves and restores them around each call, so a caller keeps the
values of its locals across calls; we call such contracts \emph{scoped}.
External contracts are the same in source and target, while the target
contracts of internal calls additionally carry the state correspondence
$\Corr$ from caller to callee and back, similar in spirit
to~\Cref{thm:compiler-wp-pres} (see~\Cref{app:sep-contracts} for more
detail).\footnote{Throughout this section $\Corr$ is
$\Corr_{\mathit{Pool},V}=\mathsf{RegOwn}(\mathit{Pool})*\Corr_V$
(\Cref{sec:imp-asm-simulation}), which generalizes $\Corr_c$
of~\Cref{sec:wp-pres}: it owns a pool $\mathit{Pool}$ of registers shared
by all linked functions instead of the registers of a single command, and
the cells of a set $V$ of \emph{visible} variables, fixed for the link,
instead of the variables of a single command.}

\paragraph{Well-formed modules.}
A module is well formed if each function with a contract in
$\srcGint$ has a source body whose compilation is exactly its
entry in $\trgT$, and that body is proved against its contract in
the source logic, with scoped contracts for internal calls; the remaining
conditions concern arities and variable names (\Cref{app:sep-wf}). 

\subsection{WP-Preservation under Separate Compilation}\label{sec:secomp-wp-pres}

We write $\mathsf{run}_{T}(f,\vec{v})$ for one call of $f$ by the resolver
of a table $T$, which saves and resets the locals of $f$, passes $\vec{v}$,
runs the body of $f$, restores the locals, and returns the result, and
$\mathsf{link}_{T}(t)$ for the \itree $t$ whose calls to functions of $T$
are resolved recursively by $T$.

We write $\mathsf{link}^{\partial}_{T}(t)$ for the \itree obtained from $t$
by resolving its calls to functions of $T$ recursively, and leaving every
other call as an external $\mathsf{FnCall}$ event (\emph{partial}
resolution).

We write $\trgH_{\Gamma}$ for the ASM
logical handler whose calls are answered by $\handler{call}(\Gamma)$. 

We extend the notation $\wpIMP{c}{{\src\Phi}}$ of~\Cref{sec:poc} in two ways.
For an open program, whose calls are answered by the contracts of a source
context $\srcGamma$,
\[
 \wpIMPx{\srcGamma}{c}{{\src\Phi}}\;:=\;
 \wpi{\handler{call}(\srcGamma)\oplus\srcH}{\denoteIMP{c}}{{\src\Phi}},
\]
and for the program $c$ run with the source table $\srcT$,
\[
 \wpIMPx{\srcT}{c}{{\src\Phi}}\;:=\;
 \wpi{\srcH}{\mathsf{link}_{\srcT}(\denoteIMP{c})}{{\src\Phi}}.
\]
For a command without calls, both coincide with $\wpIMP{c}{{\src\Phi}}$.

We organize the results below as Compositional
CompCert~\cite{compositionalCompcert} organizes its correctness proof, with a
simulation for each compiled function, from which the per-module results and
a linking theorem follow. Unlike there, the results transport weakest
preconditions and contracts rather than behaviors.

\paragraph{Functions.}
\begin{lemma}[Open simulation and resolution]\label{lem:sep-fun-sim}
\leavevmode
\begin{enumerate}
\item \emph{Per function.} Each source function, run by the resolver of
$\srcT$, is related to its compiled function, run by the resolver of $\trgT$:
from $\Corr$, the two runs return equal results together with $\Corr$,
and the calls they make remain related events.
\item \emph{Under resolution.} If the source and target resolvers are
related at every call, resolving the calls of two related \itrees
recursively yields related \itrees.
\end{enumerate}
\end{lemma}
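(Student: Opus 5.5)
For part~1 we extend the IMP--ASM simulation of \Cref{def:imp-asm-simulation} with one new row in the event relation of \Cref{fig:imp-asm-simulation}(b). It matches $\mathsf{FnCall}(f,\vec v)$ with $\mathsf{FnCall}(f',\vec v')$, with $\mathcal{E}$ requiring $f=f'$ and $\vec v=\vec v'$, $\mathcal{T}$ lending $\Corr$, and $\mathcal{A}$ returning equal answers together with $\Corr$. Because $\Corr$ is lent and returned across the call, the per-function statement only has to track resources between calls, never inside the callee.

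With this row in place, we prove $\Corr\vdash\simIMPASM_{R}(\mathsf{run}_{\srcT}(f,\vec v),\mathsf{run}_{\trgT}(f,\vec v))$, where $R(u,w):=\ulcorner u=w\urcorner*\Corr$. The proof follows the structure of one resolver step, using closure under bind from \Cref{lem:sim-algebra} at each phase. The two resolvers perform the same sequence of work event for event: a \stepE{}, saving the locals of $f$ (source reads against target loads, matched through $\mathsf{Cell}$), resetting them, and the parameter writes. On the target these writes are register writes to $r_0,\dots,r_{n-1}$ followed by the prologue stores; they are absorbed as administrative steps under $L$, using $\mathsf{RegOwn}(\mathit{Pool})$, until the matched stores occur. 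The body comes next. It is handled by the compiler simulation of \Cref{thm:compiler-wp-pres}, generalized to $\Corr_{\mathit{Pool},V}$ and extended with a case for $x:=\mathsf{call}\ f(\vec a)$. That case evaluates the arguments with the expression simulation into the registers, emits the matched $\mathsf{FnCall}$, and finishes with the matched $\mathsf{SetVar}(x,\cdot)/\mathsf{Store}(a_x,r_0)$, after a register read of $r_0$. Last come evaluating the result expression into $r_0$ (source) and reading $r_0$ (target), and restoring the locals. The missing-callee and wrong-arity branches pair $\ubE$ with $\mathsf{Exit}$.

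For part~2 we argue by coinduction on the outer $\nu$. Write $\mathsf{link}^{\partial}$ as \texttt{interp\_mrec}. The hypothesis says that for related call events (internal callees) the unfolded bodies $\mathsf{run}_{\srcT}$ and $\mathsf{run}_{\trgT}$ are related under $\Corr$ with return relation ``equal $*\Corr$''. We then show a generalized statement
\[
\SIM_{R}(s,t)\vdash\SIM_{R}\bigl(\mathsf{link}^{\partial}_{\srcT}(s),\mathsf{link}^{\partial}_{\trgT}(t)\bigr).
\]
Its proof is an inner induction on $\mu L$ that commutes the interpreter with each clause of $\simFunctor$; the interpreter maps Tau to Tau and non-call events to themselves. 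At a matched internal call, the interpreter produces $\mathsf{Tau}(\mathsf{bind}\;(\mathsf{interp\_mrec}\;\mathsf{run}(f,\vec v))\;k)$ on each side. Here we use the coinduction hypothesis together with bind closure: the body's returned $\Corr$ plus equal answers is exactly the $\mathcal{A}$ needed to feed the continuation $\forall a,b.\,\mathcal{A}\wand G(k\,a,h\,b)$.

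I expect this internal-call case of part~2 to be the main obstacle. The recursive occurrence sits under a bind, not directly under $G$, so plain coinduction is not guarded. I would first try parameterized coinduction, i.e.\ an up-to-bind closure of $G$, proving bind closure for the $\nu\mu$ construction in a form that accepts $G$-related prefixes. If that fails, I would prove instead a stronger invariant that relates a whole stack of pending continuations, which is the usual technique for \texttt{interp\_mrec} in the \itree library. The equational step of unfolding \texttt{interp\_mrec} only holds up to eutt, so compatibility with eutt from \Cref{lem:sim-algebra} is used throughout.
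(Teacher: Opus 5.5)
Your part~1 follows the paper's argument (\Cref{lem:sep-fun-sim-formal}). It uses the same call row and the same return relation $\ulcorner u=w\urcorner*\Corr$. It composes the phases by bind in the same way, with the body handled by the command simulation extended with a call case (\Cref{lem:sep-command}). The gap is in part~2, at the spot you flagged, but both your diagnosis and your proposed remedies are off.

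The bind is harmless. Your coinduction predicate is phrased through $\simIMPASM$ of the unresolved trees. Bind closure of $\simIMPASM$ then yields $\simIMPASM_R(\mathsf{bind}(\mathsf{run}_{\srcT}(f,\vec v),k),\mathsf{bind}(\mathsf{run}_{\trgT}(f,\vec v),h))$ from three ingredients: the $\Corr$ lent by the call row, part~1, and the continuation wand.

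The real obstacle is that resolving the call consumes the matched event, so both resolved trees continue with a $\mathsf{Tau}$. In $\simFunctor$ a joint silent step recurses through the inductive $L$, never through $G$. Unless a further matched event is guaranteed before the next call, you never reach a position where the coinduction hypothesis may be used. Up-to-bind closures and continuation-stack invariants cannot create such a position. The \texttt{interp\_mrec} proofs of the \itree library work because the inserted $\mathsf{Tau}$ is a coinductive guard for \texttt{eutt}, and that is precisely what $\nu G.\,\mu L$ rules out.

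In fact, under your hypothesis the claim is false. Let the source and target $\mathsf{run}(f,[\,])$ both be $\Vis{\mathsf{FnCall}(f,[\,])}{\mathsf{Ret}}$. They are related from $\Corr$ through the call row, yet both resolved trees are $\mathsf{Tau}^\omega$, which $\simIMPASM$ does not relate.

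The missing ingredient is the hypothesis of the paper's linking law (\Cref{lem:sep-linking-law}): the resolvers perform related computations \emph{guarded by a step}. You list this \stepE{} in part~1 but never use it in part~2. After the $\mathsf{Tau}$, the matched $\mathsf{Step}_{\mathsf{src}}/\mathsf{Step}_{\mathsf{trg}}$ pair places the resolved remainders in $G$-position. There plain coinduction applies, via the bind-closure fact above with the remainder of the callee's run related by part~1, so no parameterized coinduction is needed.
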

The precise statements, with their side conditions on registers and
variables, are~\Cref{lem:sep-fun-sim-formal,lem:sep-linking-law}
in~\Cref{app:modules}.

For a source contract $\gamma=\langle\mathsf{pre},\mathsf{post}\rangle$, the
target contract
$\compProp{\gamma}=\langle\compProp{\mathsf{pre}},\compProp{\mathsf{post}}\rangle$
adds the scoped resources and the state correspondence
(\Cref{sec:secomp-contracts}). Its witness is a pair $(W,a)$ of the
caller's locals and the witness of $\gamma$, and
\[
\begin{aligned}
 \compProp{\mathsf{pre}}((W,a),\vec{v})&:=
   \Corr*\ulcorner\mathrm{dom}\,W=X\urcorner*\mathsf{LocalsAt}(W)*\mathsf{pre}(a,\vec{v}),\\
 \compProp{\mathsf{post}}((W,a),r)&:=
   \mathsf{post}(a,r)*\mathsf{LocalsAt}(W)*\Corr,
\end{aligned}
\]
with
$\mathsf{LocalsAt}(W):=\mathop{\scalebox{1.5}{$\ast$}}_{x\mapsto w\in W}x\mathrel{\src{\mapsto^{?}}}w$,
and $\srcGamma$ answers internal calls to $f$ by the scoped contract of
$\srcGint(f)$ and external calls by $\Gamma_{\mathit{ext}}$,
while $\trgGamma$ answers them by $\compProp{\srcGint(f)}$
and $\Gamma_{\mathit{ext}}$.

\begin{theorem}[Function contract preservation]\label{thm:sep-fun-contract}
Let $f$ have body $\mathit{fn}$ in $\srcT$ and the compiled body
$\mathsf{compile}(\mathit{fn})$ in $\trgT$, with registers in $\mathit{Pool}$
and visible variables, and with contract
$\srcGint(f)=\langle\mathsf{pre},\mathsf{post}\rangle$, and
let its body be proved against this contract in the source logic, with
scoped contracts for internal calls. For all argument values $\vec{v}$ of
the right length and all witnesses $(W,a)$,
\[
 \compProp{\mathsf{pre}}((W,a),\vec{v})
 \;\vdash\;
 \wpi{\trgH_{\trgGamma}}{\mathsf{run}_{\trgT}(f,\vec{v})}{\compProp{\mathsf{post}}((W,a),\cdot)}.
\]
\end{theorem}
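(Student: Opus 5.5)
The plan is to combine the per-function part of \Cref{lem:sep-fun-sim} with the abstract transfer lemma (\Cref{lem:abstract-wp-transfer}), instantiated at event signatures that include $\mathsf{FnCall}$ on both sides. The source handler is $\handler{call}(\srcGamma)\oplus\srcH$ and the target handler is $\trgH_{\trgGamma}$.

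\textbf{Source side.} Starting from the source proof of the body against $\langle\mathsf{pre},\mathsf{post}\rangle$, which uses scoped contracts for internal calls, I first derive a source WP for the whole run:
\[
 \mathsf{LocalsAt}(W)*\mathsf{pre}(a,\vec{v})\vdash
 \wpi{\handler{call}(\srcGamma)\oplus\srcH}{\mathsf{run}_{\srcT}(f,\vec{v})}{\lambda r.\,\mathsf{post}(a,r)*\mathsf{LocalsAt}(W)}.
\]
The source resolver saves the locals of $f$, resets them, writes the parameters, runs the body, restores the saved values and returns the result. I verify this with the store handler, given the ownership supplied by $\mathsf{LocalsAt}(W)$ with $\mathrm{dom}\,W=X$ and the arity hypothesis on $\vec{v}$. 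The save/restore steps are exactly what return $\mathsf{LocalsAt}(W)$ at the same values. Between them, the body's WP is inserted by the bind rule and by monotonicity in the handler.

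\textbf{Simulation side.} Part~1 of \Cref{lem:sep-fun-sim} gives
\[
 \Corr\vdash\simIMPASM_{R}\bigl(\mathsf{run}_{\srcT}(f,\vec{v}),\mathsf{run}_{\trgT}(f,\vec{v})\bigr),
 \qquad R(u,v):=\ulcorner u=v\urcorner*\Corr.
\]
Here the simulation is extended so that $\mathsf{FnCall}(g,\vec{w})$ matches $\mathsf{FnCall}(g,\vec{w})$. For this I must re-check \textsc{SIMvis} for the call events, because the source and target call handlers differ: the target one uses $\compProp{\srcGint(g)}$ for internal callees.

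\textbf{Main obstacle.} I expect the hard step to be the handler compatibility at call events. The idea is to let the simulation's $\mathcal{T}$ for a matched call lend $\Corr$, and let $\mathcal{A}$ require equal answers and give $\Corr$ back. At an internal call, the source handler consumes the scoped precondition $\mathsf{LocalsAt}(W')*\mathsf{pre}_g(a',\vec{w})$. Adding the lent $\Corr$ yields exactly $\compProp{\mathsf{pre}_g}((W',a'),\vec{w})$, where I choose the witness $(W',a')$ from the source witness. The target postcondition $\compProp{\mathsf{post}_g}$ then splits back into the source postcondition and $\Corr$, and $\Corr$ feeds the continuation simulation through $G$. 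External contracts coincide on both sides, so those calls need only the trivial lending. Delicate points are the treatment of the visible-variable cells when locals appear both in $\mathsf{LocalsAt}$ and in $\Corr$, which I will handle by keeping locals outside $V$, and threading the masks.

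\textbf{Conclusion.} With these pieces, I apply \Cref{lem:abstract-wp-transfer} with target postcondition $\compProp{\mathsf{post}}((W,a),\cdot)$. The side condition $R(u,v)*\mathsf{post}(a,u)*\mathsf{LocalsAt}(W)\wand\mathsf{post}(a,v)*\mathsf{LocalsAt}(W)*\Corr$ holds immediately from $u=v$. The precondition $\compProp{\mathsf{pre}}((W,a),\vec{v})$ supplies $\Corr$ for the simulation and the remaining resources for the source WP, which completes the entailment.
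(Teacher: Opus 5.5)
Your architecture matches the paper's proof, and your handler-compatibility computation at internal calls and your final side condition are correct. The one step that would fail is your fix for the ``delicate point''.

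The paper's route is the same as yours. It derives the source WP of $\mathsf{run}_{\srcT}(f,\vec{v})$ under $\srcGamma$ from the contract proof. It uses the per-function simulation of \Cref{lem:sep-fun-sim}, whose call row lends $\Corr$ and gets it back with equal answers. It checks that $\srcGamma$ and $\trgGamma$ answer related calls compatibly, as you do. It then applies \Cref{lem:abstract-wp-transfer} once, with the empty mask.

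The problem is that you propose to keep the locals outside $V$. That contradicts the hypothesis of the theorem: the variables of $f$, locals included, are visible (well-formedness, item 1), and in the counter example $V=[k;r;G.c]$. It also breaks the simulation you rely on. The resolver's save, reset and restore, and every access of the body to its locals, are source store events. Each must be matched with a target memory event on $a_x$, and the matched-event clause needs $\mathcal{T}=\mathsf{Cell}(x)$. Only $\Corr_V$ with $x\in V$ can supply that. Without it, nobody owns $a_x$ for the target handler either, so \Cref{lem:sep-fun-sim} is unavailable.

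The worry behind the fix is also unfounded. $\mathsf{LocalsAt}(W)$ consists of source fragments $x\mathrel{\src{\mapsto^{?}}}w$, whereas $\Corr$ holds $\mathsf{Auth}_x(w)$ and $a_x\mathrel{\trg{\mapsto^{?}}}w$. With the per-key authority camera these are separate resources that coexist. Their agreement, $\mathsf{Auth}_x(w)*x\mathrel{\src{\mapsto^{?}_{q}}}w'\vdash w=w'$, is exactly what lets a matched read return equal answers. It is also what lets a matched write update authority, fragment and target cell together. So keep the locals in $V$ and drop that fix; the rest of your plan then goes through as in the paper.
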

In words, the compiled code of $f$ satisfies its target contract, assuming
that all the functions it calls satisfy their own target contracts. The proof
applies~\Cref{lem:abstract-wp-transfer}, unchanged, to the simulation
of~\Cref{lem:sep-fun-sim} and to the source WP of $\mathsf{run}_{\srcT}(f,\vec{v})$
obtained from the contract proof of the body (\Cref{app:sep-proofs}).

\paragraph{Modules.}
Applying~\Cref{thm:sep-fun-contract} to every function of a module gives
the result at the granularity of modules.

\begin{corollary}[Per-module contract preservation]\label{cor:sep-module}
Let $M$ be a well-formed module. Then~\Cref{thm:sep-fun-contract} holds for
every function $f$ of $M$ with a contract in $\srcGint$.
\end{corollary}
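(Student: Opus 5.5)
The corollary follows by instantiating \Cref{thm:sep-fun-contract} once for each function $f$ of $M$ that has a contract in $\srcGint$. No new simulation or transfer argument is needed, because the theorem is already stated per function. The only work is to check that its hypotheses hold for every such $f$, given that $M$ is well formed in the sense of \Cref{sec:secomp-contracts} (\Cref{app:sep-wf}).

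First, fix $f$ in $M$ with $\srcGint(f)=\langle\mathsf{pre},\mathsf{post}\rangle$. Well-formedness states that the source body $\mathit{fn}$ of $f$ compiles exactly to the entry of $f$ in $\trgT$. By the remark on separate compilation in \Cref{sec:secomp-syntax}, this holds whether $\trgT$ is obtained by linking the compiled modules or by compiling each function of $\srcT$. Hence $\mathsf{compile}(\mathit{fn})$ is the body that $\mathsf{run}_{\trgT}(f,\cdot)$ executes.

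Second, well-formedness supplies a source proof of the body of $f$ against its contract, using scoped contracts for internal calls. This is precisely the logical hypothesis of \Cref{thm:sep-fun-contract}. Third, the remaining side conditions of the theorem concern registers and variables: the registers of the compiled body must lie in $\mathit{Pool}$, the variables must be visible, and arities must match. I would discharge these from the arity and variable-name clauses of well-formedness. For the register inclusion, I would argue that the registers emitted by $\mathsf{compile}$ for the body of $f$ belong to the shared pool fixed for the link. With all hypotheses in place, \Cref{thm:sep-fun-contract} yields the stated target WP for all $\vec{v}$ of the right length and all witnesses $(W,a)$.

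The main obstacle is this bookkeeping step. It requires matching the side conditions in \Cref{lem:sep-fun-sim-formal} exactly against the well-formedness clauses of \Cref{app:sep-wf}, in particular register-pool containment and visibility of the module's globals and locals. If well-formedness does not state register containment directly, I would first try to derive it from a lemma bounding the registers used by compiled expressions. That bound is the number of scratch registers needed by the deepest expression, and I would take $\mathit{Pool}$ to include it.
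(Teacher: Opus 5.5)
Your proposal is correct and follows the paper's argument: the corollary is \Cref{thm:sep-fun-contract} instantiated at each function of $M$ with a contract in $\srcGint$, and its hypotheses (exact compilation into the target table, the source contract proof with scoped internal calls, arity, and visibility of variables) come from the well-formedness clauses of \Cref{app:sep-wf}. The register-containment step you flag as the main obstacle needs no bounding lemma, because $\mathit{Pool}$ is defined in \Cref{app:sep-resources} as the union of the registers used by the linked functions; the paper adds only that, for the partial resolver, internal calls of $\mathsf{run}$ are re-tagged as internal events, which does not change their WP.
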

The corollary concerns $M$ alone, since each call of $M$ to a function of another
module is answered by the contract of that function, from
$\srcGint$ if the module is linked with $M$ and from
$\Gamma_{\mathit{ext}}$ otherwise.

\paragraph{Linked programs.}
Finally, we state the result for linked modules. Let $M_1,\dots,M_k$ be
the source modules, linked into $\srcT$ and compiled into $\trgT$, and let
$c$ be an entry command such that $c$ and every function of $\srcT$
have their registers in $\mathit{Pool}$ and their variables visible. A
source postcondition ${\src\Phi}$ for $c$ can be established against the
table $\srcT$ or against the contracts $\srcGamma$:

\begin{theorem}[Separate Compilation WP-preservation]\label{thm:sep-linking}
Let $M_1,\dots,M_k$ be well formed. For every source 
${\src\Phi}$, and for
$\compProp{{\src\Phi}}:=\Corr*{\src\Phi}(\mathsf{tt}):$
\begin{enumerate}
\item \emph{Separate compilation preserves the weakest preconditions of the
source program closed by linking,}
\[
 \Corr*\wpIMPx{\srcT}{c}{{\src\Phi}}
 \;\vdash\;
 \wpi{\trgH}{\mathsf{link}_{\trgT}(\denoteASM{\mathsf{compile}(c)}(0))}{\compProp{{\src\Phi}}}.
\]
\item \emph{Separate compilation preserves the weakest preconditions proved
against the contracts in $\srcGamma$,}
\[
 \Corr*\wpIMPx{\srcGamma}{c}{{\src\Phi}}
 \;\vdash\;
 \wpi{\trgH_{\Gamma_{\mathit{ext}}}}{\mathsf{link}^{\partial}_{\trgT}(\denoteASM{\mathsf{compile}(c)}(0))}{\compProp{{\src\Phi}}}.
\]
\end{enumerate}
\end{theorem}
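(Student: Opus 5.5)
Both parts follow the pattern of \Cref{thm:compiler-wp-pres}: first build a simulation between the linked source and target trees, then push the source WP through \Cref{lem:abstract-wp-transfer}. Throughout, the return relation is $R(u,v):=\Corr$.

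\emph{Part 1.} The command-level simulation is
\[
 \Corr\;\vdash\;\simIMPASM_{R}(\denoteIMP{c},\denoteASM{\mathsf{compile}(c)}(0)).
\]
We extend the induction of \eqref{eq:compiler-simulation} with a case for $x:=\mathsf{call}\ f(\vec a)$, with registers in $\mathit{Pool}$ and variables in $V$.

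\begin{enumerate}
\item \emph{Arguments.} Expression simulation evaluates the arguments into $r_0,\dots,r_{n-1}$, with the source stuttering.
\item \emph{Call.} The source $\mathsf{FnCall}(f,\vec v)$ is paired with the target $\mathsf{FnCall}(f,\vec v)$. The event relation requires equal callee and arguments, and the answer relation equal results together with $\Corr$.
\item \emph{Result.} The matched store writes the result to $x$, as in the assignment case.
\end{enumerate}

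We then resolve calls. Item~1 of \Cref{lem:sep-fun-sim}, applied to every function in $\srcT$, shows that the two resolvers are related at each call. Item~2 then gives
\[
 \Corr\vdash\simIMPASM_R\bigl(\mathsf{link}_{\srcT}(\denoteIMP{c}),\mathsf{link}_{\trgT}(\denoteASM{\mathsf{compile}(c)}(0))\bigr).
\]
Here we use the separate-compilation remark that $\trgT$ equals compiling $\srcT$ function by function. Every call is resolved in this setting, so the relation is the closed-event instance. \Cref{lem:sim-algebra} supplies \textsc{SIMret}, \textsc{SIMtau} and \textsc{SIMvis}. Applying \Cref{lem:abstract-wp-transfer} with the trivial wand $\Corr*\src\Phi(\mathsf{tt})\wand\compProp{\src\Phi}$ finishes Part~1.

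\emph{Part 2.} The target handler $\trgH_{\Gamma_{\mathit{ext}}}$ still contains external $\mathsf{FnCall}$ events, so \textsc{SIMvis} must also cover the call handler. For external pairs, the event relation is identity on $(f,\vec v)$ and the answer relation is equality of results plus $\Corr$. The handlers are compatible because $\srcGamma$ and the target context use the same $\Gamma_{\mathit{ext}}$ contract. The one wrinkle is that $\Corr$ stays with the simulation across the external call: the external contract never mentions it, so it can be framed around the call.

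Internal calls are the real issue. The source WP answers them by the scoped contracts in $\srcGamma$, whereas the target tree resolves them by running code. I would proceed in two steps.

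\begin{enumerate}
\item \emph{Reduce to linked source.} Use \Cref{cor:sep-module} and L\"ob induction to turn $\wpIMPx{\srcGamma}{c}{\src\Phi}$ into a WP of $\mathsf{link}^{\partial}_{\srcT}(\denoteIMP{c})$ under $\handler{call}(\Gamma_{\mathit{ext}})\oplus\srcH$. This is the source analogue of discharging the contracts of $\srcGint$ by proved bodies. The resolver's $\stepE$ at each call supplies the later needed to use the L\"ob hypothesis on recursive calls.
\item \emph{Simulate and transfer.} Apply the partial-resolution analogue of item~2 of \Cref{lem:sep-fun-sim}, then \Cref{lem:abstract-wp-transfer}.
\end{enumerate}

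An alternative would be to work purely on the target, using \Cref{thm:sep-fun-contract} for every internal $f$ and the same L\"ob argument. This needs the target contracts of internal calls to carry $\Corr$ and $\mathsf{LocalsAt}(W)$, which explains the shape of $\compProp{\gamma}$.

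I expect the main obstacle to be the L\"ob-style fixpoint argument that closes the mutual recursion between contract-answered calls and resolved calls. It requires a WP lemma for \texttt{interp\_mrec}, stating that if each body satisfies its contract assuming all callees do, then the resolved tree satisfies the WP. It also requires a careful check that the scoped resources $\mathsf{LocalsAt}(W)$ are restored exactly by the resolver's save and restore steps. I would first try to prove this once, generically over event signatures, in the style of the recursion laws of~\cite{plalacarte}.
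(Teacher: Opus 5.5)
Your Part~1 is the paper's argument: the open simulation of the entry command (the compiler induction extended with the call row, as in \Cref{lem:sep-command}), the pointwise relation of the resolvers from \Cref{lem:sep-fun-sim-formal}, the linking law \Cref{lem:sep-linking-law}, and one application of \Cref{lem:abstract-wp-transfer}.

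For Part~2, your primary route differs from the paper's. What you call the alternative is in fact the paper's proof, and it never builds a source WP of a linked program. \Cref{lem:sep-command} turns $\wpIMPx{\srcGamma}{c}{{\src\Phi}}$ directly into a target WP of $\denoteASM{\mathsf{compile}(c)}(0)$ under $\trgH_{\trgGamma}$. \Cref{cor:sep-module} gives every internal function its target contract $\compProp{\srcGint(f)}$. L\"ob induction then discharges the internal contracts of $\trgGamma$ while resolving by $\trgT$, leaving only $\Gamma_{\mathit{ext}}$. The later comes from the resolver's step at each call, which is why the step modality is fixed to $\mathsf{Later}$.

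Your route instead runs the L\"ob argument in the source logic and then transfers once, through a simulation of the two partially linked trees. This can work, but two points need fixing.
\begin{enumerate}
\item \Cref{cor:sep-module} cannot feed your step~1, since it is a statement about $\mathsf{run}_{\trgT}(f,\vec v)$ in the target logic. The source-side L\"ob argument needs the scoped source contract of $\mathsf{run}_{\srcT}(f,\vec v)$ under $\srcGamma$. This follows from the well-formedness obligation together with the resolver's save, reset and restore; the paper derives exactly this as the first step of the proof of \Cref{thm:sep-fun-contract}.
\item The paper has only the closed linking law. The partial analogue is new infrastructure you must prove, plausibly by the same coinduction. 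In it, calls outside $\trgT$ remain matched events, related by the call row with $\Corr$ framed around the $\Gamma_{\mathit{ext}}$ contract.
\end{enumerate}

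The paper's route needs neither a partial linking law nor a WP of the linked source program. Its L\"ob argument lives entirely in the target logic and uses $\trgT$ only through the target contracts of its functions, which are exactly the interface the section advertises. Your route mirrors item~1, with one simulation and one transfer, and runs L\"ob against the simpler source contracts that carry no $\Corr$. The price is the extra linking law and an internal detour through the linked source program. Both routes need the generic \texttt{interp\_mrec} WP lemma that you correctly identify as the main obstacle.
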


In both items the modules are compiled separately. In item~1 the source
proof concerns the linked, closed program; item~2 makes verification
separate as well. Each module is
verified once against the contracts (\Cref{cor:sep-module}) and the entry
command against the same contracts, so no proof looks at the linked source
program.
Item~1 follows from~\Cref{lem:sep-fun-sim}. By its first part the
resolvers are related at every call, so by its second part the two linked
programs are related, and~\Cref{lem:abstract-wp-transfer}
applies once. Item~2 follows from~\Cref{cor:sep-module} and L\"ob induction
over the steps performed by the resolver at each call. Both proofs are
in~\Cref{app:sep-closed}. 

\begin{remark}\label{rem:sep-target}
As in~\Cref{sec:wp-pres}, in both items the points-to assertions of
${\src\Phi}$ on visible variables can be extracted to target points-to
assertions. The shared counter of~\Cref{sec:secomp-counter} shows this, as its
final target postcondition contains $a_{G.c}\mathrel{\trg{\mapsto}}0$.
Moreover, an adequacy corollary discharges the source and target logical
handlers against the computational interpretation of their events, for the
compiled linked program run from initial states related by $\Corr$
(\Cref{cor:sep-adequacy}).
\end{remark}

\subsection{Case Study: a Shared Counter}\label{sec:secomp-counter}

We link three modules: $G$ declares a global counter $c$, and $M_1$ and
$M_2$ define functions $f$ and $g$ that call each other until $c$ reaches
zero,
\begin{center}
\small
$\begin{array}{@{}l@{}}
 \mathsf{module}\ G\ \{c;\,\epsilon;\,\mathsf{exp}\ \epsilon;\,\mathsf{imp}\ \epsilon\}\\[.4ex]
 \mathsf{module}\ M_1\ \{\epsilon;\,f\mapsto\langle\epsilon,[k,r],c_f,0\rangle;\,
   \mathsf{exp}\ M_1.f/0;\,\mathsf{imp}\ M_2.g/0\}\\[.4ex]
 \mathsf{module}\ M_2\ \{\epsilon;\,g\mapsto\langle\epsilon,[k,r],c_g,0\rangle;\,
   \mathsf{exp}\ M_2.g/0;\,\mathsf{imp}\ M_1.f/0\}
\end{array}$
\end{center}
where $\epsilon$ is the empty list and
\[
 c_f:=\mathsf{if}\ G.c=0\ \mathsf{then}\ \mathsf{skip}\
   \mathsf{else}\ (\mathsf{PRED}(G.c,k);\ r:=\mathsf{call}\ M_2.g()),
\]
and $c_g$ is the same with $M_1.f$ as callee. Here $\mathsf{PRED}(x,k)$
decrements $x$ by counting $k$ up to $x-1$ in a loop, and the result of
each call is assigned to the unused local $r$.
The entry command is $\mathit{main}:=r:=\mathsf{call}\ M_1.f()$.

Both functions have the same contract, which passes ownership of the
counter to the callee and returns it set to zero:
\[
 \srcGint(M_1.f)=\srcGint(M_2.g)
 =\bigl\langle\lambda\vec{v}.\,\ulcorner\vec{v}=[\,]\urcorner*
   \exists n.\,G.c\mathrel{\src{\mapsto}}n,\;
   \lambda\_.\,G.c\mathrel{\src{\mapsto}}0\bigr\rangle,
\]
and $\Gamma_{\mathit{ext}}$ is empty, since $\mathit{main}$ run with the
table of the three modules is a closed program.
Each body is proved against this contract in the source logic, assuming the contract
of the other function for the call it makes (\textsc{Call-Scoped}
in~\Cref{app:sep-call-rules}). The proofs of $f$ and $g$ are independent,
and the two modules are well formed. By~\Cref{cor:sep-module}, the compiled
code of each of them satisfies its target contract.

By item~2 of~\Cref{thm:sep-linking}, once the three compiled modules are
linked, the compiled entry command ends with the counter set to zero in
target memory:
\[
\begin{aligned}
 &\Corr*\mathsf{LocalPool}(\{k,r\})*
   \exists n.\,G.c\mathrel{\src{\mapsto}}n\\
 &\quad\vdash\;
 \wpi{\trgH_{\emptyset}}{\mathsf{link}^{\partial}_{\trgT}(\denoteASM{\mathsf{compile}(\mathit{main})}(0))}%
     {a_{G.c}\mathrel{\trg{\mapsto}}0*{\trg\theta}},
\end{aligned}
\]
where ${\trg\theta}$ collects the registers of $\mathit{Pool}$, the
correspondence cells of $k$ and $r$, and the source pool
$\mathsf{LocalPool}(\{k,r\})$ of $\mathit{main}$, whose locals are owned
with arbitrary values. The precondition owns only the cells of $k$, $r$
and $G.c$, not the rest of target memory. Both
functions use the locals $k$ and $r$, so their calls overwrite the same
memory cells $a_k$ and $a_r$; the save and restore performed by the
resolver keep each activation's values separate.

\section{Related Work}\label{sec:relatedW}

\paragraph{Specification-preserving compilation.}
Sch\"afer et~al.~\cite{axsemcomp} define the semantics of a
guarded-command and a goto language by total-correctness judgments
$\langle\sigma\rangle\,s\,\langle Q\rangle$ and prove in Rocq that their
compiler preserves these judgments; quantified over all $Q$, preservation
coincides with contextual approximation.
Omnisemantics~\cite{omnisemantics} generalize this judgment to the
omni-big-step judgment and state correctness as $t\Downarrow Q$ implies
$C(t)\Downarrow Q$; quantified over all $Q$, this is termination-preserving
behavioral refinement, as the authors prove by instantiating $Q$ with the
strongest postcondition. Across languages $Q$ is lifted through a state
relation $R$ to $\exists\mathit{src}.\,R\wedge Q$, the shape of our
$\compProp{\Phi}=\Corr*\Phi$; the Bedrock2 compiler is verified this
way~\cite{integrationVerification}.
In both lines $Q$ is a pure predicate over concrete states, $R$ relates
whole source and target states, and the results hold for terminating
programs only. Our $\Corr$ and $\Phi$ are Iris assertions over different
states and logical handlers. $\Corr_c$ owns only the cells of the variables
of $c$ and the registers of its compiled code, so the transported
specification keeps a small footprint and the target frame rule applies
(\Cref{ex:fib}); it carries ownership and ghost state, accounts for
registers that have no source counterpart, composes with other target-level
Iris proofs, covers divergence, and extends to contracts under separate
compilation.

\paragraph{Proof-transforming compilation and certificate translation}
take source derivations as input and transport the proof of one
specification. Barthe et~al. show that a non-optimizing compiler preserves
generated proof obligations exactly~\cite{obligationsPres}, and transport
certificates through an imperative-to-RTL compiler with optimizations by
strengthening annotations with certified analysis
results~\cite{certTransComp}; Barthe and Kunz~\cite{certTransAbs}
generalize the approach to an abstract notion of certificate translation. Saabas and
Uustalu~\cite{compositionalNaturalSemantics} compile Hoare proofs of While
programs, together with the programs, into proofs for a low-level language
with jumps. Nordio et~al. compile Eiffel contracts and Hoare-style proofs to
.NET CIL~\cite{proofTrans} and translate separation-logic derivations for
object-oriented programs into bytecode proofs~\cite{certTransNordio}.
Seo et~al.~\cite{potpie} compile programs, specifications and Hoare proofs
together, with a Rocq proof that the proof compiler is sound whenever the
compiler satisfies a few equational conditions, even if it is incorrect
elsewhere. All of these translate syntactic objects, proofs or
verification conditions, into a target logic designed to mirror the source
one; only Seo et~al. report a mechanization. Our compiler operates on
programs alone and preserves every semantic WP, however established; the
transfer relates different event signatures and logical handlers, accounts
for ownership, and extends to separate compilation. Like the compiler
of~\cite{obligationsPres}, ours does not optimize, but we require a
simulation rather than identical proof obligations.

\paragraph{Behavioral preservation and separate compilation.}
Semantics preservation is the criterion of CompCert~\cite{compcert}
and CakeML~\cite{cakeML}: every behavior of the compiled program is a
behavior of the source. Compositional CompCert~\cite{compositionalCompcert}
extends it to separately compiled modules through structured simulations
with module-local invariants, and CompCertO~\cite{CompCertO} through
simulation conventions that relate source and target interfaces, calling
conventions included. Song et~al.~\cite{conditionalContextualRefinement}
verify \itree modules against abstract ones under separation-logic
conditions on the modules they call, and reach assembly through CompCert;
Zhang et~al.~\cite{threeDimensionalRefinement} organize such results in a
refinement algebra along modules, abstraction levels and state components.
In all of these, what is preserved is behavioral refinement, and
conditions, where present, live at the source, so the compiled code receives
no contract, whereas \Cref{thm:sep-fun-contract} derives one for each
compiled function from its source contract.

\paragraph{End-to-end verification.}
The Verified Software Toolchain connects Verifiable C, a separation logic
for CompCert's Clight, to CompCert through adequacy, as logical proofs imply
properties of the source semantics and CompCert's correctness theorem
transfers those behavioral guarantees to assembly~\cite{VST}. Mansky and
Du rebuild Verifiable C on Iris~\cite{vstiris}, and Archmage follows the
same architecture for programs that cast between integers and pointers,
with a memory model shared by the logic and the CompCertCast
compiler~\cite{archmage}. In all three, the guarantee at the assembly
level is behavioral, and there is no specification in a logic of the target
that a target-level proof, or another compiled module, could use.
WP-preservation yields such a specification, and in \Cref{sec:secomp} a
target contract for each compiled function.

\paragraph{Interaction trees and compiler verification.}
Xia et~al.~\cite{itrees} prove their IMP-to-ASM compiler correct as weak
bisimilarity between source and target \itree denotations interpreted into
a common state monad; Silver et~al.~\cite{itreeNI} extend the result to
noninterference with an event-sensitive bisimulation. Vellvm scales the
approach to a substantial subset of LLVM IR~\cite{vellvmExperience}, and
Torrini and Gr\'egoire~\cite{itreescompil} to more than twenty passes of
the Jasmin front end, with relational Hoare-logic proofs of equivalence
\emph{up to cutoff}, which relates source undefined behavior with arbitrary
target behavior; probabilistic \itree semantics later extend this to the
whole Jasmin compiler and to preservation of KEM-IND-CCA
security~\cite{jasminITrees}. DimSum~\cite{dimsum} specifies each language
independently and relates them through protocols, with a separation logic
for modular refinement proofs. All these results are bisimulations or
refinements of behaviors; none transports a WP to the target logic.

\paragraph{Proof techniques similar to ours.}
Simuliris~\cite{simuliris} internalizes a simulation over small-step
semantics in a non-step-indexed Iris logic, with ownership reasoning for
fair termination-preserving contextual refinement of concurrent
optimizations within one language. Its simulation nests a least fixed point
for stuttering inside a greatest fixed point, as our $\nu G.\,\mu L$ does
for administrative target steps; FreeSim~\cite{freesim} generalizes such
stuttering to progress recorded on each side independently. The Tail Modulo
Cons proof~\cite{tailmodulocons} builds on Simuliris to verify a compiler
pass for all inputs. It relates calls of source and target through
protocols, as our call row does, and hands ownership of an uninitialized
destination field to the target callee, recovering it filled with a value
related to the source result; we do the same with $\Corr$ across matched
events and calls, against logical handlers rather than an operational state
interpretation. These works establish behavioral refinement; our simulation
is defined over \itrees, at the semantic level of the WP calculus, so that
each step is tied to the logical specification of an event, and it
establishes WP-preservation across a lowering.

\paragraph{Secure compilation.}
Van Strydonck et~al.~\cite{lcap4fac} compile separation-logic-verified
code using linear capabilities and boundary checks, and prove their compiler
fully abstract~\cite{fac,surveySECOMP}. A compiler is fully abstract when
it preserves and reflects contextual equivalence. Full abstraction,
however, is known not to guarantee that a property of an individual source
program, and hence a specification, holds after
compilation~\cite{beyondFA,cond2fac,logicalFAC}.
Robust compilation~\cite{beyondFA,tracerel,phdAbate} instead preserves
properties of traces against adversarial contexts. The framework of
Abate et al.~\cite{tracerel} in principle allows traces to carry information
about resource ownership, but robust compilation has so far been achieved
for traces of I/O events~\cite{goodCompBad,SECOMP,misquoted}.
M\"uck et~al.~\cite{robustdimsum} use DimSum to verify a hardening compiler
to a capability machine, proving that Reckon specifications are robustly
safe after compilation.
We do not consider adversarial contexts, but preserve separation-logic
specifications, including ownership, as weakest preconditions of the
compiled code.

\section{Conclusions and Future Work}\label{sec:conclusions}

We investigated WP-preserving compilation as a way of transporting
source-level logical interfaces to compiled code. Unlike trace-based
semantics preservation~\cite{compcert}, WP-preservation does not relate 
the observable behaviors of source and target programs, and unlike compilers that
preserve predicates over whole final states~\cite{axsemcomp,omnisemantics},
it transports judgments of a program logic, with ownership and a footprint
limited to what the compiled code uses. It
relates specifications provable in a separation logic for the source
program, including ownership of resources, to specifications that hold for
the compiled target program.
Our main proof technique is a resource-aware simulation over interaction
trees (\Cref{sec:wp-pres}). The generic WP-transfer lemma isolates the
properties that a simulation must satisfy to transport weakest preconditions
between \itrees with potentially different event signatures, return types,
and logical handlers.
We instantiated this simulation for the IMP and ASM languages and proved that
the denotations of every IMP program and its compiled ASM program are related.
This case study demonstrates that logical specifications can survive compilation
from a structured source language to a control-flow-graph language, even when
the two semantics expose different effects and the target performs additional
administrative operations.
The target precondition owns only the memory cells of the variables of the
compiled command and the registers of its code, so target-level clients can
frame the rest of target memory.
Finally, we extended WP-preservation from closed programs to separately
compiled modules (\Cref{sec:secomp}): each module is verified in the source
logic against the contracts of its imports and compiled separately, and its
specifications carry over to the compiled module and are preserved when
compiled modules are linked, and callers keep the values of their locals
across calls.

\paragraph{Limitations.}
For separate compilation, \Cref{thm:sep-linking} covers closed programs
whose target modules are all produced by our compiler, while open modules
are covered only through contracts (\Cref{cor:sep-module}), relative to a
register pool, local pool and set of visible variables fixed for the whole
link. In particular, the target contract of a call says that the callee may
overwrite every register of the pool, so target-level clients cannot keep
values in registers across calls. ASM has no activation records: the
resolver saves and restores the locals of a callee, and this saving is not
proved correct with respect to a stack-allocated semantics.

\paragraph{Future work.} 
A first, mild generalization concerns the state correspondence, which
relates each source variable to one target cell holding the same value. A
value relation $\rho$ and a location map $\ell$ would do, with
$\mathsf{Cell}_\rho(x):=\exists w,u.\,\mathsf{Auth}_x(w)*\ell(x)\mathrel{\trg{\mapsto^{?}}}u*\ulcorner\rho(w,u)\urcorner$:
the target postcondition $\Corr_\rho*{\src\Phi}$ is then the image of
${\src\Phi}$ along the induced relation between stores and memories, in the
sense of trace-relating compilation~\cite{tracerel}, and a source
$x\mathrel{\src{\mapsto}}v$ yields
$\exists u.\,\ulcorner\rho(v,u)\urcorner*\ell(x)\mathrel{\trg{\mapsto}}u$ in
the target. The transfer lemma is unaffected; the event relations of the
simulation and the points-to extraction would change, and a location map
that depends on the program point would be needed for register allocation.
Similarly, points-to assertions are extracted to the target only at full
ownership; extracting a fraction $x\mathrel{\src{\mapsto_{q}}}v$ as
$a_x\mathrel{\trg{\mapsto_{q}}}v$ would let $\mathsf{Cell}(x)$ hand out
shares of its target cell, a change confined to the extraction lemmas.

More ambitious future work concerns the generic transfer lemma, whose
conditions we satisfy with a simulation specific to the events of IMP and
ASM. A parameterized resource-aware \itree relation, in the style of
\texttt{eutt}, \texttt{rutt}~\cite{itrees} or \texttt{xrutt}~\cite{itreescompil},
would provide the transfer, bind and iteration principles once,
and let the results for individual passes compose into an end-to-end theorem.

WP-preservation for separate compilation currently links compiled modules
only. Extending it to arbitrary, possibly adversarial target contexts, i.e.\ to
the robust preservation of safety specifications~\cite{beyondFA}, requires
restricting what target modules may do, for instance by a static check at
link time or by a capability mechanism in ASM~\cite{robustdimsum}.

\subsubsection*{Acknowledgments}
This work has been supported by funding from the Agentur f\"ur Innovation in
der Cybersicherheit GmbH (Cyberagentur).

\clearpage 
\bibliographystyle{splncs04}
\bibliography{paper}

\begin{thebibliography}{10}
\providecommand{\url}[1]{\texttt{#1}}
\providecommand{\urlprefix}{URL }
\providecommand{\doi}[1]{https://doi.org/#1}

\bibitem{fac}
Abadi, M.: Protection in programming-language translations. In: International
  Colloquium on Automata, Languages, and Programming. pp. 868--883. Springer
  (1998)

\bibitem{phdAbate}
Abate, C.: A formal framework for correct and secure compilation. {PhD} thesis,
  Ruhr-Universit{\"a}t Bochum, Universit{\"a}tsbibliothek (2023).
  \doi{10.13154/294-9870}

\bibitem{goodCompBad}
Abate, C., Azevedo~de Amorim, A., Blanco, R., Evans, A.N., Fachini, G.,
  Hri{\c{t}}cu, C., Laurent, T., Pierce, B.C., Stronati, M., Tolmach, A.: When
  good components go bad: Formally secure compilation despite dynamic
  compromise. In: Proceedings of the 2018 ACM SIGSAC Conference on Computer and
  Communications Security. pp. 1351--1368 (2018)

\bibitem{tracerel}
Abate, C., Blanco, R., Ciob{\^a}c{\u{a}}, {\c{S}}., Durier, A., Garg, D.,
  Hri{\c{t}}cu, C., Patrignani, M., Tanter, {\'E}., Thibault, J.: An extended
  account of trace-relating compiler correctness and secure compilation. ACM
  Transactions on Programming Languages and Systems (TOPLAS)  \textbf{43}(4),
  1--48 (2021)

\bibitem{beyondFA}
Abate, C., Blanco, R., Garg, D., Hri{\c{t}}cu, C., Patrignani, M., Thibault,
  J.: Journey beyond full abstraction: Exploring robust property preservation
  for secure compilation. In: 2019 IEEE 32nd Computer Security Foundations
  Symposium (CSF). pp. 256--271. IEEE (2019)

\bibitem{tailmodulocons}
Allain, C., Bour, F., Cl{\'e}ment, B., Pottier, F., Scherer, G.: Tail modulo
  cons, {OCaml}, and relational separation logic. Proc. ACM Program. Lang.
  \textbf{9}(POPL) (Jan 2025). \doi{10.1145/3704915}

\bibitem{misquoted}
Andrici, C.C., Pribisova, A., Ahman, D., Hri{\c{t}}cu, C., Rivas, E.,
  Winterhalter, T.: Misquoted no more: Securely extracting {F*} programs with
  {IO}. Proceedings of the ACM on Programming Languages  \textbf{10}(ICFP),
  494--523 (2026)

\bibitem{VST}
Appel, A.W.: Program Logics for Certified Compilers. Cambridge University Press
  (2014)

\bibitem{jasminITrees}
Arranz-Olmos, S., Barthe, G., Blatter, L., Gr{\'e}goire, B., Laporte, V.,
  Torrini, P.: {KEM-IND-CCA}-preserving compilation of {Jasmin}'s {ML-KEM}
  (2026), \url{https://arxiv.org/abs/2511.11292}

\bibitem{certTransComp}
Barthe, G., Gr{\'e}goire, B., Kunz, C., Rezk, T.:
  Certificate translation for optimizing compilers. In: Yi, K. (ed.) Static
  Analysis. pp. 301--317. Springer Berlin Heidelberg, Berlin, Heidelberg (2006)

\bibitem{certTransAbs}
Barthe, G., Kunz, C.: An abstract model of certificate translation. ACM Trans.
  Program. Lang. Syst.  \textbf{33}(4) (Jul 2011).
  \doi{10.1145/1985342.1985344}

\bibitem{obligationsPres}
Barthe, G., Rezk, T., Saabas, A.: Proof obligations preserving compilation. In:
  International Workshop on Formal Aspects in Security and Trust. pp. 112--126.
  Springer (2005)

\bibitem{vellvmExperience}
Beck, C., Chen, H., Zdancewic, S.: {Vellvm}: Formalizing the informal {LLVM}
  (experience report). In: NASA Formal Methods ({NFM} 2025). Lecture Notes in
  Computer Science, vol. 15682, pp. 91--99. Springer (2025).
  \doi{10.1007/978-3-031-93706-4_6}

\bibitem{omnisemantics}
Chargu{\'e}raud, A., Chlipala, A., Erbsen, A., Gruetter, S.: Omnisemantics:
  Smooth handling of nondeterminism. {ACM} Transactions on Programming
  Languages and Systems  \textbf{45}(1),  5:1--5:43 (2023).
  \doi{10.1145/3579834}

\bibitem{freesim}
Cho, M., Song, Y., Lee, D., G{\"a}her, L., Dreyer, D.: Stuttering for free.
  Proc. ACM Program. Lang.  \textbf{7}(OOPSLA2) (Oct 2023).
  \doi{10.1145/3622857}

\bibitem{dijkstra75}
Dijkstra, E.W.: Guarded commands, nondeterminacy and formal derivation of
  programs. Communications of the ACM  \textbf{18}(8),  453--457 (1975)

\bibitem{integrationVerification}
Erbsen, A., Gruetter, S., Choi, J., Wood, C., Chlipala, A.: Integration
  verification across software and hardware for a simple embedded system. In:
  Proceedings of the 42nd {ACM} {SIGPLAN} International Conference on
  Programming Language Design and Implementation ({PLDI} 2021). pp. 604--619.
  {ACM} (2021). \doi{10.1145/3453483.3454065}

\bibitem{simuliris}
G{\"a}her, L., Sammler, M., Spies, S., Jung, R., Dang, H.H., Krebbers, R.,
  Kang, J., Dreyer, D.: Simuliris: a separation logic framework for verifying
  concurrent program optimizations. Proc. ACM Program. Lang.  \textbf{6}(POPL)
  (Jan 2022). \doi{10.1145/3498689}

\bibitem{iris}
Jung, R., Krebbers, R., Jourdan, J., Bizjak, A., Birkedal, L., Dreyer, D.:
  {Iris} from the ground up: {A} modular foundation for higher-order concurrent
  separation logic. J. Funct. Program.  \textbf{28}, ~e20 (2018).
  \doi{10.1017/S0956796818000151}

\bibitem{archmage}
Kim, Y., Cho, M., Lee, J., Kim, J., Yoon, T., Song, Y., Hur, C.K.: Archmage and
  {CompCertCast}: End-to-end verification supporting integer-pointer casting.
  Proc. ACM Program. Lang.  \textbf{9}(POPL) (Jan 2025). \doi{10.1145/3704881}

\bibitem{CompCertO}
Koenig, J., Shao, Z.: {CompCertO}: compiling certified open {C} components. In:
  Proceedings of the 42nd ACM SIGPLAN International Conference on Programming
  Language Design and Implementation. pp. 1095--1109. PLDI 2021, Association
  for Computing Machinery, New York, NY, USA (2021).
  \doi{10.1145/3453483.3454097}

\bibitem{compcert}
Leroy, X.: Formal certification of a compiler back-end or: programming a
  compiler with a proof assistant. In: Morrisett, J.G., Jones, S.L.P. (eds.)
  Proceedings of the 33rd {ACM} {SIGPLAN-SIGACT} Symposium on Principles of
  Programming Languages, {POPL} 2006, Charleston, South Carolina, USA, January
  11-13, 2006. pp. 42--54. {ACM} (2006). \doi{10.1145/1111037.1111042}

\bibitem{logicalFAC}
Longley, J.R., Plotkin, G.: Logical full abstraction and {PCF}. In: Proof,
  Language, and Interaction: Essays in Honour of Robin Milner. MIT Press (2000)

\bibitem{vstiris}
Mansky, W., Du, K.: An {Iris} instance for verifying {CompCert} {C} programs.
  Proc. {ACM} Program. Lang.  \textbf{8}({POPL}),  148--174 (2024).
  \doi{10.1145/3632848}

\bibitem{robustdimsum}
M{\"u}ck, N., Georges, A.L., Dreyer, D., Garg, D., Sammler, M.: Endangered by
  the language but saved by the compiler: Robust safety via semantic
  back-translation. Proc. ACM Program. Lang.  \textbf{10}(POPL) (Jan 2026).
  \doi{10.1145/3776682}

\bibitem{certTransNordio}
Nordio, M., Calcagno, C., Meyer, B.: Certificates and separation logic. In:
  International Symposium on Trustworthy Global Computing. pp. 273--293.
  Springer (2013)

\bibitem{proofTrans}
Nordio, M., M{\"u}ller, P., Meyer, B.: Proof-transforming compilation of
  {Eiffel} programs. In: International Conference on Objects, Components,
  Models and Patterns. pp. 316--335. Springer (2008)

\bibitem{seplogicOH}
O'Hearn, P.: Separation logic. Commun. ACM  \textbf{62}(2),  86--95 (Jan 2019).
  \doi{10.1145/3211968}

\bibitem{seplogicOH07}
O’Hearn, P.W.: Resources, concurrency, and local reasoning. Theoretical
  Computer Science  \textbf{375}(1),  271--307 (2007).
  \doi{10.1016/j.tcs.2006.12.035}, festschrift for John C. Reynolds’s 70th
  birthday

\bibitem{seplogicParkinson}
Parkinson, M., Bierman, G.: Separation logic and abstraction. In: Proceedings
  of the 32nd ACM SIGPLAN-SIGACT symposium on Principles of programming
  languages. pp. 247--258 (2005)

\bibitem{cond2fac}
Parrow, J.: General conditions for full abstraction. Mathematical Structures in
  Computer Science  \textbf{26}(4),  655--657 (2016)

\bibitem{colors}
Patrignani, M.: Why should anyone use colours? or, syntax highlighting beyond
  code snippets (2022), \url{https://squera.github.io/misc/colours-1.pdf},
  accessed: 2026-09-01

\bibitem{surveySECOMP}
Patrignani, M., Ahmed, A., Clarke, D.: Formal approaches to secure compilation:
  A survey of fully abstract compilation and related work. ACM Computing
  Surveys (CSUR)  \textbf{51}(6),  1--36 (2019)

\bibitem{seplogicReynolds}
Reynolds, J.: Separation logic: a logic for shared mutable data structures. In:
  Proceedings 17th Annual IEEE Symposium on Logic in Computer Science. pp.
  55--74 (2002). \doi{10.1109/LICS.2002.1029817}

\bibitem{compositionalNaturalSemantics}
Saabas, A., Uustalu, T.: A compositional natural semantics and {Hoare} logic
  for low-level languages. Theoretical Computer Science  \textbf{373}(3),
  273--302 (2007). \doi{10.1016/j.tcs.2006.12.020}

\bibitem{dimsum}
Sammler, M., Spies, S., Song, Y., D'Osualdo, E., Krebbers, R., Garg, D.,
  Dreyer, D.: {DimSum}: {A} decentralized approach to multi-language semantics
  and verification. Proc. {ACM} Program. Lang.  \textbf{7}({POPL}),  775--805
  (2023). \doi{10.1145/3571220}

\bibitem{axsemcomp}
Sch{\"a}fer, S., Schneider, S., Smolka, G.: Axiomatic semantics for compiler
  verification. In: Proceedings of the 5th {ACM} {SIGPLAN} Conference on
  Certified Programs and Proofs ({CPP} 2016). pp. 188--196. {ACM} (2016).
  \doi{10.1145/2854065.2854083}

\bibitem{potpie}
Seo, A., Lam, C., Grossman, D., Ringer, T.: Correctly compiling proofs about
  programs without proving compilers correct. In: 15th International Conference
  on Interactive Theorem Proving ({ITP} 2024). LIPIcs, vol.~309, pp.
  33:1--33:20. Schloss Dagstuhl -- Leibniz-Zentrum f{\"u}r Informatik (2024).
  \doi{10.4230/LIPIcs.ITP.2024.33}

\bibitem{itreeNI}
Silver, L., He, P., Cecchetti, E., Hirsch, A.K., Zdancewic, S.: Semantics for
  noninterference with interaction trees. In: 37th European Conference on
  Object-Oriented Programming (ECOOP 2023). Leibniz International Proceedings
  in Informatics (LIPIcs), vol.~263, pp. 29:1--29:29. Schloss
  Dagstuhl--Leibniz-Zentrum fuer Informatik (2023).
  \doi{10.4230/LIPIcs.ECOOP.2023.29}

\bibitem{conditionalContextualRefinement}
Song, Y., Cho, M., Lee, D., Hur, C., Sammler, M., Dreyer, D.: Conditional
  contextual refinement. Proceedings of the {ACM} on Programming Languages
  \textbf{7}({POPL}),  1121--1151 (2023). \doi{10.1145/3571232}

\bibitem{compositionalCompcert}
Stewart, G., Beringer, L., Cuellar, S., Appel, A.W.: Compositional {CompCert}.
  In: Proceedings of the 42nd Annual ACM SIGPLAN-SIGACT Symposium on Principles
  of Programming Languages. pp. 275--287 (2015)

\bibitem{cakeML}
Tan, Y.K., Myreen, M.O., Kumar, R., Fox, A.C.J., Owens, S., Norrish, M.: The
  verified {CakeML} compiler backend. J. Funct. Program.  \textbf{29}, ~e2
  (2019). \doi{10.1017/S0956796818000229}

\bibitem{SECOMP}
Thibault, J., Blanco, R., Lee, D., Argo, S., Azevedo~de Amorim, A., Georges,
  A.L., Hri{\c t}cu, C., Tolmach, A.: {SECOMP}: Formally secure compilation of
  compartmentalized {C} programs. In: Proceedings of the 2024 on ACM SIGSAC
  Conference on Computer and Communications Security. pp. 1061--1075. CCS '24,
  Association for Computing Machinery, New York, NY, USA (2024).
  \doi{10.1145/3658644.3670288}

\bibitem{itreescompil}
Torrini, P., Gr{\'e}goire, B.: Interaction trees and verified compilation. In:
  The Rocq Workshop (2025),
  \url{https://coq-workshop.gitlab.io/2025/files/EA5.pdf}

\bibitem{lcap4fac}
Van~Strydonck, T., Piessens, F., Devriese, D.: Linear capabilities for fully
  abstract compilation of separation-logic-verified code. Journal of Functional
  Programming  \textbf{31}, ~e6 (2021). \doi{10.1017/S0956796821000022}

\bibitem{plalacarte}
Vistrup, M., Sammler, M., Jung, R.: Program logics {\`a} la carte. Proceedings
  of the ACM on Programming Languages  \textbf{9}(POPL),  300--331 (2025)

\bibitem{itrees}
Xia, L.y., Zakowski, Y., He, P., Hur, C.K., Malecha, G., Pierce, B.C.,
  Zdancewic, S.: Interaction trees: representing recursive and impure programs
  in {Coq}. Proceedings of the ACM on Programming Languages  \textbf{4}(POPL),
  1--32 (2020)

\bibitem{threeDimensionalRefinement}
Zhang, Y., Koenig, J., Shao, Z., Wang, Y.: Unifying compositional verification
  and certified compilation with a three-dimensional refinement algebra.
  Proceedings of the {ACM} on Programming Languages  \textbf{9}({POPL}),
  1903--1933 (2025). \doi{10.1145/3704900}

\end{thebibliography}

\clearpage 
\appendix
\crefname{appendix}{Appendix}{Appendices}
\Crefname{appendix}{Appendix}{Appendices}
\crefalias{section}{appendix}
\crefalias{subsection}{appendix}
\renewcommand{\theHsection}{appendix.\Alph{section}}
\makeatletter
\@for\thm@ctr:=theorem,lemma,definition,corollary,proposition\do{%
  \counterwithin*{\thm@ctr}{section}%
  \setcounter{\thm@ctr}{0}%
  \expandafter\edef\csname the\thm@ctr\endcsname{\noexpand\thesection\noexpand\arabic{\thm@ctr}}%
  \expandafter\edef\csname theH\thm@ctr\endcsname{appendix.\noexpand\thesection.\noexpand\arabic{\thm@ctr}}%
}
\makeatother
\section{Details of WP-Preservation}\label{app:wp-pres}

This appendix gives the transfer interface, simulation definition, and
resource assertions used in~\Cref{sec:wp-pres}.

\subsection{The abstract transfer interface}\label{app:abstract-transfer}

Let $R:{\src A}\to{\trg A}\to\iProp\;\Sigma$ relate return values,
and let $\mathcal{E}(e,f)$ and $\mathcal{A}(e,a,f,b)$ be Iris-valued
relations on events and their answers. All assertions use the same Iris
resource universe. Write $W_H^M(t,\Phi)$ for the itree WP with handler $H$
and invariant mask $M$; $W_H$ abbreviates $W_H^\emptyset$.
For the visible-event case, fix a source answer type $X_s$, an event
$e:{\src E}\,X_s$, and a continuation
$k:X_s\to\itreeER{\src E}{\src A}$.
We write $H(e,Q;\theta)$ for the logical handler applied to the event,
its continuation assertion $Q$, and its auxiliary thread-spawning
continuation assertion $\theta$. The latter is part of the library's
general handler interface; the IMP and ASM handlers ignore it.
For every target postcondition $\Psi$, the abstract simulation must satisfy:
\[
\tag{\textsc{SIMret}}\label{eq:sim-ret}
 \SIM_R(\Ret{u},t)
 *\bigl(\forall v.\;R(u,v)\wand W_{\trg H}(\Ret{v},\Psi)\bigr)
 \vdash W_{\trg H}(t,\Psi),
\]
\[
\tag{\textsc{SIMtau}}\label{eq:sim-tau}
 \SIM_R(\Tau{s},t)\vdash\SIM_R(s,t),
\]
\[
\tag{\textsc{SIMvis}}\label{eq:sim-vis}
\begin{aligned}
 &\SIM_R(\Vis{e}{k},t)\\
 &*\Bigl(\forall (X_t:\mathsf{Type})\,
       (f:{\trg E}\,X_t)\,
       (h:X_t\to\itreeER{\trg E}{\trg A}).\\
 &\qquad\mathcal{E}(e,f)\wand\\
 &\qquad\bigl(\forall (a:X_s)\,(b:X_t).\;\mathcal{A}(e,a,f,b)
                    \wand\SIM_R(k(a),h(b))\bigr)\wand\\
 &\qquad\Bigl(\forall\Psi',\theta_s,\theta_t.\\
 &\qquad\quad {\src H}\bigl(e,\lambda a.\,\forall b.\,
       \mathcal{A}(e,a,f,b)\wand\Psi'(b);\theta_s\bigr)\\
 &\qquad\quad\wand\mathsf{Upd}_{\emptyset}\,{\trg H}(f,\Psi';\theta_t)\Bigr)\wand\\
 &\qquad W_{\trg H}(\Vis{f}{h},\Psi)\Bigr)\\
 &\qquad\vdash W_{\trg H}(t,\Psi).
\end{aligned}
\]
Here $X_t$ is the target event's answer type,
$f:{\trg E}\,X_t$, and $h:X_t\to\itreeER{\trg E}{\trg A}$.
The assertions $\Psi',\theta_t:X_t\to\iProp\;\Sigma$ and
$\theta_s:X_s\to\iProp\;\Sigma$ range over continuation assertions;
$a:X_s$ and $b:X_t$ are event answers, and $\mathsf{Upd}_{\emptyset}$ is the
Iris fancy update at the empty mask. It lets the transport perform ghost
updates: with handlers specified by points-to assertions only, no handler
clause offers an update modality, and the matched write of IMP and ASM
updates the source authority of the written variable (\Cref{sec:imp-asm-simulation}).
The requirement is weaker than one without the update, so the lemma is
stronger.

The premise quantified over $X_t,f,h$ asks the caller to prove the WP of
the matched target event, given three resources: the event relation,
the continuation simulation, and the displayed handler-transport wand.
When establishing \textsc{SIMvis} for IMP and ASM, the simulation proof
constructs this wand from the resources of the matched-event clause
(in particular, $\mathcal{T}(e,f)$ in~\Cref{fig:imp-asm-simulation}).
It also accounts for the target administrative steps preceding the match.

In the abstract transfer proof, the wand is received as
\texttt{Htransport}. Applying the target visible-event WP rule reduces
the goal to a target handler assertion. The proof then applies
\texttt{Htransport}: handler monotonicity, the source WP induction
hypothesis, and the continuation simulation establish its source-handler
premise, yielding the required target-handler assertion.
Thus the transport wand connects the two handlers at the matched event,
as described in~\Cref{sec:abstract-transfer}.

\begin{lemma}[Abstract transfer with masks]\label{lem:abstract-wp-transfer-masks}
Suppose the simulation satisfies the three requirements above.
For invariant masks $M_s\subseteq M_t$,
\[
\begin{aligned}
 &\SIM_R(s,t)*W_{\src H}^{M_s}(s,\Phi)\\
 &*\bigl(\forall u,v.\;R(u,v)*\Phi(u)
       \wand\mathsf{Upd}_{M_s,M_t}(\Psi(v))\bigr)\\
 &\qquad\vdash W_{\trg H}^{M_t}(t,\Psi).
\end{aligned}
\]
Here $\mathsf{Upd}_{M_s,M_t}(P)$ denotes an Iris fancy update from mask
$M_s$ to $M_t$. Taking equal masks and using update
introduction gives~\Cref{lem:abstract-wp-transfer}.
\end{lemma}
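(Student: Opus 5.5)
We prove \Cref{lem:abstract-wp-transfer-masks} by Löb induction, or equivalently by the coinduction principle of the target WP, which the library defines as a guarded/greatest fixed point. The induction is structural on the source WP, which we unfold one constructor at a time. Concretely, we generalize over $s$ and $t$ and assume the induction hypothesis
\[
\forall s',t'.\;\SIM_R(s',t')*W^{M_s}_{\src H}(s',\Phi)*\mathit{Post}\wand W^{M_t}_{\trg H}(t',\Psi).
\]
Here $\mathit{Post}$ is the return wand of the statement. This hypothesis is available under a later, or as the coinductive hypothesis when building the target WP. We then unfold $W^{M_s}_{\src H}(s,\Phi)$. It starts with a fancy update at $M_s$; since $M_s\subseteq M_t$, we eliminate that update into the target goal, framing $M_t\setminus M_s$. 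We then case on the shape of $s$.

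\emph{Return.} If $s=\Ret{u}$, the source WP gives $\Phi(u)$. We apply \textsc{SIMret}, which leaves the goal $\forall v.\,R(u,v)\wand W_{\trg H}(\Ret{v},\Psi)$. From $R(u,v)*\Phi(u)$ the return wand yields $\mathsf{Upd}_{M_s,M_t}\Psi(v)$. The return rule of the target WP at mask $M_t$ then closes the goal. The one care point is reconciling masks: \textsc{SIMret} is stated at the empty mask, so we need a mask-monotonicity lemma for $W$, or we carry the mask through as in the mechanization.

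\emph{Tau.} If $s=\Tau{s'}$, the source WP gives $W^{M_s}_{\src H}(s',\Phi)$. \textsc{SIMtau} turns $\SIM_R(\Tau{s'},t)$ into $\SIM_R(s',t)$. This case cannot use a guarded hypothesis, because the target tree $t$ is unchanged and no target step is taken. We therefore rely on the induction being well founded for this case. We set it up as an inner induction on the source WP fixed point, which is inductive in the Vistrup et~al.\ construction when unfolded through Tau. If that fails, we fall back on Löb, since the WP definition places its self-reference under a later or an update at each Tau.

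\emph{Visible event.} If $s=\Vis{e}{k}$, the source WP gives $\srcH(e,\lambda a.\,W^{M_s}_{\src H}(k(a),\Phi);\theta_s)$. We apply \textsc{SIMvis} and must prove its premise for arbitrary $f,h$, given $\mathcal{E}(e,f)$, the continuation simulations, and the transport wand \texttt{Htransport}. We unfold the target WP at $\Vis{f}{h}$, which reduces the goal to $\trgH(f,\lambda b.\,W^{M_t}_{\trg H}(h(b),\Psi);\theta_t)$. We instantiate \texttt{Htransport} with $\Psi'(b):=W^{M_t}_{\trg H}(h(b),\Psi)$. Its source premise comes from the source handler assertion via handler monotonicity. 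For each $a$, and each $b$ with $\mathcal{A}(e,a,f,b)$, we combine the continuation wand to get $\SIM_R(k(a),h(b))$. Together with $W^{M_s}_{\src H}(k(a),\Phi)$, the induction hypothesis then yields $\Psi'(b)$. The resulting $\mathsf{Upd}_\emptyset$ is absorbed into the target WP's outer update.

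The main obstacle is this last case: the handler-monotonicity step must happen under the handler, and the induction hypothesis must be available there. For Löb this requires the hypothesis to be persistent, so the return wand must be made persistent or duplicable. Our first attempt is to quantify over $\Phi,\Psi$ inside the induction hypothesis and thread the return wand as an owned resource through \texttt{Htransport}.
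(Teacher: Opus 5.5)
Your case split, and the way you use \textsc{SIMret}, \textsc{SIMtau}, \textsc{SIMvis}, \texttt{Htransport} and handler monotonicity, match the paper's proof. The recursion principle underneath, however, does not work.

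The itree WP of Vistrup et al.\ is an inductive (least) fixed point. At $\Tau{t'}$ the recursive occurrence sits only under an update, never under a later. So the L\"ob fallback you propose for the Tau case is unavailable. There is also no coinduction principle for the target WP, since it is the same construction. Besides, \textsc{SIMret} and \textsc{SIMvis} consume and produce genuine target WPs, not a one-step unfolding over a coinductive hypothesis. The inductive reading is in fact essential. Take $\mathsf{spin}=\Tau{\mathsf{spin}}$ and $\SIM_R(s,t):=\ulcorner s\approx\mathsf{spin}\urcorner$: this satisfies all three conditions, two of them vacuously. If a silently diverging source had a WP, the lemma with $R:=\mathsf{False}$ would yield $W_{\trg H}(t,\Psi)$ for every $t$, which is false.

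The gap shows up exactly where you locate your main obstacle, the Vis case. The problem there is the later, not persistence; your hypothesis already takes $\mathit{Post}$ as a premise. A L\"ob hypothesis is guarded by $\triangleright$, and you must use it inside $\srcH(e,\ldots)$. For events such as $\mathsf{GetVar}$ or $\mathsf{SetVar}$ that handler provides no later; only \stepE{} does.

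The fix is to make the induction on the source WP, which you reserve for Tau, the sole principle. Apply least-fixpoint induction to $W^{M_s}_{\src H}(s,\Phi)$ with the predicate $\lambda s.\,\forall t.\,\SIM_R(s,t)\wand\mathit{Post}\wand W^{M_t}_{\trg H}(t,\Psi)$. Its inductive step closes over no resources, so it is persistent as the induction principle requires.
\begin{itemize}
\item In the Tau case, the hypothesis applies directly to $s'$ and the unchanged $t$.
\item In the Vis case, the unfolded source WP already carries the hypothesis at each continuation, as $\srcH\bigl(e,\lambda a.\,\forall t'.\,\SIM_R(k(a),t')\wand\mathit{Post}\wand W^{M_t}_{\trg H}(t',\Psi);\theta_s\bigr)$. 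Handler monotonicity, fed with the continuation simulation and $\mathit{Post}$, turns this into the premise of \texttt{Htransport} for $\Psi'(b):=W^{M_t}_{\trg H}(h(b),\Psi)$.
\end{itemize}
This is the paper's ``source WP induction hypothesis''. No later, L\"ob induction, or coinduction is involved.
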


\subsection{The IMP--ASM simulation}\label{app:imp-asm-simulation}

For $R:{\src A}\to{\trg A}\to\iProp\;\Sigma$, the simulation is
\[
 \simIMPASM_R:=\nu G.\,\mu L.\,\simFunctor(G,L),
\]
with the functor and auxiliary assertions in~\Cref{fig:imp-asm-simulation}.
The simulation resources are
$\Corr_{X,V}:=\mathsf{RegOwn}(X)*\Corr_V$ (\Cref{sec:imp-asm-simulation}).
For a command $c$, let $X_c$ be the compiler's structural register
footprint and $V_c$ the variables of $c$, write $\Corr_c:=\Corr_{X_c,V_c}$,
and take $R_c(u,v):=\Corr_c$. The compiler simulation of~\Cref{sec:imp-asm-wp-pres}
is proved for $\Corr_{X,V}$ with any $X\supseteq X_c$ and $V\supseteq V_c$.

\subsection{Composition and iteration}\label{app:sim-algebra}

\begin{lemma}[Composition properties]\label{lem:sim-algebra-full}
The IMP--ASM simulation satisfies the abstract transfer interface
in~\Cref{app:abstract-transfer}. It also has the following properties.
\begin{enumerate}
 \item If $s'\approx s$ and $t'\approx t$, then
 $\simIMPASM_R(s,t)\vdash\simIMPASM_R(s',t')$.
 \item For an intermediate return relation $Q$,
 \[
 \begin{aligned}
 &\simIMPASM_Q(s,t)
 *\bigl(\forall a,b.\;Q(a,b)\wand\simIMPASM_R(k(a),h(b))\bigr)\\
 &\qquad\vdash\simIMPASM_R(\mathsf{bind}(s,k),\mathsf{bind}(t,h)).
 \end{aligned}
 \]
 \item Let $I$ relate loop states, and define
 \[
 (I+R)(x,y):=
 \begin{cases}
 I(i,j)&x=\mathsf{inl}(i),\ y=\mathsf{inl}(j),\\
 R(u,v)&x=\mathsf{inr}(u),\ y=\mathsf{inr}(v),\\
 \mathsf{False}&\text{otherwise}.
 \end{cases}
 \]
 If the relation
 \[
 i'\prec i\quad:\!\Longleftrightarrow\quad
 f(i)\approx\Ret{\mathsf{inl}(i')}
 \]
 is well founded, then
 \[
 \begin{aligned}
 &\square\bigl(\forall i',j'.\;I(i',j')\wand
        \simIMPASM_{I+R}(f(i'),g(j'))\bigr)*I(i,j)\\
 &\qquad\vdash\simIMPASM_R(\mathsf{iter}(f,i),\mathsf{iter}(g,j)).
 \end{aligned}
 \]
 Here $\mathsf{inl}$ restarts the loop and $\mathsf{inr}$ returns;
 $\square$ makes the body premise reusable. The well-foundedness
 condition is on source restarts without a visible event.
\end{enumerate}
\end{lemma}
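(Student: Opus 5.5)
The plan is to prove the three parts of \Cref{lem:sim-algebra-full} separately, each by unfolding the mixed fixed point $\nu G.\,\mu L.\,\simFunctor(G,L)$. For (1), compatibility with $\approx$, I would use the standard up-to technique for eutt. I would prove by coinduction on $\nu G$ that the relation $\{(s',t') \mid \exists s,t.\ s'\approx s,\ t'\approx t,\ \simIMPASM_R(s,t)\}$ is a post-fixed point. Inside, I would do an inner induction on $\mu L$ to handle the finite stretch of silent and register steps. The main auxiliary fact is that $\approx$ only inserts or removes finitely many $\mathsf{Tau}$s before a constructor. Extra $\mathsf{Tau}$s on either side are absorbed by the source, target, or joint silent-step clauses, which recurse through $L$. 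Missing $\mathsf{Tau}$s are handled by inverting eutt down to the matching constructor. Each register or matched-event clause is preserved because the continuations remain pointwise $\approx$.

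For (2), closure under bind, I would run a coinduction on $\nu G$ with the candidate relation $\{(\mathsf{bind}(s,k),\mathsf{bind}(t,h))\}$ guarded by $\simIMPASM_Q(s,t)$ and the continuation wand, taken in union with $\simIMPASM_R$ itself. The proof is an inner induction on the $\mu L$ unfolding of $\simIMPASM_Q(s,t)$:
\begin{itemize}
\item \emph{Return.} We obtain $Q(a,b)$. The wand yields $\simIMPASM_R(k(a),h(b))$, and this lands in the right disjunct of the union, after identifying $\mathsf{bind}(\Ret{a},k)$ with $k(a)$.
\item \emph{Silent steps and register operations.} $\mathsf{bind}$ commutes with these constructors, and the inner induction hypothesis supplies the $L$ obligation.
\item \emph{Matched events.} The continuation obligation under $G$ becomes the coinductive hypothesis applied to $k'(a')\mathbin{>\!\!>\!\!=}k$.
\end{itemize}
Since the monad laws hold only up to $\approx$, I would use part (1) to rewrite $\mathsf{bind}(\mathsf{Ret},k)$ and $\mathsf{bind}(\mathsf{Vis},\cdot)$.

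For (3), iteration, I would unfold $\mathsf{iter}(f,i)\approx \mathsf{bind}(f(i),\lambda x.\,\mathsf{case}\ x\ \mathsf{of}\ \mathsf{inl}(i')\Rightarrow\Tau{\mathsf{iter}(f,i')}\mid\mathsf{inr}(u)\Rightarrow\Ret{u})$, and do the same for $g$. I then apply the bind-closure argument with intermediate relation $I+R$, inside a coinduction on $\nu G$ whose candidate is the set of pairs $(\mathsf{iter}(f,i),\mathsf{iter}(g,j))$ with $I(i,j)$. The $\square$ lets the body premise be reused at every iteration. The $\mathsf{inr}$ case gives $R$ directly. The $\mathsf{inl}$ case must continue with $(\Tau{\mathsf{iter}(f,i')},\Tau{\mathsf{iter}(g,j')})$.

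The main obstacle is this restart case. If the body performed a matched event before returning, the restart sits under $G$ and the coinductive hypothesis applies. If it returned $\mathsf{inl}(i')$ with no matched event, we are still inside the inductive layer $L$, where the coinduction hypothesis is unavailable. To handle this I would wrap the whole argument in an outer well-founded induction on $i$ along $\prec$, using the hypothesis that this relation is well founded. Concretely, I would prove the strengthened statement ``for all $i$, $I(i,j)\vdash \mu L.\,\simFunctor(G_{\mathit{coind}},L)(\mathsf{iter}(f,i),\mathsf{iter}(g,j))$'' by well-founded induction. A restart reached without passing through $G$ yields $f(i)\approx\Ret{\mathsf{inl}(i')}$, because no visible source event occurred and source $\mathsf{Tau}$s are erased by $\approx$. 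Hence $i'\prec i$, and the induction hypothesis closes the $L$ obligation.

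The delicate point is the bookkeeping that shows a restart in the $L$ layer really implies $f(i)\approx\Ret{\mathsf{inl}(i')}$. The body's own inner induction must be tracked jointly with the source structure, which I would do by generalizing the bind lemma to expose, in the return case, whether a $G$ step was crossed. Finally, (SIMret), (SIMtau), and (SIMvis) follow by case analysis on the functor clauses, with an inner induction on $L$ to discharge the target register steps and silent steps through $\handler{reg}$ before the match. The handler-transport wand for matched events is built from $\mathsf{Cell}(x)$ and the agreement and update laws of $\mathsf{Auth}_x$, with a ghost update under $\mathsf{Upd}_\emptyset$ in the write case.
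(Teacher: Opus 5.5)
Your overall strategy is the right one for a $\nu G.\,\mu L$ simulation, and it matches the paper's informal account. The paper gives no written proof of this lemma; it is mechanized. Three of your arguments go through as you describe: the transfer interface by $\mu L$-induction with the transport wand built from $\mathsf{Cell}(x)$ and a ghost update, eutt-closure via an up-to-$\approx$ candidate with inner induction, and bind by coinduction with inner induction. Part (3), however, does not close as written, for three reasons. First, a candidate consisting only of pairs $(\mathsf{iter}(f,i),\mathsf{iter}(g,j))$ with $I(i,j)$ is not a post-fixed point. When the body crosses a matched event, what must land in $G$ is $(\mathsf{bind}(k'(a),K_f),\mathsf{bind}(h'(b),K_g))$, where $K_f,K_g$ are the loop continuations ($\mathsf{inl}(i')\mapsto\Tau{\mathsf{iter}(f,i')}$, $\mathsf{inr}(u)\mapsto\Ret{u}$), and this is not an iter pair. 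Second, you cannot call part (2) as a black box: its continuation premise $\forall a,b.\,(I+R)(a,b)\wand\simIMPASM_R(K_f(a),K_g(b))$ is, at $\mathsf{inl}$, the very goal. Third, a restart after a matched event does not ``sit under $G$''. It happens inside the fresh $\mu L$ layer of that continuation's own unfolding, where the coinduction hypothesis is not usable.

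The repair uses only ingredients you already have. Take $G_c(x,y):=\exists s,t.\,\ulcorner x=\mathsf{bind}(s,K_f)\land y=\mathsf{bind}(t,K_g)\urcorner*\simIMPASM_{I+R}(s,t)$, with the boxed body premise in the persistent context. First prove, by well-founded induction on $i$, that $I(i,j)\vdash\mu L.\,\simFunctor(G_c,L)(\mathsf{iter}(f,i),\mathsf{iter}(g,j))$. To do so, unfold, apply the body premise, and induct on the $\mu L$ layer of $\simIMPASM_{I+R}(f(i),g(j))$, carrying the pure invariant $f(i)\approx s$ for the current source tree $s$. The invariant is preserved because in that layer the source only drops $\mathsf{Tau}$s. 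Matched events send continuations into $G_c$, and an $\mathsf{inr}$ return gives $R$. An $\mathsf{inl}(i')$ return gives $f(i)\approx\Ret{\mathsf{inl}(i')}$, i.e.\ $i'\prec i$, so a joint silent step and the induction hypothesis close it. Then prove $G_c(x,y)\vdash\mu L.\,\simFunctor(G_c,L)(x,y)$ by the same inner induction without the invariant, discharging an $\mathsf{inl}(i')$ return with the lemma above at an arbitrary $i'$. Coinduction plus monotonicity in $G$ then yields $\simIMPASM_R(\mathsf{iter}(f,i),\mathsf{iter}(g,j))$. A smaller point concerns rewriting. The unfolding of $\mathsf{iter}$ and the $\mathsf{bind}$ laws for $\mathsf{Tau}$ and $\mathsf{Vis}$ hold up to strong bisimulation $\cong$ (definitionally on $\mathsf{observe}$). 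You need them inside the approximant $\mu L.\,\simFunctor(G_c,L)$, where part (1) does not apply, since it is a statement about the fixed point $\simIMPASM_R$. Part (1) is usable as you say only in the return case of (2), where you hold a full $\simIMPASM_R(k(a),h(b))$.
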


\subsection{Specification transfer and target points-to}\label{app:compiler-wp-pres}

\Cref{tab:ownership} summarizes who holds each resource.

\begin{table}[htbp]
\centering
\footnotesize
\begin{tabular}{@{}llll@{}}
\toprule
resource & source WP & simulation & target WP \\
\midrule
$x\mathrel{\src{\mapsto}}v$ & client & --- & extracted as $a_x\mathrel{\trg{\mapsto}}v$ \\
$\mathsf{Auth}_x(w)$, $x\in V_c$ & --- & $\Corr_c$ & $\Corr_c$ \\
$a_x\mathrel{\trg{\mapsto^{?}}}w$, $x\in V_c$ & --- & $\Corr_c$ & $\Corr_c$ \\
$\mathsf{RegOwn}(X_c)$ & --- & $\Corr_c$ & $\Corr_c$ \\
other target cells & --- & --- & client (target frame) \\
\bottomrule
\end{tabular}
\caption{Who holds each resource in the source WP, in the state
correspondence of the simulation, and in the target WP; a dash means the
resource does not occur. The authorities over the whole store, memory and
register file occur only in the adequacy proofs of the handlers.}
\label{tab:ownership}
\end{table}

\begin{theorem}[WP preservation: full statement]
\label{thm:compiler-wp-pres-full}
For every command $c$, source postcondition ${\src\Phi}$, invariant
mask $M$, and the same step modality on both sides,
\[
\begin{aligned}
 &\mathsf{RegOwn}(X_c)*\Corr_{V_c}*W^{M}_{\src H}(\denoteIMP{c},{\src\Phi})\\
 &\quad\vdash W_{\trg H}^{M}
   \bigl(\denoteASM{\mathsf{compile}(c)}(0),\lambda\_.\,
      \mathsf{RegOwn}(X_c)*\Corr_{V_c}*{\src\Phi}(\mathsf{tt})\bigr),
\end{aligned}
\]
for every target exit postcondition.
\end{theorem}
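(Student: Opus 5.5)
The plan is to split the statement into a simulation result and a transfer step. The core is the compiler simulation
\[
 \mathsf{RegOwn}(X)*\Corr_V\;\vdash\;\simIMPASM_{R_{X,V}}\bigl(\denoteIMP{c},\denoteASM{\mathsf{compile}(c)}(0)\bigr),
 \qquad R_{X,V}(u,v):=\mathsf{RegOwn}(X)*\Corr_V .
\]
I will prove it by structural induction on $c$, generalized to all $X\supseteq X_c$ and $V\supseteq V_c$ so that subcommands can run under the correspondence of the enclosing command. Given this, I apply \Cref{lem:abstract-wp-transfer-masks} with $M_s=M_t=M$, using the hypotheses \textsc{SIMret}, \textsc{SIMtau} and \textsc{SIMvis} from \Cref{lem:sim-algebra-full}. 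The return premise $\forall u,v.\;R_c(u,v)*{\src\Phi}(u)\wand\mathsf{Upd}_{M,M}(\ldots)$ is immediate: $R_c$ is exactly $\mathsf{RegOwn}(X_c)*\Corr_{V_c}$, and $u=\mathsf{tt}$. The target exit postcondition plays no role. In the transport wand, $\mathsf{UB}$ is matched with $\mathsf{Exit}$ and the source handler gives $\bot$, so any exit assertion is allowed.

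For the induction I first prove an \emph{expression simulation}. For an arithmetic expression $a$ compiled into register $r$, every $X$ containing the registers $\ge r$ used by $a$, and every $V\supseteq\mathrm{vars}(a)$, I want
\[
 \mathsf{RegOwn}(X)*\Corr_V\vdash\simIMPASM_{Q}\bigl(\denoteIMP{a},\denoteASM{\mathsf{compile}(a,r)}\bigr),
\]
where $Q(n,\_)$ additionally records $r\mathrel{\trg{\mapsto}}n$. To make $Q$ expressible, I will split $\mathsf{RegOwn}$ into the target register $r$ plus the rest. Constants are handled by the register-write clause. Variables are handled by the matched $\mathsf{GetVar}/\mathsf{Load}$ row, which lends $\mathsf{Cell}(x)$, followed by a register write. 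Binary operators follow from the bind law of \Cref{lem:sim-algebra-full}, using the fact that evaluating the second operand into $r+1$ preserves $r$; this is a frame-style argument inside $\mathsf{RegOwn}$. Boolean expressions are handled the same way.

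The command cases then go as follows.
\begin{itemize}
\item \textsf{skip}: the Return clause applies.
\item Assignment: expression simulation, then a register read of $r_0$, then the matched $\mathsf{SetVar}/\mathsf{Store}$ row.
\item Sequencing: use bind, after rewriting the graph-linking denotation to a bind up to eutt via item~1 of \Cref{lem:sim-algebra-full} and the denotation laws of Xia et~al.
\item Conditionals: bind the guard simulation, take a register read of $r_0$, and select the branch's induction hypothesis.
\item Loops: use the iteration law (item~3) with $I:=\mathsf{RegOwn}(X)*\Corr_V$ on unit loop states. Well-foundedness holds because each source iteration emits a $\stepE$ after the guard, so $f(i)\approx\Ret{\mathsf{inl}(i')}$ never holds and the relation $\prec$ is empty.
\end{itemize}

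I expect the main obstacle to be the eutt bookkeeping for the compiled control-flow graphs. One part is showing that the denotations of the relabelled and hidden graph combinators, and especially the loop combinator's $\mathsf{iter}$ over labels, are eutt to the binds and iterations that the algebraic laws require. The other part is aligning the source loop's $\stepE$ with the target's, so that the matched-step row applies rather than leaving an unmatched visible event. My first approach is to reuse Xia et~al.'s denotation lemmas for $\mathsf{seq\_asm}$, $\mathsf{if\_asm}$ and $\mathsf{while\_asm}$, adapted to per-location events, and to rewrite with them before invoking the simulation laws. A secondary difficulty is the ghost update in the matched write: it needs $\mathsf{Auth}_x$ and the full source points-to together. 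I will discharge it inside the transport wand under $\mathsf{Upd}_\emptyset$, by combining the source handler's $x\mathrel{\src{\mapsto^{?}}}w$ with $\mathsf{Cell}(x)$.
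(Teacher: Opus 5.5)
Your proposal is correct and follows the paper's proof. You prove the compiler simulation from $\mathsf{RegOwn}(X)*\Corr_V$ by induction on $c$, generalized to $X\supseteq X_c$ and $V\supseteq V_c$, using expression simulation for assignments and guards, closure under bind for sequencing, and closure under iteration for loops (the step after the guard makes the restart relation empty), and then apply the abstract transfer lemma once with equal masks. One minor difference: the paper applies the iteration law directly between source loop states and target control-flow labels, whereas you plan to first reshape the target loop into an iteration over unit states; either works, since the law allows arbitrary state types.
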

For a finite map $G$ of initialized variables, abbreviate
\[
 {\src P}(G):=\mathop{\scalebox{1.5}{$\ast$}}_{x\mapsto v\in G}
             x\mathrel{\src{\mapsto}}v,
 \qquad
 {\trg P}(G):=\mathop{\scalebox{1.5}{$\ast$}}_{x\mapsto v\in G}
             a_x\mathrel{\trg{\mapsto}}v.
\]

\begin{lemma}[Extracting target points-to assertions]\label{lem:pointsto-transfer}
\[
 x\mathrel{\src{\mapsto}}v*\mathsf{Cell}(x)\vdash a_x\mathrel{\trg{\mapsto}}v,
 \qquad
 {\src P}(G)*\Corr_{\mathrm{dom}\,G}\vdash{\trg P}(G).
\]
\end{lemma}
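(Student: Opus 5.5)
The first entailment is a direct consequence of the agreement between the per-variable ghost authority and its fragments, and the second follows from the first by induction over the finite map $G$.

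For the single-variable case, we unfold $\mathsf{Cell}(x)$ to obtain some $w$ with $\mathsf{Auth}_x(w)*a_x\mathrel{\trg{\mapsto^{?}}}w$. Together with $x\mathrel{\src{\mapsto}}v$, which abbreviates the full fragment $x\mathrel{\src{\mapsto^{?}}}\mathsf{Some}(v)$, the validity of the combined authority/fragment element of the ghost camera for $x$ gives the pure fact $w=\mathsf{Some}(v)$. This is the agreement property stated in the notation box of~\Cref{fig:imp-asm-simulation}. Rewriting with it turns $a_x\mathrel{\trg{\mapsto^{?}}}w$ into $a_x\mathrel{\trg{\mapsto^{?}}}\mathsf{Some}(v)$, which is $a_x\mathrel{\trg{\mapsto}}v$ by definition. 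The remaining $\mathsf{Auth}_x(w)$ and $x\mathrel{\src{\mapsto}}v$ are then dropped. This requires the logic to be affine, as Iris is. It is also why the lemma states an entailment rather than an equivalence.

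For the big-separating-conjunction version, we proceed by induction on $G$ using the standard finite-map induction principle. The empty case is trivial, since both sides are $\mathsf{True}$ and $\Corr_\emptyset=\mathsf{True}$. For $G=\{x\mapsto v\}\cup G'$ with $x\notin\mathrm{dom}\,G'$, we split ${\src P}(G)$ as $x\mathrel{\src{\mapsto}}v*{\src P}(G')$. Because $\Corr_V$ is defined as a big star over the set $V$, we split $\Corr_{\{x\}\cup\mathrm{dom}\,G'}$ as $\mathsf{Cell}(x)*\Corr_{\mathrm{dom}\,G'}$, using disjointness. We then apply the single-variable entailment to the first pair and the induction hypothesis to the rest, and reassemble ${\trg P}(G)$ with the big-star insertion lemma.

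I expect no real obstacle. The only step needing care is the camera-level agreement lemma for $\mathsf{Auth}_x$ against a \emph{full} fragment. If the per-variable camera is built as $\mathrm{auth}(\mathrm{option}(\mathrm{frac}\times\mathrm{agree}(\mathsf{option}\,\mathbb{N})))$, the agreement lemma is the library's auth-included lemma followed by the agree inclusion, and we would invoke it rather than reprove it. The other bookkeeping point is the mismatch between iterating over the map $G$ in ${\src P}$ and ${\trg P}$ and iterating over the set $\mathrm{dom}\,G$ in $\Corr$. We handle it by rewriting with the lemma relating a big star over $\mathrm{dom}\,G$ to a big star over $G$ that ignores values.
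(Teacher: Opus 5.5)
Your proposal is correct and follows the paper's proof. The authority $\mathsf{Auth}_x(w)$ in $\mathsf{Cell}(x)$ agrees with $x\mathrel{\src{\mapsto}}v$, which forces the shared entry to be $w=\mathsf{Some}(v)$. The second entailment is the first applied cell by cell, and your finite-map induction, splitting $\Corr_{\mathrm{dom}\,G}$ into single cells, simply spells out that step.
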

\begin{proof}
The authority of $x$ in $\mathsf{Cell}(x)$ agrees with
$x\mathrel{\src{\mapsto}}v$, so the entry $w$ shared with the target cell is
$\mathsf{Some}(v)$. The second entailment is the first, cell by cell.\qed
\end{proof}

\begin{theorem}[WP preservation with target points-to]
\label{thm:compiler-wp-pres-frame}
For every command $c$, initialized map $F$ with
$\mathrm{dom}\,F\subseteq V_c$, and assertion $Q$,
\[
\begin{aligned}
 &\Corr_c*\wpIMP{c}{\lambda\_.\,{\src P}(F)*Q}\\
 &\quad\vdash\wpi{\trgH}{\denoteASM{\mathsf{compile}(c)}(0)}%
   {\mathsf{RegOwn}(X_c)*{\trg P}(F)*Q*\Corr_{V_c\setminus\mathrm{dom}\,F}}.
\end{aligned}
\]
Source and target use the same invariant mask and step modality, and the
target exit postcondition is arbitrary.
\end{theorem}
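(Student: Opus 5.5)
The proof combines \Cref{thm:compiler-wp-pres} (in the form of \Cref{thm:compiler-wp-pres-full}) with the extraction lemma \Cref{lem:pointsto-transfer}, using monotonicity (consequence) of the target weakest precondition. First, instantiate \Cref{thm:compiler-wp-pres} with the source postcondition ${\src\Phi}:=\lambda\_.\,{\src P}(F)*Q$. This gives
\[
 \Corr_c*\wpIMP{c}{\lambda\_.\,{\src P}(F)*Q}
 \;\vdash\;
 \wpi{\trgH}{\denoteASM{\mathsf{compile}(c)}(0)}{\lambda\_.\,\mathsf{RegOwn}(X_c)*\Corr_{V_c}*{\src P}(F)*Q},
\]
with the same mask and step modality on both sides and an arbitrary exit postcondition, exactly as in the hypotheses of the statement. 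It then remains to weaken the postcondition pointwise, using the consequence rule of the itree WP. This rule is available for $\mathrm{wpi}$ from the generic construction of~\cite{plalacarte}, since handlers are monotone. Note that it only affects the return postcondition; the exit postcondition is untouched.

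For the weakening, split $\Corr_{V_c}$. Because $\mathrm{dom}\,F\subseteq V_c$ and $\Corr_V$ is a big separating conjunction over a finite set, we have
\[
\Corr_{V_c}\dashv\vdash\Corr_{\mathrm{dom}\,F}*\Corr_{V_c\setminus\mathrm{dom}\,F}.
\]
This is the standard big-sep-over-disjoint-union lemma. Then apply the second entailment of \Cref{lem:pointsto-transfer}, ${\src P}(F)*\Corr_{\mathrm{dom}\,F}\vdash{\trg P}(F)$. The remaining conjuncts $\mathsf{RegOwn}(X_c)$, $Q$, and $\Corr_{V_c\setminus\mathrm{dom}\,F}$ are carried along unchanged, which yields exactly the stated postcondition.

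The only potentially delicate point is the extraction step inside the postcondition. If the agreement between $\mathsf{Auth}_x$ and $x\mathrel{\src{\mapsto}}v$ in \Cref{lem:pointsto-transfer} is only obtained under an update modality (ghost validity is typically extracted via $\ulcorner\mathsf{valid}\urcorner$, which is a plain entailment, but the mechanization may phrase it under $\updateMod$), I would absorb that update into the WP. The $\mathrm{wpi}$ definition begins with an update modality at return, so $\updateMod$ in the postcondition can be eliminated by the fupd-in-postcondition rule. Alternatively, I would use the mask-generalized \Cref{lem:abstract-wp-transfer-masks}, whose final wand already permits an $\mathsf{Upd}_{M_s,M_t}$.

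Otherwise the argument is routine bookkeeping about finite-set big separating conjunctions; the most fiddly part is matching $\mathrm{dom}\,F$ as a set against the index set of $\Corr$.
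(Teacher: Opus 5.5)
Your proposal is correct and follows the paper's proof: instantiate \Cref{thm:compiler-wp-pres} with ${\src\Phi}:=\lambda\_.\,{\src P}(F)*Q$, split $\Corr_{V_c}$ into $\Corr_{\mathrm{dom}\,F}*\Corr_{V_c\setminus\mathrm{dom}\,F}$, and apply \Cref{lem:pointsto-transfer} under the consequence rule of the target WP. Your fallback for an update modality is not needed, because the agreement of $\mathsf{Auth}_x$ with the source points-to assertion is a plain entailment obtained from ghost validity, but it would be harmless.
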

\begin{proof}[Proof sketch]
\Cref{thm:compiler-wp-pres} returns ${\src P}(F)*Q$ with $\Corr_c$. Split
$\Corr_{V_c}$ into $\Corr_{\mathrm{dom}\,F}*\Corr_{V_c\setminus\mathrm{dom}\,F}$
and apply~\Cref{lem:pointsto-transfer}. For derivable source Hoare triples,
source soundness supplies the source WP premise.\qed
\end{proof}
A target assertion about a cell outside $V_c$ is preserved by the frame rule
of the target logic, since the precondition does not own that cell.

\section{The IMP Program Logic}\label{app:imp-logic}

This appendix presents the WP rules for IMP in detail.
The rules are collected in~\Cref{fig:imp-logic-rules}. We fix an invariant
mask $M$ and a step modality $m$, and use the following abbreviations to
keep the figure compact:
\[
\begin{aligned}
 W_a(a,\Phi)&:=\wpi{\srcH}{\mathsf{denote\_aexp}(a)}{\Phi},\\
 W_b(b,\Phi)&:=\wpi{\srcH}{\mathsf{denote\_bexp}(b)}{\Phi},\\
 W(c,\Phi)&:=\wpi{\srcH}{\denoteIMP{c}}{\Phi}.
\end{aligned}
\]
All three itree WPs use mask $M$ and
$\srcH=\handler{store}\oplus\handler{step}(m)\oplus\handler{ub}$.
Arithmetic and Boolean expressions return natural numbers and Booleans,
respectively; commands return $\mathsf{tt}$.

We write $\triangleright_m P$ for the modality selected by $m$:
$\triangleright_{\mathsf{Later}}P=\triangleright P$ and
$\triangleright_{\mathsf{Identity}}P=P$.
The assertion $x\mathrel{\src{\mapsto^{?}}}w$, where
$w:\mathsf{option}(\mathbb N)$, permits an uninitialized value;
$x\mathrel{\src{\mapsto}}v$ abbreviates the case $w=\mathsf{Some}(v)$.
Both denote full ownership; the variable rule also holds for a fraction,
$x\mathrel{\src{\mapsto_{q}}}v$, which the figure omits.
All ownership arrows in the figure refer to source resources.
The rules are Iris entailments. A rule with premises $P_1,\dots,P_k$ and
conclusion $Q$ stands for $P_1*\dots*P_k\vdash Q$: its premises are
resources supplied together, not separate proof obligations. A wand in a
premise describes how to use resources returned by a subproof.

\begin{figure}[p]
\centering
\footnotesize
\textbf{Arithmetic and Boolean expressions.}
\begin{mathpar}
 \inferrule*[right=Const]
   {\Phi(n)}
   {W_a(n,\Phi)}
 \and
 \inferrule*[right=Var]
   {x\mathrel{\src{\mapsto}}v \\ x\mathrel{\src{\mapsto}}v\wand\Phi(v)}
   {W_a(x,\Phi)}
 \and
 \inferrule*[right=Arith]
   {W_a(a_1,\lambda v_1.\,W_a(a_2,\lambda v_2.\,\Phi(v_1\circ v_2)))}
   {W_a(a_1\circ a_2,\Phi)}
 \and
 \inferrule*[right=Bool]
   {\Phi(\beta)}
   {W_b(\beta,\Phi)}
 \and
 \inferrule*[right=Compare]
   {W_a(a_1,\lambda v_1.\,W_a(a_2,\lambda v_2.\,\Phi(v_1\bowtie v_2)))}
   {W_b(a_1\bowtie a_2,\Phi)}
 \and
 \inferrule*[right=Not]
   {W_b(b,\lambda\beta.\,\Phi(\neg\beta))}
   {W_b(\neg b,\Phi)}
 \and
 \inferrule*[right=And]
   {W_b(b_1,\lambda\beta_1.\,W_b(b_2,\lambda\beta_2.\,
       \Phi(\beta_1\land\beta_2)))}
   {W_b(b_1\land b_2,\Phi)}
\end{mathpar}
Here $\circ\in\{+,\times\}$, $\bowtie\in\{=,\leq\}$ is Boolean-valued,
and $\beta\in\{\mathsf{true},\mathsf{false}\}$.

\medskip
\textbf{Commands.}
\begin{mathpar}
 \inferrule*[right=Skip]
   {P}
   {W(\mathsf{skip},\lambda\_.\,P)}
 \and
 \inferrule*[right=Seq]
   {W(c_1,\lambda\_.\,W(c_2,\Phi))}
   {W(c_1;c_2,\Phi)}
 \and
 \inferrule*[right=Assign]
   {x\mathrel{\src{\mapsto^{?}}}w \\
    x\mathrel{\src{\mapsto^{?}}}w\wand
       W_a(a,\lambda v.\,\Theta(v)*x\mathrel{\src{\mapsto^{?}}}w) \\
    \forall v.\;\Theta(v)*x\mathrel{\src{\mapsto}}v\wand Q}
   {W(x:=a,\lambda\_.\,Q)}
 \and
 \inferrule*[right=If]
   {W_b(b,\lambda\beta.\,\mathsf{if}\ \beta\ \mathsf{then}\ W(c_1,\Phi)
                    \ \mathsf{else}\ W(c_2,\Phi))}
   {W(\mathsf{if}\ b\ \mathsf{then}\ c_1\ \mathsf{else}\ c_2,\Phi)}
 \and
 \inferrule*[right=While]
   {W_b\bigl(b,\lambda\beta.\,\triangleright_m
       (\mathsf{if}\ \beta\ \mathsf{then}\
        W(c,\lambda\_.\,W(\mathsf{while}\ b\ \mathsf{do}\ c,\Phi))
        \ \mathsf{else}\ \Phi(\mathsf{tt}))\bigr)}
   {W(\mathsf{while}\ b\ \mathsf{do}\ c,\Phi)}
\end{mathpar}

\medskip
\textbf{Structural and modality rules.}
\begin{mathpar}
 \inferrule*[right=Consequence]
   {\forall v.\;\Phi(v)\wand\Psi(v) \\ W_\star(e,\Phi)}
   {W_\star(e,\Psi)}
 \and
 \inferrule*[right=Frame]
   {P \\ W(c,\Phi)}
   {W(c,\lambda u.\,P*\Phi(u))}
 \and
 \inferrule*[right=Mask]
   {\mathsf{Upd}_{M_1,M_2}\bigl(
       W^{m,M_2}(c,\lambda u.\,\mathsf{Upd}_{M_2,M_1}(\Phi(u)))\bigr)}
   {W^{m,M_1}(c,\Phi)}
 \and
 \inferrule*[right=Weaken]
   {W^{\mathsf{Identity},M}(c,\Phi)}
   {W^{\mathsf{Later},M}(c,\Phi)}
\end{mathpar}
Here $W_\star$ ranges over $W_a,W_b,W$ with the appropriate expression
or command type. Superscripts make the normally implicit modality and mask
explicit; $\mathsf{Upd}_{M_1,M_2}$ is the Iris fancy update between those masks.
\caption{Weakest precondition calculus for IMP.}
\label{fig:imp-logic-rules}
\end{figure}

\paragraph{Reading the rules.}
Expression evaluation proceeds from left to right. In particular,
\textsc{And} evaluates both operands. The variable rule preserves ownership
of the variable being read. The assignment rule lets the right-hand side
use the assigned variable's ownership, provided that ownership is returned
with its old value before the write. The auxiliary assertion $\Theta(v)$
carries the other resources produced by expression evaluation into the
postcondition proof. This supports assignments such as $x:=x+1$.
The sequencing rule passes the first command's resources directly to the
second command.
The while rule exposes one guard evaluation and retains the step
modality around the chosen continuation. With $m=\mathsf{Later}$,
this permits guarded recursive reasoning.

\paragraph{Ghost state of stores, registers and memory.}
The store, the register file and the memory are finite maps from keys $K$
(variables, registers, addresses) to optional values, accessed one key at a
time. They share one ghost state, defined once for a key type $K$ and a
value type $V$ and built only from Iris library components: a resource
$\gamma$ in the camera
$\mathsf{gmap}(K,\mathsf{auth}(\mathsf{option}(\mathsf{dfrac\_agree}(\mathsf{option}(V)))))$,
with one authority per key and fractions on the fragments, and a ghost map
$\gamma_{\mathsf{dom}}$ of the keys present:
\[
\begin{aligned}
 k\mathrel{\mapsto^{?}_{q}}w&:=
   \mathsf{own}\,\gamma\,\{k\mapsto\circ\,\mathsf{Some}(q,w)\}
   *k\in_{\gamma_{\mathsf{dom}}}\!\square,\\
 \mathsf{Auth}_k(w)&:=
   \mathsf{own}\,\gamma\,\{k\mapsto\bullet\,\mathsf{Some}(1,w)\}.
\end{aligned}
\]
Instantiated at variables, these are the source assertions
$x\mathrel{\src{\mapsto^{?}_{q}}}w$ and $\mathsf{Auth}_x(w)$; at registers
and addresses, the target assertions. We prove that
$\mathsf{Auth}_k(w)*k\mathrel{\mapsto^{?}_{q}}w'$ implies $w=w'$ for every
fraction $q$, that fragments split and agree as usual, and that
$\mathsf{Auth}_k(w)*k\mathrel{\mapsto^{?}}w$ can be updated to
$\mathsf{Auth}_k(w')*k\mathrel{\mapsto^{?}}w'$. Because the authority is kept
per key, the authorities of distinct keys compose; this is what lets
$\Corr_V$ of~\Cref{sec:imp-asm-simulation} own the authorities of the
variables in $V$ and nothing else, whereas a single authority, as in a
standard ghost map, would make $\Corr_V$ range over every variable.

The handlers, the program logics and the adequacy proofs see this ghost
state only through points-to assertions and the interpretation
\[
 S(\sigma):=\mathsf{DomAuth}(\mathrm{dom}\,\sigma)
   *\mathop{\scalebox{1.5}{$\ast$}}_{k\mapsto w\in\sigma}\mathsf{Auth}_k(w),
\]
where $\mathsf{DomAuth}$ is the authority of $\gamma_{\mathsf{dom}}$; the
simulation also uses the per-key authorities, with agreement and joint
update; the camera itself appears only at allocation, which yields
$S(\sigma)$ and a full points-to assertion for every key of $\sigma$. One
adequacy proof serves the three handlers: the per-key events are
implemented by events that read and write the whole map, and a WP over the
former yields a WP over the latter, with $S(\sigma)$ split in two halves
between the state interpretation and the proof. An absent key is a
parameter of the implementation: undefined behavior for the store, which
the domain witness of an owned variable rules out, and a read of
$\mathsf{None}$ and an insertion on write for registers and memory. The
target uses the per-key authorities too, although our simulation needs only
target points-to, so that a later pass with ASM as its source can relate
registers and memory cells as $\mathsf{Cell}$ relates variables.

\clearpage 
\section{Details of Separate WP-Preservation}\label{app:modules}

This appendix gives the technical details of~\Cref{sec:secomp}.

\subsection{Modules, Linking, and Compilation}\label{app:sep-modules}

\lstdefinelanguage{rocqmini}{
  morekeywords={Record, Variant, Type},
  sensitive=true,
  morecomment=[s]{(*}{*)}
}

The following definitions, from the Rocq development, give the full syntax
of modules and functions summarized in~\Cref{fig:modules-syntax}. A module
is parametric in the type \texttt{F} of its functions: \texttt{F} is
\texttt{function} for IMP and \texttt{asm\_function} for compiled ASM.

\begin{lstlisting}[language=rocqmini, basicstyle=\footnotesize\ttfamily, columns=fullflexible]
Variant var  := VLocal (x : string)
              | VGlobal (mn : module_name) (x : string).
Variant addr := Local (x : string)
              | Global (mn : module_name) (x : string).
Record qualname  := { qual_mod : ident; qual_fun : ident }.
Record signature := { sig_name : qualname; sig_arity : nat }.

Record function := {
  fun_params : list ident;  fun_locals : list ident;
  fun_body   : com;         fun_result : aexp }.
Record asm_function := {
  af_arity : nat;           af_locals : list string;
  af_regs  : gset reg;      af_body   : asm 1 1 }.

Record module (F : Type) := {
  mod_name      : ident;
  mod_globals   : list ident;          (* names only *)
  mod_functions : list (ident * F);
  mod_exports   : list signature;
  mod_imports   : list signature }.
\end{lstlisting}

Linking a list of modules qualifies each function name $f$ of a module $N$
as $N.f$ and takes the union of the resulting function tables.
Compiling a module maps the compiler over its functions: an IMP function
is compiled to the ASM function whose arity is the number of parameters,
whose locals are the parameters followed by the locals, whose registers are
those used by the compiled code, and whose body is the compiled body.

\paragraph{Resolvers.}
A table $T$ is resolved by a recursive interpretation of call events. One
call $\mathsf{FnCall}(f,\vec{v})$ performs a step and looks up $f$ in $T$. If
$f$ is missing, or its arity differs from the length of $\vec{v}$, the call
is undefined behavior in IMP and $\mathsf{Exit}$ in ASM. Otherwise the
resolver runs $\mathsf{run}_T(f,\vec{v})$, which
\begin{enumerate}
 \item saves the current values of the parameters and locals of $f$,
 \item resets them to unset values,
 \item passes the arguments, in IMP by assigning them to the parameters and
       in ASM by writing them to registers $r_0,\dots,r_{n-1}$,
 \item runs the body of $f$, which may perform further calls,
 \item computes the result, in IMP by evaluating the result expression and
       in ASM by the compiled body, which leaves it in $r_0$,
 \item restores the saved values, and returns the result, in ASM read
       from $r_0$.
\end{enumerate}
In ASM, the saved values are kept in the \itree, not in memory, so the
local cells of $f$ are free during the call. The source resolver performs
the same steps on the store, in the same order, so that the two resolvers
emit corresponding events.
The \emph{partial} resolver of a target table $\trgT$ handles calls to
functions of $\trgT$ as above, and re-emits every other call unchanged, as an
external event answered by the environment.

\paragraph{Compiled functions.}
The body of a compiled function is
\[
 \mathsf{store}\ a_{p_1},r_0;\ \dots;\ \mathsf{store}\ a_{p_n},r_{n-1};\
 \mathsf{compile}(c);\ \mathsf{compile}_{r_0}(a),
\]
for parameters $p_1,\dots,p_n$, body $c$ and result expression $a$, where
$\mathsf{compile}_{r_0}(a)$ evaluates $a$ into $r_0$. A call command
$x:=\mathsf{call}\ f(a_1,\dots,a_n)$ is compiled to the evaluation of each
$a_i$ into $r_{i-1}$, followed by
$\mathsf{acall}\ r_0,f,r_0,\dots,r_{n-1};\ \mathsf{store}\ a_x,r_0$.

\begin{lemma}[Linking commutes with compilation]\label{lem:link-compile}
For every list of IMP modules, linking their compilations yields the table
obtained by compiling each function of the linked source table.
\end{lemma}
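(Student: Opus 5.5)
The lemma is a syntactic equality of finite maps, so I would prove it by unfolding the definitions of $\mathsf{link}$ and of module compilation and arguing by induction on the list of modules $\vec{M}$. Write $\mathsf{compile}_{\mathsf{mod}}$ for the map of the compiler over the functions of a module, and $\mathsf{compile}_{\mathsf{tab}}$ for its pointwise application to a table. The statement to prove is then
\[
 \mathsf{link}(\mathsf{map}\ \mathsf{compile}_{\mathsf{mod}}\ \vec{M})
 = \mathsf{compile}_{\mathsf{tab}}(\mathsf{link}(\vec{M})).
\]
Here $\mathsf{compile}_{\mathsf{tab}}(T)$ maps each $N.f\mapsto F$ of $T$ to $N.f\mapsto\mathsf{compile}(F)$, and the function compiler is the one described above: arity, parameters followed by locals, registers used, and compiled body.

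\textbf{Base case.} With no modules, both sides are the empty table.

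\textbf{Inductive step.} Linking $M::\vec{M}$ is the union of the qualified table of $M$ with $\mathsf{link}(\vec{M})$. Two facts make this step go through.
\begin{enumerate}
\item Compiling a module leaves its name unchanged and maps the compiler over the function list while keeping the names. So qualifying the names of the compiled module gives the pointwise compilation of the qualified table of $M$; this is a list-level fact about $\mathsf{map}$ on pairs.
\item Pointwise map commutes with finite-map union, $\mathsf{map}(T_1\cup T_2)=\mathsf{map}(T_1)\cup\mathsf{map}(T_2)$. This holds for left-biased union, and so also when qualified names collide, because the same bias is applied on both sides.
\end{enumerate}
Combining the two facts with the induction hypothesis closes the case.

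\textbf{Main obstacle.} The one point needing care is how the table is built from lists, for instance by $\mathsf{list\_to\_map}$ over concatenated qualified function lists. If it is, I would instead prove $\mathsf{list\_to\_map}(\mathsf{map}\ (\mathsf{id}\times\mathsf{compile})\ l)=\mathsf{map}\ \mathsf{compile}\ (\mathsf{list\_to\_map}\ l)$ by induction on $l$. The key step there is that $\mathsf{insert}$ commutes with pointwise map. I would then rewrite using $\mathsf{map}$ over $\mathsf{concat}$.

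\textbf{Remaining fields.} Exports, imports and globals are not part of the table, so they play no role. The fact that compilation of a function is independent of the other modules (per-function compilation, with no global renaming) is exactly what the definition provides.
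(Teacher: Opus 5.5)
Your proposal is correct and takes essentially the paper's approach. The paper gives no separate proof beyond noting that a module is compiled function by function, keeping the module and function names and compiling each function independently, so that qualifying names and taking the union of tables commutes with mapping the function compiler; your induction on the module list, using map/union commutation (or insert/\texttt{list\_to\_map} commutation), spells out exactly this, and your treatment of name collisions via the same left-biased union on both sides is sound, though the paper's well-formedness conditions require distinct names anyway.
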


\subsection{Rules for Function Calls}\label{app:sep-call-rules}
A contract $\gamma=\langle\mathsf{pre},\mathsf{post}\rangle$
consists of a precondition $\mathsf{pre}(a,\vec{v})$ over the argument
values and a postcondition $\mathsf{post}(a,r)$ over the result, both
indexed by a witness $a$ of a type that depends on $\gamma$. Contracts that
need no witness use the unit type. Calls are answered by a contract context $\Gamma$, a
partial map from function names to contracts, through the call handler
$\handler{call}(\Gamma)$, which is part of the logical handler of both IMP
and ASM. We write $W_{\vec{a}}(\vec{a},\Phi)$ for the WP of the left-to-right
evaluation of the arguments $\vec{a}$, and reuse the notation
of~\Cref{app:imp-logic} otherwise.
\begin{mathpar}
 \inferrule*[right=Call-Event]
   {\Gamma(f)=\gamma \\ \mathsf{pre}(a,\vec{v}) \\
    \forall r.\;\mathsf{post}(a,r)\wand\Psi(r)}
   {\wpi{H}{\mathsf{FnCall}(f,\vec{v})}{\Psi}}
 \and
 \inferrule*[right=Call]
   {\Gamma(f)=\gamma \\ x\mathrel{\src{\mapsto^{?}}}w \\
    x\mathrel{\src{\mapsto^{?}}}w\wand
      W_{\vec{a}}\bigl(\vec{a},\lambda\vec{v}.\,\mathsf{pre}(a,\vec{v})
        *x\mathrel{\src{\mapsto^{?}}}w\bigr) \\
    \forall r.\;\mathsf{post}(a,r)\wand x\mathrel{\src{\mapsto}}r\wand\Phi(\mathsf{tt})}
   {W(x:=\mathsf{call}\ f(\vec{a}),\Phi)}
\end{mathpar}
\textsc{Call-Event} holds for every handler $H$ that contains
$\handler{call}(\Gamma)$, in particular for the IMP and ASM handlers: a call
event gives away the callee's precondition and receives its postcondition
for the result. \textsc{Call} lifts it to the IMP call command: it evaluates
the arguments, establishes the precondition, and assigns the result to $x$,
which must be owned. In Rocq the postcondition is weakened through an
intermediate predicate; the rule above is the instance where that predicate
is $\mathsf{post}$.

For internal calls between linked modules, the source logic uses
\emph{scoped} contracts (\Cref{sec:secomp-contracts,app:sep-contracts}): the scoped version of
$\gamma$ also lends the local pool at values known to the caller,
$\mathsf{LocalsAt}(W)$ with $\mathrm{dom}\,W=X$, i.e.\ ownership of every
local name in $X$ with its entry in $W$, and returns it unchanged. Calls to
scoped contracts use the following rule, for a local result variable
$x\in X$:
\begin{mathpar}
 \inferrule*[right=Call-Scoped]
   {\Gamma(f)=\mathsf{scoped}_X(\gamma) \\ \mathrm{dom}\,W=X \\ x\in X \\
    \mathsf{LocalsAt}(W) \\
    \mathsf{LocalsAt}(W)\wand
      W_{\vec{a}}\bigl(\vec{a},\lambda\vec{v}.\,\mathsf{pre}(a,\vec{v})
        *\mathsf{LocalsAt}(W)\bigr) \\
    \forall r.\;\mathsf{post}(a,r)\wand
      \mathsf{LocalsAt}(W[x:=\mathsf{Some}(r)])\wand\Phi(\mathsf{tt})}
   {W(x:=\mathsf{call}\ f(\vec{a}),\Phi)}
\end{mathpar}
After the call, the caller gets its locals back with the values they had
before the call, except for $x$, which holds the result. ASM has no rule for the call
instruction beyond \textsc{Call-Event}: target WPs are obtained by
WP-preservation, not proved directly.

\subsection{Contracts and Call Contexts}\label{app:sep-contracts}
A call to $f$ is internal if $f$ belongs to the target table $\trgT$ and
external otherwise. For contexts $\srcGint$ and
$\Gamma_{\mathit{ext}}$, the contexts used by the source and the target
logic are
\[
 \srcGamma(f):=\begin{cases}
   \mathsf{scoped}_X(\srcGint(f)) & f\in \trgT\\
   \Gamma_{\mathit{ext}}(f) & \text{otherwise}
 \end{cases}
 \qquad
 \trgGamma(f):=\begin{cases}
   \compProp{\srcGint(f)} & f\in \trgT\\
   \Gamma_{\mathit{ext}}(f) & \text{otherwise}
 \end{cases}
\]
where, for $\gamma=\langle\mathsf{pre},\mathsf{post}\rangle$,
\[
\begin{aligned}
 \mathsf{LocalsAt}(W)&:=\textstyle\mathop{\scalebox{1.5}{$\ast$}}_{x\mapsto w\in W}
     x\mathrel{\src{\mapsto^{?}}}w,\\
 \mathsf{LocalPool}(X)&:=\exists W.\;\ulcorner\mathrm{dom}\,W=X\urcorner*\mathsf{LocalsAt}(W),\\
 \mathsf{scoped}_X(\gamma)&:=\bigl\langle
   \lambda(W,a)\,\vec{v}.\,\ulcorner\mathrm{dom}\,W=X\urcorner*\mathsf{LocalsAt}(W)*\mathsf{pre}(a,\vec{v}),\\
 &\phantom{{}:=\bigl\langle}
   \lambda(W,a)\,r.\,\mathsf{post}(a,r)*\mathsf{LocalsAt}(W)\bigr\rangle,\\
 \compProp{\gamma}&:=\bigl\langle
   \lambda w\,\vec{v}.\,\Corr*\mathsf{pre}_X(w,\vec{v}),\;
   \lambda w\,r.\,\mathsf{post}_X(w,r)*\Corr\bigr\rangle,
\end{aligned}
\]
with $\langle\mathsf{pre}_X,\mathsf{post}_X\rangle:=\mathsf{scoped}_X(\gamma)$,
and $\Corr=\Corr_{\mathit{Pool},V}$ is defined in~\Cref{app:sep-resources}.
The witness of $\mathsf{scoped}_X(\gamma)$ and of $\compProp{\gamma}$ is the
pair $(W,a)$ of the caller's locals and the witness of $\gamma$.
The pool $X$ must contain the parameters and locals of every linked
function.

Internal calls lend the whole pool because the resolver writes the locals
of the callee when it saves, resets and restores them, and because locals
are shared cells: a callee may overwrite any local of its caller. The pool
comes back with the same values because the resolver restores the
callee's locals to their saved values and the rest of the pool is returned
unchanged by the callee's body, by induction over nested calls. They carry $\Corr$ in the target because the compiled callee needs
the correspondence to replay the simulation for its body. External calls
use $\Gamma_{\mathit{ext}}$ on both sides, and the caller keeps $\Corr$
across them.

\subsection{Resources and State Correspondence}\label{app:sep-resources}
Variables are mapped to addresses by $a_x$. Let $\mathit{Pool}$ be the
union of the registers used by the linked functions, and $V$ a list of
\emph{visible} variables. With $\mathsf{Cell}(x)$ and $\Corr_V$
of~\Cref{fig:imp-asm-simulation},
\[
 \Corr_{\mathit{Pool},V}:=\mathsf{RegOwn}(\mathit{Pool})*\Corr_V.
\]
Compared with $\Corr_c$ of~\Cref{sec:imp-asm-wp-pres},
$\Corr_{\mathit{Pool},V}$ owns the link-wide register pool instead of the
registers of one command, and the cells of the visible variables instead of
those of one command. Like $\Corr_c$, it owns no resource over the whole
store or memory.

\begin{lemma}[Allocation]\label{lem:sep-alloc}
Let $\rho$ be a register file whose domain includes $\mathit{Pool}$,
$\sigma$ a store whose domain includes $V$, and $\mu$ a memory with
$\mu(a_x)=\sigma(x)$ for every $x\in V$. Fresh ghost state yields the
state interpretations of $\rho$ and $\mu$, the correspondence
$\Corr_{\mathit{Pool},V}$, the ownership of the registers of $\rho$
outside $\mathit{Pool}$, a source points-to assertion
$x\mathrel{\src{\mapsto^{?}}}\sigma(x)$ for every variable of $\sigma$, and
target points-to assertions for every cell of $\mu$ that is not the
address of a visible variable.
\end{lemma}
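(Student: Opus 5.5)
The plan is to allocate the ghost state in three independent pieces, one per map, and then redistribute the per-key resources between the state interpretations, $\Corr_{\mathit{Pool},V}$, and the client. I will use the allocation principle for the shared ghost-state construction of~\Cref{app:imp-logic}. For a finite map $m:K\rightharpoonup\mathsf{option}(V)$, fresh names $\gamma,\gamma_{\mathsf{dom}}$ can be allocated with
\[
 S(m)*\mathop{\scalebox{1.5}{$\ast$}}_{k\mapsto w\in m}k\mathrel{\mapsto^{?}}w .
\]
Here $S(m)=\mathsf{DomAuth}(\mathrm{dom}\,m)*\mathop{\scalebox{1.5}{$\ast$}}_{k\mapsto w\in m}\mathsf{Auth}_k(w)$. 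This is obtained by allocating the camera element that pairs $\bullet\,\mathsf{Some}(1,w)$ with $\circ\,\mathsf{Some}(1,w)$ at every key, together with a ghost map of $\mathrm{dom}\,m$ whose persistent membership witnesses are duplicated into each fragment. I apply this to $\sigma$, $\rho$ and $\mu$, each with its own ghost names.

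The state interpretations are not simply $S(\cdot)$ but carry, as described in~\Cref{app:imp-logic}, one half of $S(m)$; the other half goes to the proof. So I first split $S(m)$ fractionally, or equivalently allocate the authority at two halves, and keep half in the interpretation. For $\rho$ and $\mu$, this gives the required target state interpretations. For registers, I split the fragment family along $\mathrm{dom}\,\rho=\mathit{Pool}\uplus(\mathrm{dom}\,\rho\setminus\mathit{Pool})$. Each $r\in\mathit{Pool}$ contributes $\exists w.\,r\mathrel{\trg{\mapsto^{?}}}w$ to $\mathsf{RegOwn}(\mathit{Pool})$, and the remaining fragments are returned to the client. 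For memory, I split $\mathrm{dom}\,\mu$ into $\{a_x\mid x\in V\}$ and the rest. The injectivity of $x\mapsto a_x$ (addresses have the same kind and name as variables) makes this a disjoint partition, and the fragments outside it are the client's target points-to assertions.

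For the store, every fragment $x\mathrel{\src{\mapsto^{?}}}\sigma(x)$ goes to the client. The source authorities, however, must feed $\Corr_V$. This is the main obstacle: the source state interpretation and $\mathsf{Cell}(x)$ both want $\mathsf{Auth}_x$. My first attempt is to use the half-split of the per-key authority described above. The interpretation $S(\sigma)$ keeps $\mathsf{DomAuth}$ together with half of each $\mathsf{Auth}_x$, and for $x\in V$ the other half goes into $\mathsf{Cell}(x)$. Agreement between the two halves then keeps the handler adequacy and the simulation consistent. If the development instead gives $\mathsf{Auth}_x$ in full to the simulation for visible variables, I would allocate the source interpretation over $\sigma$ restricted to $\mathrm{dom}\,\sigma\setminus V$, plus the domain authority only, relying on the remark that the simulation, not the interpretation, tracks visible variables. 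I will check which form the adequacy corollary (\Cref{cor:sep-adequacy}) expects and match it.

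Finally, I assemble $\mathsf{Cell}(x)=\exists w.\,\mathsf{Auth}_x(w)*a_x\mathrel{\trg{\mapsto^{?}}}w$ for each $x\in V$ by choosing $w:=\sigma(x)$. The source authority carries $\sigma(x)$, and the target fragment at $a_x$ carries $\mu(a_x)$, which equals $\sigma(x)$ by hypothesis. Taking the big separating conjunction over $V$ gives $\Corr_V$, and hence $\Corr_{\mathit{Pool},V}$. Each step is a ghost allocation or a rearrangement of big separating conjunctions over disjoint finite domains, so the whole argument is a single basic update.
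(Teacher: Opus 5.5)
\textbf{Gap in the source authorities.} Your overall plan is right. You allocate the ghost state of $\sigma$, $\rho$ and $\mu$ separately. You give the register fragments to $\mathsf{RegOwn}(\mathit{Pool})$ or to the client, and the memory fragments to the cells of $V$ or to the client, using injectivity of $x\mapsto a_x$. You then assemble $\mathsf{Cell}(x)$ from $\mathsf{Auth}_x(\sigma(x))$ and the fragment at $a_x$ via $\mu(a_x)=\sigma(x)$.

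The gap is in the step you call the main obstacle. It comes from misreading the statement, and your primary resolution of it fails. The lemma yields the state interpretations of $\rho$ and $\mu$ only; it produces no source state interpretation. Adequacy (\Cref{cor:sep-adequacy}) runs only the compiled program from $\rho$ and $\mu$, so the source store $\sigma$ exists purely as ghost state. Nothing competes with $\mathsf{Cell}(x)$ for $\mathsf{Auth}_x$.

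Your half-split would actually break the construction, and the paper's definitions already settle the choice you leave open. $\mathsf{Auth}_x(w)$ is the full authority $\mathsf{own}\,\gamma\,\{x\mapsto\bullet\,\mathsf{Some}(1,w)\}$, so half of it does not give $\mathsf{Cell}(x)$ as defined. Moreover, the matched write of \Cref{fig:imp-asm-simulation} must update the authority together with the full source points-to assertion. That update is impossible while the other half sits in an interpretation that the target run never opens.

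\textbf{The fix.} The correct treatment is essentially your fallback, minus the ``source interpretation''. Allocate for $\sigma$ to obtain $S(\sigma)$ and every fragment $x\mathrel{\src{\mapsto^{?}}}\sigma(x)$, and hand all fragments to the client. Put the full $\mathsf{Auth}_x(\sigma(x))$ of each $x\in V$ into $\mathsf{Cell}(x)$. Discard $\mathsf{DomAuth}(\mathrm{dom}\,\sigma)$ and the authorities of variables outside $V$, which is sound because Iris is affine.

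Likewise, you need no fractional split on the target side. Allocation directly yields $S(\rho)$ and $S(\mu)$; the halving mentioned in \Cref{app:imp-logic} takes place inside the adequacy proof, not at allocation. With these two corrections, the rest of your argument goes through: disjoint splits of big separating conjunctions followed by a single basic update.
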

Hence the premises of the theorems below can always be established.

\subsection{The Simulation with Calls}\label{app:sep-simulation}
The event relations of~\Cref{app:imp-asm-simulation} gain a row for calls.
Two call events $\mathsf{FnCall}(f,\vec{v})$ and $\mathsf{FnCall}(f',\vec{v}')$
are related when $f=f'$ and $\vec{v}=\vec{v}'$; at the call, the simulation
hands $\Corr_{\mathit{Pool},V}$ to the callee, and the answers are related
when they are equal and $\Corr_{\mathit{Pool},V}$ is returned. The rows for
store and memory events are those of~\Cref{fig:imp-asm-simulation}, for
the visible variables, with the register pool in place of the registers of
a single command. With these relations, the
three conditions of~\Cref{lem:abstract-wp-transfer} hold as
in~\Cref{sec:wp-pres}, provided that the source and target call handlers
answer related calls compatibly: whenever the target context answers a
call with $\Corr_{\mathit{Pool},V}$ handed over and back, the source
context answers it with the same contract without $\Corr_{\mathit{Pool},V}$.
The contexts $\srcGamma$ and $\trgGamma$ of~\Cref{app:sep-contracts} satisfy
this condition.

\begin{lemma}[Commands with calls]\label{lem:sep-command}
Let $c$ be a command whose registers are in $\mathit{Pool}$ and whose
variables are in $V$, and let the source and target call contexts $\Gamma$
and $\Gamma'$ answer calls compatibly. For every precondition $P$ and
source postcondition ${\src\Phi}$, if
\[
 P\;\vdash\;\wpi{\handler{call}(\Gamma)\oplus\srcH}{\denoteIMP{c}}{{\src\Phi}},
\]
then
\[
 \Corr_{\mathit{Pool},V}*P\;\vdash\;
 \wpi{\trgH_{\Gamma'}}{\denoteASM{\mathsf{compile}(c)}(0)}{\lambda\_.\,\Corr_{\mathit{Pool},V}*{\src\Phi}(\mathsf{tt})},
\]
for every step modality and mask.
\end{lemma}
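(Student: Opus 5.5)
The argument has two stages, following the closed-program proof of \Cref{thm:compiler-wp-pres}. First I establish a simulation. Then I apply \Cref{lem:abstract-wp-transfer}, unchanged, to the source WP.

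\textbf{Simulation.} The goal is
\[
 \Corr_{\mathit{Pool},V}\;\vdash\;\simIMPASM_{R}\bigl(\denoteIMP{c},\denoteASM{\mathsf{compile}(c)}(0)\bigr),
 \qquad R(u,v):=\Corr_{\mathit{Pool},V}.
\]
The simulation here is the one of \Cref{app:sep-simulation}, whose event table has the extra call row. In that row, $\mathsf{FnCall}(f,\vec v)$ matches $\mathsf{FnCall}(f,\vec v)$, $\mathcal{T}$ lends $\Corr_{\mathit{Pool},V}$, and $\mathcal{A}$ returns equal answers together with $\Corr_{\mathit{Pool},V}$.

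I proceed by induction on $c$, with $\Corr_{\mathit{Pool},V}$ as the invariant. This works because the registers of every subcommand lie in $\mathit{Pool}$ and its variables in $V$. The cases skip, assignment, conditional, sequencing and while are those of the proof of \Cref{thm:compiler-wp-pres}, using bind and iteration closure (\Cref{lem:sim-algebra-full}); here the register set $X$ is $\mathit{Pool}$ rather than $X_c$.

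The new case is $x:=\mathsf{call}\ f(\vec a)$. Its target is the evaluation of each $a_i$ into $r_{i-1}$, then $\mathsf{acall}$, then $\mathsf{store}\ a_x,r_0$. I handle it in three steps.
\begin{enumerate}
\item Chain the expression simulations of the arguments with bind. The source values of the $a_i$ then equal the register contents, because the registers are owned in $\mathsf{RegOwn}(\mathit{Pool})$ and tracked by the intermediate relation. The scratch registers used by later arguments must not clobber $r_0,\dots,r_{i-1}$; this follows from the compiler's discipline of preserving lower registers.
\item Emit register reads with no source counterpart, then match the call events through the call row. Continuing with $G$ gives back $\Corr_{\mathit{Pool},V}$ and equal results $a=b$.
\item The target performs a $\mathsf{SetReg}(r_0,b)$ and a $\mathsf{GetReg}(r_0)$. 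Then the source $\mathsf{SetVar}(x,a)$ matches $\mathsf{Store}(a_x,b)$ through the $\mathsf{Cell}(x)$ row, using $x\in V$.
\end{enumerate}

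\textbf{Transfer conditions.} Next I check that this extended simulation still satisfies \textsc{SIMret}, \textsc{SIMtau} and \textsc{SIMvis}. The first two, and \textsc{SIMvis} for the old rows, are as in \Cref{lem:sim-algebra}. For the call row I must build the handler-transport wand required by \textsc{SIMvis}, from the source handler $\handler{call}(\Gamma)$ to the target handler $\handler{call}(\Gamma')$, given $\mathcal{T}=\Corr_{\mathit{Pool},V}$. Unfolding the handlers, the source side supplies $\mathsf{pre}(a,\vec v)$ for some witness together with the continuation premise. Compatibility of $\Gamma$ and $\Gamma'$ says the target precondition is exactly $\Corr_{\mathit{Pool},V}*\mathsf{pre}$, so I pay it with the lent $\Corr$. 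On return the target postcondition yields $\mathsf{post}*\Corr$. I give $\Corr$ to $\mathcal{A}$ and $\mathsf{post}$ to the source continuation assertion. The update modality $\mathsf{Upd}_\emptyset$ is introduced trivially.

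I expect this step to be the main obstacle, because compatibility must be stated precisely enough that the two handler assertions for the same $f$ unfold to literally related shapes. That includes functions absent from both contexts, whose handler is $\bot$, and external functions, where $\Gamma=\Gamma'$ and $\Corr$ is simply kept by the caller. For the external case, I would phrase compatibility as a per-$f$ wand and discharge the two cases separately.

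\textbf{Transfer.} Finally, I combine $\Corr_{\mathit{Pool},V}*P$ with the hypothesis $P\vdash\wpi{\handler{call}(\Gamma)\oplus\srcH}{\denoteIMP{c}}{{\src\Phi}}$, then apply \Cref{lem:abstract-wp-transfer-masks} with equal masks. I take the return wand $\forall u,v.\,\Corr*{\src\Phi}(u)\wand\Corr*{\src\Phi}(\mathsf{tt})$, which is trivial since $u=\mathsf{tt}$. The step modality is matched as before, since $\mathsf{Step}$ events are paired one-to-one.
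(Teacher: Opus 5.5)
Your proposal is correct and takes essentially the paper's route. You prove the compiler simulation from $\Corr_{\mathit{Pool},V}$ by induction on $c$, reusing the closed-program cases, and add one new case for $x:=\mathsf{call}\ f(\vec a)$: the arguments go through the expression simulations, and the two $\mathsf{FnCall}$ events are matched by the call row, which lends and returns $\Corr_{\mathit{Pool},V}$; \textsc{SIMvis} for that row follows from compatibility of the call contexts, and \Cref{lem:abstract-wp-transfer} is then applied unchanged. Your explicit treatment of external calls, where the caller keeps $\Corr_{\mathit{Pool},V}$ across the call, spells out what the paper leaves implicit in ``answer calls compatibly''.
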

The proof is that of~\Cref{thm:compiler-wp-pres-full}, by induction on $c$,
with one more case for the call command: its arguments are evaluated as
expressions, and the call events of source and target are related by the
call row.

\subsection{Well-Formedness of Modules}\label{app:sep-wf}
A source function $\mathit{fn}$ with parameters $\vec{p}$, locals
$\vec{l}$, body $c$ and result expression $a$ is \emph{proved against}
$\gamma=\langle\mathsf{pre},\mathsf{post}\rangle$ if, for all witnesses $b$
and argument values $\vec{v}$ of length $|\vec{p}|$, there are $P$ and $\Psi$
with $\mathsf{pre}(b,\vec{v})\vdash P$, $\Psi(r)\vdash\mathsf{post}(b,r)$ for
all $r$, and, for every $W_r$ whose domain is $X$ without $\vec{p}$ and
$\vec{l}$,
\[
\begin{aligned}
 &P*\vec{p}\mathrel{\src{\mapsto}}\vec{v}*\vec{l}\mathrel{\src{\mapsto^{?}}}\mathsf{None}
   *\mathsf{LocalsAt}(W_r)\\
 &\quad\vdash W\bigl(c,\lambda\_.\,
   W_a\bigl(a,\lambda r.\,\Psi(r)*\mathsf{LocalPool}(\vec{p}\vec{l})
     *\mathsf{LocalsAt}(W_r)\bigr)\bigr),
\end{aligned}
\]
where the WPs use the context $\srcGamma$, the step
modality $\mathsf{Later}$ and the empty mask. The body thus gets the rest
$W_r$ of the pool at known values and must return it unchanged, so it can
call functions with other local names. One obligation covers the body and
the result expression, which may read the callee's locals at the values
the body left. A module $M$ is
\emph{well formed} if every function $f$ of $M$ with a contract
$\srcGint(f)=\gamma$ satisfies:
\begin{enumerate}
 \item its parameters and locals are distinct and belong to $X$, and its
       variables belong to $V$;
 \item $\mathsf{pre}(\vec{v})$ implies that $\vec{v}$ has the arity of $f$;
 \item it is proved against $\gamma$.
\end{enumerate}
For linked modules we also require that the source and target tables
contain the functions of $M$ and $\mathsf{compile}(M)$, with distinct names.
The third condition is the only proof obligation; it mentions the other
modules only through their contracts.

\subsection{Proofs of the Main Results}\label{app:sep-proofs}
\begin{lemma}[Per-function open simulation]\label{lem:sep-fun-sim-formal}
Let $f$ have body $\mathit{fn}$ in the source table $\srcT$ and its compiled body
$\mathsf{compile}(\mathit{fn})$ in $\trgT$, with registers in $\mathit{Pool}$
and visible variables. For all argument values $\vec{v}$ of the right
length,
\[
 \Corr\;\vdash\;
 \simIMPASM_R\bigl(\mathsf{run}_{\srcT}(f,\vec{v}),\,\mathsf{run}_{\trgT}(f,\vec{v})\bigr),
 \qquad R(v_1,v_2):=\ulcorner v_1=v_2\urcorner*\Corr.
\]
\end{lemma}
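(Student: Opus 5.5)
We unfold $\mathsf{run}_{\srcT}(f,\vec{v})$ and $\mathsf{run}_{\trgT}(f,\vec{v})$ into the six phases of the resolver (save, reset, pass arguments, body, result, restore) and relate them phase by phase using closure under bind (\Cref{lem:sim-algebra-full}, item~2). Choosing intermediate return relations is the main design decision. After the save and reset phases both sides return the same list of saved values $S$, related by $\ulcorner S_1=S_2\urcorner*\Corr$. After argument passing the relation is $\Corr$ together with the target fact that $r_0,\dots,r_{n-1}$ hold $\vec{v}$. After the body the relation is again $\Corr$. After the result phase it is $\ulcorner v_1=v_2\urcorner*\Corr$ with $r_0$ holding the result. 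After the final restore it is exactly $R$.

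For the save, reset and restore phases, the source and target resolvers emit corresponding events on the locals of $f$, one per local, in the same order. These are $\mathsf{GetVar}(x)$ versus $\mathsf{Load}(a_x)$, and $\mathsf{SetVar}(x,w)$ versus $\mathsf{Store}(a_x,w)$. Each pair is discharged by the matched-event clause of the functor in \Cref{fig:imp-asm-simulation}, lending $\mathsf{Cell}(x)$ from $\Corr_V$; this is possible because the locals of $f$ are visible variables. The answer relation gives equal read values, so the saved lists coincide. The synchronized $\stepE$ events and the silent steps use the joint rows.

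Argument passing is asymmetric. The source assigns $\vec{v}$ directly to the parameters, while the target writes $\vec{v}$ into $r_0,\dots,r_{n-1}$ and the compiled prologue then stores those registers to $a_{p_i}$. The register writes are handled by the register-write clause under $L$, using $\mathsf{RegOwn}(\mathit{Pool})$; we assume the pool contains $r_0,\dots,r_{n-1}$, since the arity registers are among those used by compiled code. The prologue's register reads are handled by the register-read clause, after which each $\mathsf{Store}(a_{p_i},v_i)$ is matched against the corresponding source $\mathsf{SetVar}(p_i,v_i)$. If the source resolver instead uses an assignment with a different granularity, we first rewrite both sides up to eutt (\Cref{lem:sim-algebra-full}, item~1) to align them.

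For the body we invoke the compiler simulation \eqref{eq:compiler-simulation}, in its generalized form for $\Corr_{X,V}$ with $X=\mathit{Pool}\supseteq X_c$ and $V\supseteq V_c$. This is extended with the call case of \Cref{lem:sep-command}: calls on both sides are $\mathsf{FnCall}$ events with equal arguments, related by the call row, which hands $\Corr$ over and back. The result expression uses the expression simulation lemma underlying the assignment case. It relates $\mathsf{denote\_aexp}(a)$ to $\mathsf{compile}_{r_0}(a)$ with return relation ``equal value, and $r_0$ holds it, $*\,\Corr$''. The final target read of $r_0$ is then an unmatched register read.

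We expect the main obstacle to be bookkeeping rather than conceptual difficulty. First, the target reads the result from $r_0$ only after the restore, so we must carry the knowledge ``$r_0\mapsto$ result'' through the restore phase. Second, the $\mu L$ layer must not be broken when administrative register operations are followed by a matched event. We would address the first by extracting $r_0$ from $\mathsf{RegOwn}(\mathit{Pool})$ and threading it explicitly in the intermediate relation.
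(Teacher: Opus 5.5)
Your proposal is correct and follows essentially the same route as the paper. It composes the phases with the bind rule, treats saving, resetting and restoring as matched store/memory events on the visible cells lent from $\mathsf{Corr}_V$, and relates argument passing to the compiled prologue, with the resolver's register writes and the final read of $r_0$ as administrative steps. The body uses the command simulation with the call row, and the result expression is handled as in the assignment case. One point to keep consistent: the argument-passing phase must pair the source parameter assignments with the target prologue stores, not merely with the resolver's register writes. Only after those matched writes do the parameter cells in $\mathsf{Corr}$ agree again.
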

The simulation is open: calls made by $f$ remain events, related when they
have the same callee and arguments.
\Cref{lem:sep-fun-sim-formal} relates $\mathsf{run}_{\srcT}(f,\vec{v})$ and
$\mathsf{run}_{\trgT}(f,\vec{v})$ step by step. Saving, resetting and
restoring the locals are related store and memory events on the cells of
the locals of $f$, which are visible, so $\Corr_V$ owns them. Passing the
arguments is related to the prologue of the compiled function, which
stores the argument registers at the parameter addresses. The body is
related by~\Cref{lem:sep-command}, and the result expression as in the
assignment case of~\Cref{thm:compiler-wp-pres-full}. The pieces are composed
with the bind rule of~\Cref{app:sim-algebra}.

For~\Cref{thm:sep-fun-contract}, the contract proof of $f$ yields a source
WP of $\mathsf{run}_{\srcT}(f,\vec{v})$ under $\srcGamma$, from
$\ulcorner\mathrm{dom}\,W=X\urcorner*\mathsf{LocalsAt}(W)*\mathsf{pre}(b,\vec{v})$
to $\mathsf{post}(b,r)*\mathsf{LocalsAt}(W)$: the pool supplies the cells
that the resolver saves and resets, the rest of the pool is passed to the
body, and the restore returns the callee's locals with their saved
values. \Cref{lem:abstract-wp-transfer}, applied to this WP
and the simulation of~\Cref{lem:sep-fun-sim-formal} with the empty mask on both
sides, gives the target WP under $\trgGamma$. \Cref{cor:sep-module} applies
this to every function of $M$; for the partial resolver, the internal calls
of $\mathsf{run}_{\trgT}(f,\vec{v})$ are re-tagged as internal events, which
does not change their WP.

\subsection{Linked Programs and Adequacy}\label{app:sep-closed}
Item~1 of~\Cref{thm:sep-linking} assumes that $\trgT$ contains exactly the compilation
of each function of $\srcT$, that the variables of every function of $\srcT$
are visible, and that the registers of $c$ are in $\mathit{Pool}$. It holds
for every step modality, with the empty mask. Its proof relies on a linking
law for the simulation.

\begin{lemma}[Linking law]\label{lem:sep-linking-law}
If the source and target \itrees $t_1$ and $t_2$ are related by an open
simulation, in which calls remain events, and the source and target
resolvers are related pointwise, i.e.\ for every call they perform
related computations guarded by a step, then the \itrees obtained by
resolving the calls of $t_1$ and $t_2$ recursively are related by the
closed simulation.
\end{lemma}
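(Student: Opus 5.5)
The linking law should follow by coinduction on the outer greatest fixed point of $\simIMPASM$, with a nested induction on the inner least fixed point, once the recursive resolution is unfolded. Write $\mathsf{link}_{T}(t)$ as \texttt{interp\_mrec} of the resolver handler. Its unfolding equations, valid up to eutt, are: $\mathsf{Ret}$ maps to $\mathsf{Ret}$; $\mathsf{Tau}$ maps to $\mathsf{Tau}$; a non-call event is re-emitted with the continuation linked; and a call $\mathsf{Vis}(\mathsf{FnCall}(f,\vec v),k)$ becomes $\mathsf{Tau}(\mathsf{link}_T(\mathsf{bind}(\mathit{res}_T(f,\vec v),k)))$, where $\mathit{res}_T$ is the step-guarded resolver body. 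By part~1 of \Cref{lem:sim-algebra-full}, it suffices to prove the claim for these unfolded forms.

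\textbf{Coinduction hypothesis.} I will strengthen the statement to the following: for all $t_1,t_2$ with $\simIMPASM^{\mathrm{open}}_R(t_1,t_2)$ we have $\simIMPASM_R(\mathsf{link}_{\srcT}(t_1),\mathsf{link}_{\trgT}(t_2))$. This is proved by Iris-level coinduction on $G$, keeping the pointwise resolver hypothesis persistent ($\square$). Inside, I destruct the $\mu L$ unfolding of the open simulation and perform an inner induction.

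\textbf{Inner cases.} The return, source tau, target tau, joint tau, and register read/write clauses commute directly with $\mathsf{link}$, since linking preserves the event or tau. The inner induction hypothesis supplies the $L$ premise in each of these cases. A matched non-call event is re-emitted on both sides with the same $\mathcal{E},\mathcal{T},\mathcal{A}$. Its continuation obligation then becomes $G(\mathsf{link}(k\,a),\mathsf{link}(h\,b))$, which the coinduction hypothesis provides.

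\textbf{Call case (main obstacle).} Here the two trees are $\mathsf{Vis}(\mathsf{FnCall}(f,\vec v),k)$ and $\mathsf{Vis}(\mathsf{FnCall}(f,\vec v),h)$, carrying $\Corr$ handed to the callee. The answer relation $\ulcorner a=b\urcorner*\Corr\wand G(k\,a,h\,b)$ holds. The pointwise resolver hypothesis gives $\Corr\vdash\simIMPASM_{R'}(\mathit{res}_{\srcT}(f,\vec v),\mathit{res}_{\trgT}(f,\vec v))$ in the open simulation, with $R'(a,b)=\ulcorner a=b\urcorner*\Corr$.

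The difficulty is that the resolved callee is itself linked recursively, so I cannot apply the coinduction hypothesis to the bind result syntactically. There are two sub-steps:
\begin{enumerate}
\item Apply the bind rule (part~2 of \Cref{lem:sim-algebra-full}) at the open level to relate $\mathsf{bind}(\mathit{res},k)$ with $\mathsf{bind}(\mathit{res},h)$. The continuation premise requires $G$-guarded continuations to be turned into full simulations, which fails because $G$ is only the coinduction hypothesis.
\item So I will first generalize the coinductive statement to "open simulation up to bind with $G$-related continuations". Concretely, the hypothesis relates $\mathsf{link}(\mathsf{bind}(s,k))$ and $\mathsf{link}(\mathsf{bind}(t,h))$ whenever $s,t$ are open-related with some $Q$ and $Q\wand$ the continuations are $G$-related.
\end{enumerate}

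\textbf{Guardedness.} The step guard of the resolver is what makes this coinduction productive. The unfolding emits a joint $\mathsf{Tau}$, followed by a matched $\mathsf{Step}$ event, before entering the callee. Through the matched $\mathsf{Step}$ row this lets me pass to $G$ before invoking the generalized hypothesis, and that is the point where I expect most of the mechanization effort: proving an up-to-bind principle for the $\nu G.\mu L$ simulation compatible with the linking unfolding.
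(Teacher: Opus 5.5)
Your overall architecture is sound, and it matches the paper's one-line justification that ``the step performed by each resolver at each call provides the guard.'' That architecture has four parts: Iris coinduction on the outer $\nu G$ of the closed simulation; an inner induction on the $\mu L$ of the open hypothesis; the \texttt{interp\_mrec} unfolding; and, at a call, using the resolver's matched $\mathsf{Step}$ rather than the unfolding's $\mathsf{Tau}$ (which only lands in an $L$ position) to reach a $G$ position.

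The gap is in your call case, which rests on a misdiagnosis. The open simulation is a \emph{hypothesis}, so unfolding $\nu G.\,\mu L.\,\simFunctor(G,L)$ instantiates $G$ with the full open simulation. The coinduction hypothesis lives only on the closed, goal side. The call row therefore gives you $\forall a,b.\;\ulcorner a=b\urcorner*\Corr\wand\simIMPASM^{\mathrm{open}}_R(k\,a,h\,b)$. Both resolver trees start with $\mathsf{Step}$, so inverting their relation yields $\Corr\vdash\simIMPASM^{\mathrm{open}}_{R'}(\mathit{body}_S,\mathit{body}_T)$ for the computations after the step. Part~2 of \Cref{lem:sim-algebra-full}, applied at the open level, then gives $\simIMPASM^{\mathrm{open}}_R(\mathsf{bind}(\mathit{body}_S,k),\mathsf{bind}(\mathit{body}_T,h))$ directly. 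After the joint $\mathsf{Tau}$ and the matched $\mathsf{Step}$, your plain coinduction hypothesis (links of open-related trees) applies to this pair. So your sub-step~1 does not fail, and no up-to-bind principle is needed.

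The generalization you fall back on is, moreover, not provable in the reading your sub-step~1 suggests, where continuations are related only by the coinduction hypothesis. Consider the Return case of the inner induction, where the prefix returns, for instance $s=t=\mathsf{Ret}$ with a trivial $Q$. It lands at an $L$ position of the closed simulation and demands an unguarded simulation of $\mathsf{link}(k\,a)$ and $\mathsf{link}(h\,b)$. You hold only a $G$-related pair there, and accepting it would let you use the coinduction hypothesis without passing a guard. A bind closure is sound only when the continuations are related by the full relation, and in that form it collapses to the use of part~2 above. The step you defer as ``most of the mechanization effort'' is thus both unnecessary and, as formulated, a real gap.

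A smaller point: part~1 of \Cref{lem:sim-algebra-full} applies to $\simIMPASM$ itself, not to pairs in the coinduction hypothesis. State that hypothesis up to strong bisimilarity, which suffices because the \texttt{interp\_mrec} unfolding equations are strong bisimilarities. Then absorb the $\mathsf{Tau}$ that linking inserts before re-emitted continuations, e.g.\ by taking $\mathsf{Tau}(k\,a)$ and $\mathsf{Tau}(h\,b)$ as the open trees.
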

\Cref{lem:sep-fun-sim-formal} provides the pointwise relation of the resolvers, the
step performed by each resolver at each call provides the guard, and the
open simulation of the entry command is that of~\Cref{lem:sep-command}. The
linking law relates the two linked programs, and~\Cref{lem:abstract-wp-transfer},
applied once to the closed simulation, yields item~1 of~\Cref{thm:sep-linking}. As
in~\Cref{lem:pointsto-transfer}, points-to assertions of ${\src\Phi}$ on
visible variables can then be extracted to target points-to assertions.

\begin{corollary}[Adequacy]\label{cor:sep-adequacy}
Under the hypotheses of item~1 of~\Cref{thm:sep-linking}, let the linked source
program satisfy ${\src\Phi}$ from $P$, and let $\rho$ and $\mu$ be an
initial register file and memory such that $\Corr_{\mathit{Pool},V}*P$ can
be allocated for them (\Cref{lem:sep-alloc}). Then, for every fuel $n$,
running the compiled linked program for $n$ steps from $\rho$ and $\mu$
under the handler that answers no event satisfies, as an Iris weakest
precondition, the postcondition that holds when the fuel is exhausted,
when the program exits with $\Phi_{\mathit{exit}}$, or when it terminates
in a state in which $\compProp{{\src\Phi}}$ holds. The source and target
logical handlers are thus discharged against the computational
interpretation of their events; a pure statement about final states would
further require the adequacy theorem of Iris, which we have not
mechanized. Since every target exit corresponds to a source undefined
behavior, which the source WP excludes, the corollary also holds with
$\Phi_{\mathit{exit}}:=\mathsf{False}$: the compiled linked program never
exits.
\end{corollary}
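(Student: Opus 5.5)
We reduce the statement to item~1 of \Cref{thm:sep-linking}, restricted to finite fuel, and combine it with the adequacy of the target logical handler. First, from the hypothesis that the linked source program satisfies ${\src\Phi}$ from $P$, we have $P\vdash\wpIMPx{\srcT}{c}{{\src\Phi}}$. Item~1 of \Cref{thm:sep-linking} then gives
\[
 \Corr_{\mathit{Pool},V}*P\;\vdash\;\wpi{\trgH}{\mathsf{link}_{\trgT}(\denoteASM{\mathsf{compile}(c)}(0))}{\compProp{{\src\Phi}}}.
\]
This holds for an arbitrary target exit postcondition $\Phi_{\mathit{exit}}$, since the simulation matches $\mathsf{Exit}$ only with source $\mathsf{UB}$. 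The source handler sends $\mathsf{UB}$ to $\bot$, so in the transfer of \textsc{SIMvis} the target exit obligation is discharged from $\bot$. In particular, the choice $\Phi_{\mathit{exit}}:=\mathsf{False}$ is admissible.

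Second, we allocate the ghost state using \Cref{lem:sep-alloc} for the given $\rho$ and $\mu$. This yields the state interpretations $S(\rho)$ and $S(\mu)$ together with $\Corr_{\mathit{Pool},V}*P$, by the assumption that $P$ can be allocated. We then interpret the per-location register and memory events into whole-map state events, as in the adequacy paragraph of \Cref{app:imp-logic}. The half-split $S(\sigma)$ argument there turns the WP over per-key events into a WP over events that read and write the whole register file and memory, with the state interpretation threaded through.

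Third, we prove a fuel-indexed lemma by induction on $n$. For every itree $t$ and every state $(\rho,\mu)$, the state interpretation together with the WP of $t$ entails the Iris WP, under the empty handler, of the fuel-bounded state-passing interpreter run on $t$ for $n$ steps. Its postcondition is the disjunction of three cases: fuel exhausted, $\Phi_{\mathit{exit}}$, or a terminal state satisfying $\compProp{{\src\Phi}}$. Each unfolding matches one itree constructor against one WP clause. $\mathsf{Ret}$ yields $\compProp{{\src\Phi}}$. $\mathsf{Tau}$ consumes fuel and recurses. $\mathsf{Step}$ strips the later modality when the fuel is decremented. Register and memory events use agreement of the authority $S$ with the points-to assertion to compute the answer, and use the joint update for writes. $\mathsf{Exit}$ yields $\Phi_{\mathit{exit}}$. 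Function calls do not occur, because $\mathsf{link}_{\trgT}$ has resolved them all, the program being closed.

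The main obstacle is this fuel induction. The step-indexed later modality from the $\mathsf{Step}$ events must be matched with fuel consumption, and the WP's update modalities must be commuted through the fixed handler interpretation. Doing this requires the general interpretation lemma of Program Logics \`a la Carte for the state handler, instantiated at the empty handler. Finally, we specialize the result to $\Phi_{\mathit{exit}}:=\mathsf{False}$, which shows that the compiled linked program never exits.
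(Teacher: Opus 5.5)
Your proposal is correct and takes essentially the paper's route. You use item~1 of \Cref{thm:sep-linking} with an arbitrary exit postcondition, which includes $\Phi_{\mathit{exit}}:=\mathsf{False}$ because target $\mathsf{Exit}$ is matched only with source $\mathsf{UB}$, whose handler is $\bot$; you then take the initial ghost state from \Cref{lem:sep-alloc} and apply the single handler-adequacy argument of \Cref{app:imp-logic} (per-key events implemented by whole-map events with $S(\sigma)$ split in halves) to a fuel-bounded run under the handler that answers no event.
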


\paragraph{Linking with contracts.}
Item~2 of~\Cref{thm:sep-linking} starts from a source proof of the entry
command $c$ in which calls are answered by $\srcGamma$, rather than from a
WP of the linked source program. Its proof applies~\Cref{cor:sep-module} to every module. The resulting
target contracts assume the contracts of internal calls; L\"ob induction
discharges this assumption for all linked functions at once, the step
performed by the resolver at each call supplying the later. This is why
item~2 fixes the step modality to $\mathsf{Later}$. \Cref{lem:sep-command}
gives the WP of the compiled entry command under $\trgGamma$, and resolving
its calls by $\trgT$ preserves it.

\subsection{The Shared Counter in Full}\label{app:sep-counter}
The decrement is
\[
 \mathsf{PRED}(x,k):=k:=0;\ \mathsf{while}\ k+2\le x\ \mathsf{do}\ k:=k+1;\ x:=k,
\]
which satisfies, for $n\neq 0$,
\[
 x\mathrel{\src{\mapsto}}n*k\mathrel{\src{\mapsto^{?}}}w
 \vdash W\bigl(\mathsf{PRED}(x,k),\lambda\_.\,
   x\mathrel{\src{\mapsto}}n{-}1*k\mathrel{\src{\mapsto}}n{-}1\bigr).
\]
With $X:=\{k,r\}$ and $V:=[k;r;G.c]$, the body $c_f$ of $f$ is proved
against its contract as follows: if $G.c$ holds $0$, the postcondition holds
immediately; otherwise $\mathsf{PRED}$ decrements the counter, and
\textsc{Call-Scoped} hands $G.c\mathrel{\src{\mapsto}}n{-}1$ to $g$ and
receives $G.c\mathrel{\src{\mapsto}}0$. The proof for $g$ is symmetric, and
the entry command $\mathit{main}$ is proved in the same way, with a frame
$R$. Linking with contracts then yields, for every $R$,
\[
\begin{aligned}
 &\Corr*\mathsf{LocalPool}(X)*
   (\exists n.\,G.c\mathrel{\src{\mapsto}}n)*R\\
 &\quad\vdash\wpi{\trgH_{\emptyset}}%
   {\mathsf{link}^{\partial}_{\trgT}(\denoteASM{\mathsf{compile}(\mathit{main})}(0))}%
   {{\trg\Psi}},\\
 &{\trg\Psi}:=a_{G.c}\mathrel{\trg{\mapsto}}0
   *\mathsf{LocalPool}(X)*R*\mathsf{RegOwn}(\mathit{Pool})
   *\Corr_{V\setminus\{G.c\}},
\end{aligned}
\]
where $\Corr_{V\setminus\{G.c\}}$ holds the correspondence cells of $k$
and $r$. The precondition owns no target cell other than those of $k$, $r$
and $G.c$.

\end{document}